\newcommand{\kket}[1]{|#1\rangle\!\rangle}
\newcommand{\bbrakket}[2]{\langle\!\langle#1 \rvert #2\rangle\!\rangle}
\let\newfloat\newfloat@ltx
\newsavebox{\@brx}
\newcommand{\llangle}[1][]{\savebox{\@brx}{\(\m@th{#1\langle}\)}%
  \mathopen{\copy\@brx\kern-0.5\wd\@brx\usebox{\@brx}}}
\newcommand{\rrangle}[1][]{\savebox{\@brx}{\(\m@th{#1\rangle}\)}%
  \mathclose{\copy\@brx\kern-0.5\wd\@brx\usebox{\@brx}}}
\newcommand{\fsnull}[1]{}
\newcommand{\old}[1]{}
\tikzset{every picture/.style=remember picture}
\providecommand{\calU}{\ensuremath{\mathcal{U}}}
\providecommand{\calV}{\ensuremath{\mathcal{V}}}
\providecommand{\calS}{\ensuremath{\mathcal{S}}}
\newcommand{\poly}{\operatorname{poly}}
\renewcommand{\geq}{\geqslant}
\renewcommand{\leq}{\leqslant}
\newcommand{\bs}{\textsf{BS}}
\def\calA{\mathcal{A}}
\def\calB{\mathcal{B}}
\def\calC{\mathcal{C}}
\def\calD{\mathcal{D}}
\def\calE{\mathcal{E}}
\def\calH{\mathcal{H}}
\def\calL{\mathcal{L}}
\def\calK{\mathcal{K}}
\def\calN{\mathcal{N}}
\def\calO{\mathcal{O}}
\def\calP{\mathcal{P}}
\def\be{\begin{equation}}
\def\ee{\end{equation}}
\def\bs{\begin{split}}
\def\e{\end{split}}
\def\ba{\begin{eqnarray}}
\def\bea{\begin{eqnarray}}
\def\tea{\end{eqnarray}}
\def\ea{\end{eqnarray}}
\def\eea{\end{eqnarray}}
\def\bbE{\mathbb{E}}
\newtheorem{theorem}{Theorem}
\newtheorem{lemma}{Lemma}
\newtheorem{corollary}{Corollary}
\newtheorem{observation}{Observation}
\newtheorem{claim}{Claim}
\newtheorem{example}{Example}
\newtheorem{definition}{Definition}
\def\be{\begin{equation}}
\def\te{\end{equation}}
\def\ee{\end{equation}}
\def\ba{\begin{eqnarray}}
\def\bea{\begin{eqnarray}}
\def\tea{\end{eqnarray}}
\def\ea{\end{eqnarray}}
\def\eea{\end{eqnarray}}
\renewcommand \partname{}
\begin{document}

\makeatletter
\makeatother
 
\doparttoc 
\faketableofcontents 

\title{Simulating quantum circuits with arbitrary local noise using Pauli Propagation}

\author{Armando Angrisani}
\email{armando.angrisani@epfl.ch}
\affiliation{Institute of Physics, Ecole Polytechnique Fédérale de Lausanne (EPFL),  Lausanne CH-1015, Switzerland}

\author{Antonio A. Mele}
\email{a.mele@fu-berlin.de}
\affiliation{Dahlem Center for Complex Quantum Systems, Freie Universit\"{a}t Berlin, 14195 Berlin, Germany}
\affiliation{Theoretical Division, Los Alamos National Laboratory, Los Alamos, NM 87545, USA}

\author{Manuel S. Rudolph}
\affiliation{Institute of Physics, Ecole Polytechnique Fédérale de Lausanne (EPFL),  Lausanne CH-1015, Switzerland}

\author{M. Cerezo}
\affiliation{Information Sciences, Los Alamos National Laboratory, Los Alamos, NM 87545, USA}

\author{Zo\"e Holmes}
\affiliation{Institute of Physics, Ecole Polytechnique Fédérale de Lausanne (EPFL),  Lausanne CH-1015, Switzerland}

\begin{abstract}
We present a polynomial-time classical algorithm for estimating expectation values of arbitrary observables on typical quantum circuits under any incoherent local noise, including non-unital or dephasing. Although previous research demonstrated that some carefully designed quantum circuits affected by non-unital noise cannot be efficiently simulated, we show that this does not apply to average-case circuits, as these can be efficiently simulated using Pauli-path methods. Specifically, we prove that, with high probability over the circuit gates choice, Pauli propagation  algorithms with tailored truncation strategies achieve an inversely polynomially small simulation error. This result holds for arbitrary circuit topologies and for any local noise, under the assumption that the distribution of each circuit layer is invariant under single-qubit random gates. Under the same minimal assumptions, we also prove that most noisy circuits can be truncated to an effective logarithmic depth for the task of {estimating} expectation values of observables, thus generalizing prior results to a significantly broader class of circuit ensembles. We further numerically validate our algorithm with simulations on a $6\times6$ lattice of qubits under the effects of amplitude damping and dephasing noise, as well as real-time dynamics on an $11\times11$ lattice of qubits affected by amplitude damping.
\end{abstract}

\maketitle

\section{Introduction}

Recent years have seen a healthy back-and-forth between experimental teams attempting to implement non-classically simulable quantum algorithms on quantum hardware and theorists attempting to demonstrate these experiments can be  efficiently classically simulated~\cite{arute2019quantum, gao2024limitations, madsen2022quantum, oh2023spoofing, kim2023evidence, rudolph2023classical, beguvsic2023fast, tindall2023efficient,kechedzhi2023effective,torre2023dissipative, liao2023simulation, beguvsic2023converged}. Central to these discussions is the role of hardware noise. In particular, while there exist quantum circuits that are widely believed to be computationally hard to simulate classically, these cannot be implemented exactly on currently available devices. Rather, in the current pre-fault-tolerant era, all circuits are invariably subject to hardware noise, and this issue forces researchers to make delicate compromises. On the one hand, circuits subject to noise are often simpler to classically simulate~\cite{ gao2024limitations, oh2023spoofing, aharonov2022polynomial, muller2016relative, fontana2023classical, schuster2024polynomial}. On the other hand, the search for circuits that are less affected by noise can push the user towards ones that are classically tractable~\cite{kim2023evidence, rudolph2023classical, beguvsic2023fast, tindall2023efficient,kechedzhi2023effective,torre2023dissipative, liao2023simulation, beguvsic2023converged}. Hence,  understanding the subtle connections between noise and efficient classical simulation is key to the search for quantum advantage.

Given that hardware errors stem from complex, undesired interactions with surrounding environments, a variety of mathematical models have been developed to investigate these effects~\cite{raginsky2003scaling, king2001minimal, king2002capacity, terhal2005fault, khatri2020information,garcia2023effects,fontana2022nontrivial}.
Nevertheless, the vast majority of works focused on noise models within the \emph{depolarizing} class.
This class includes noise channels that always increase the entropy of the system. Thus, if no measurements are performed and no fresh auxiliary qubits are supplied during the computation, the output of a quantum circuit composed of alternated layers of such noise and arbitrary unitary quantum gates is driven exponentially fast towards the maximally mixed state. This makes any non-trivial computation unfeasible beyond logarithmic depth, regardless of which gates are applied \cite{aharonov1996limitations,muller2016relative,hirche2020contraction, wang2020noise,wang2021can}.  

Another key feature of depolarizing noise models is that they have a strikingly simple representation in the basis of Pauli operators. Namely, under the action of \emph{local} depolarizing noise, the contributions of global Pauli operators are suppressed exponentially more than those of local Pauli operators.
This insight has been widely exploited in several recent works leveraging classical simulation techniques based on Pauli path summations, which are discrete Feynman path integrals written in the Pauli basis~\cite{rall2019simulation, aharonov2022polynomial, fontana2023classical, shao2023simulating, rudolph2023classical, schuster2024polynomial, gonzalez2024pauli, cirstoiu2024fourier}.
While its remarkable simplicity makes the depolarizing class an appealing model for analysis, it fails to account for many dissipation phenomena that routinely arise in quantum computation experiments~\cite{arute2019quantum, kandala2017hardware, chirolli2008decoherence, pino2021demonstration}.

Alongside the depolarizing class, realistic quantum noise can also be described by the \emph{dephasing} class and the \emph{non-unital} class, which both present qualitatively different features, as they do not necessarily increase the entropy of the quantum system.
Specifically, non-unital noise can decrease the entropy of the system, potentially driving
it towards a pure state. Moreover, as demonstrated in Ref.~\cite{ben2013quantum}, under some circumstances one can take advantage of non-unital noise by exploiting it as a form of quantum refrigerator, allowing for fault-tolerant quantum computation for exponential depth without fresh auxiliary qubits or mid-circuits measurements.
On the other hand, dephasing noise never decreases the entropy of the systems, but it leaves the entropy of computational basis states unchanged.
In both cases, investigating the properties of these noisy quantum circuits is arguably more challenging than analyzing those with depolarizing noise, as standard entropy accumulation tools~\cite{aharonov1996limitations,muller2016relative} cannot be used.

As the behavior of \emph{carefully designed} quantum circuits drastically differs under different sources of noise, it is natural to investigate the effect of arbitrary noise on \emph{typical} quantum circuits, i.e., circuits randomly sampled from suitable circuit distributions.
On one hand, random circuits affected by non-unital noise exhibit qualitatively different behavior from circuit with solely depolarizing noise, as shown in the context of random circuit sampling~\cite{fefferman2023effect} and variational quantum algorithms, where non-unital noise has been demonstrated to induce absence of barren plateaus~\cite{mele2024noise,singkanipa2024beyond}, in contrast to the depolarizing noise scenario~\cite{wang2020noise}. On the other hand, noisy random circuits exhibit an effective logarithmic depth~\cite{mele2024noise}, echoing the logarithmic depth barrier~\cite{aharonov1996limitations,muller2016relative} associated with depolarizing noise.

\begin{figure*}
    \centering
    \includegraphics[width=1\linewidth]{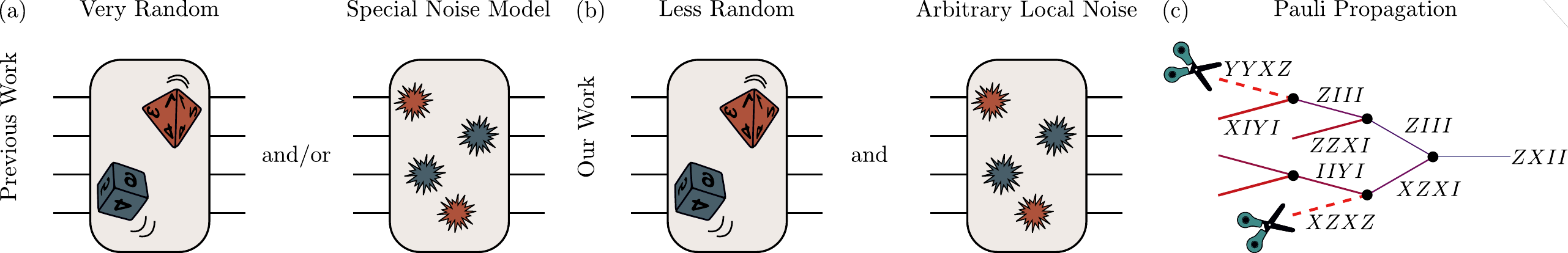}
    \caption{\textbf{Schematic summary of our main results.} a) Prior work has provided guarantees for simulating a quantum circuit either for highly random quantum circuits or assuming a special noise model. b) Here we show that the combination of a very general noise model (arbitrary local incoherent noise) with some randomness allows quantum circuits to be efficiently simulated. c) Our algorithm combines Pauli propagation with a truncation scheme based on the so-called \textit{path weight}, which is the \textit{total cumulative weight} of Paulis operators on a given branch~\cite{aharonov2022polynomial, schuster2024polynomial, gonzalez2024pauli} in contrast to \textit{current} Pauli weight~\cite{rudolph2023classical, schuster2024polynomial, angrisani2024classically}. }
    \label{fig:schematic} 
\end{figure*}

In this work we leverage the insight that circuit randomness and generic local noise behaves analogously to depolarizing noise to develop an efficient classical simulation algorithm for simulating typical quantum circuits in the presence of arbitrary local noise. Concretely, we show that expectation values of observables can be estimated with inversely polynomial precision in polynomial time on \emph{most} circuits and for \emph{any} incoherent local noise, possibly non-unital or dephasing, provided that each circuit layer is sampled from a distribution invariant under single-qubit random gates.
Our analysis combines the Pauli-path simulation algorithms also employed in Refs.\ \cite{aharonov2022polynomial, schuster2024polynomial, gonzalez2024pauli} with the normal form representation of local noise, previously employed in Refs.\ \cite{king2001minimal,ben2013quantum,mele2024noise}. 
Our results, sketched in Fig.~\ref{fig:schematic}, improve the state-of-the-art in the following ways:
\begin{itemize}
    \item \textbf{General noise model.} We make minimal assumptions about the noise model, considering circuit layers interspersed with single-qubit noise channels acting on all qubits. Rather than adopting a specific noise channel, as in Refs.\ \cite{aharonov2022polynomial, schuster2024polynomial}, we only require these single-qubit channels to be non-unitary with a constant noise rate.
    Our results also encompass dynamic non-unitary operations such as the ones considered in Ref.\ \cite{deshpande2024dynamic}, provided that they are applied with a constant rate.
    \item \textbf{Polynomial runtime for any geometry.} While Ref.~\cite{mele2024noise} previously analyzed random circuits with arbitrary local noise, it demonstrated efficient classical simulation for arbitrary circuit connectivity, but crucially only for constant precision. For inversely polynomial precision -- which, in many cases, corresponds to the desired accuracy (e.g., for estimating extensive quantities such as a molecule’s ground-state energy~\cite{peruzzo2014variational}) -- polynomial-time simulation was guaranteed only under the restriction of one-dimensional connectivity. In contrast, our results establish that any inversely polynomial error can be efficiently achieved for all circuit topologies, thereby matching the accuracy required in many physically-motivated scenarios.
    \item \textbf{Less random gate-sets.} Our assumptions about the circuit ensemble are also minimal, enabling efficient classical simulation for a broader class of circuits than those considered in previous works, such as Refs.\ \cite{aharonov2022polynomial, mele2024noise}. Specifically, we assume that each layer is independently sampled from a distribution invariant under different families of random single-qubit gates. In particular, for non-unital or depolarizing-like noise, we assume that the random single-qubit gates are sampled from unitary 1-designs. For dephasing noise, we assume the gates satisfy an approximate unitary 2-design property (e.g., they implement two random Pauli rotations occurring in orthogonal directions).
\end{itemize}

Under the same minimal assumptions on circuit structure and noise type, we demonstrate that for the task of estimating expectation values of observables, any incoherent noise reduces the circuit to an effective logarithmic depth. That is, any gates outside of this logarithmic depth window do not have a meaningful effect on expectation value estimates and thus can be ignored while classically simulating the circuit. This result significantly generalizes the main finding of Ref.\ \cite{mele2024noise}, which applies specifically to noisy circuits composed of two-qubit gates sampled from unitary 2-designs.

Finally, we complement our theoretical investigation with numerical simulations of a periodic $6 \times 6$ lattice evolving under a transverse-field Ising variational ansatz affected by amplitude-damping noise and dephasing noise. For our error estimates, we employ a Monte Carlo certification approach that can be used beyond exactly simulable regimes. The resulting analysis shows that Pauli propagation equipped with path-weight truncation substantially outperforms our analytic accuracy guarantees. Finally, we provide a large-scale example of a dynamical simulation of a noisy rotated transverse-field Ising model on an $11\times11$ square lattice with 121 qubits subject to amplitude damping noise. These structured circuits do not resemble typical random circuits and thus break the assumptions of our theoretical analysis. Despite this fact, we find that Pauli propagation remains reasonably efficient, thus showing its promise in regimes beyond our theoretical guarantees.

\section{Framework}

\noindent{\textbf{General local noise model.}} In general, any single-qubit channel $\calN$ can be written in the following \emph{normal form}\ \cite{king2001minimal,ben2013quantum, mele2024noise}
\begin{align}\label{eq:normalform}
    \calN(\cdot) = U \calN'(V(\cdot)V^\dag)U^\dag,
\end{align}
where $U,V$ are arbitrary single-qubit unitaries. The action of $\calN'$ on the single-qubit Pauli matrices is given by
\begin{align}
    &\calN'(I) = I + t_X X + t_Y Y + t_Z Z,
    \\&\calN'(X) = D_X X,
    \\&\calN'(Y) = D_Y Y,
    \\&\calN'(Z) = D_Z Z,
\end{align}
where $\bold{D} = (D_X, D_Y, D_Z) \in [-1,1]^3$ and $ \bold{t} = (t_X, t_Y, t_Z)\in [-1,1]^3$ are two 
vectors, which we refer as \emph{normal form parameters}.
We say that the channel $\calN$ has constant noise rate if $\frac{1}{3}\norm{\bold{D}}_2^2$ is a constant strictly smaller than one. 
Moreover, we say that $\calN$ is \emph{unital} if $\calN(I) = I$ and \emph{non-unital} otherwise.

\smallskip

There are in broad terms three different families of noise models, which we schematically sketch in Fig.~\ref{fig:noisemodels}. 
\begin{itemize}
    \item \textbf{Depolarizing-like noise:} This corresponds to unital channels satisfying $\norm{\bold{D}}_\infty < 1 $, i.e., to noise channels that drive \emph{any} input state towards the maximally mixed state. The most prominent example of depolarizing-like noise is the depolarizing channel, where $D_X=D_Y=D_Z$.
    \item \textbf{Dephasing-like noise:} This corresponds to unital channels satisfying $\abs{D_P} = 1$ for exactly one $P\in\{X,Y,Z\}$. Such channels leave invariant the quantum states that lie along a specific diameter of the Bloch sphere, while driving all other states towards such diameter. 
    \item \textbf{Non-unital noise:} Finally, non-unital noise channels are those that do not preserve the identity, i.e., $\calN(I) \neq I$. Repeatedly applying a non-unital channel lead the quantum state towards a \emph{fixed point} different than maximally mixed state. A common example of non-unital noise is the amplitude-damping channel, characterized by $\bold{D} = (\sqrt{1-\gamma}, \sqrt{1-\gamma}, 1- \gamma )$ and $\bold{t} = (0,0, \gamma)$ for some $\gamma \in (0,1]$. The fixed point of the amplitude damping channel is the computational zero state.
\end{itemize}
While our results encompass any local noise with a constant noise rate, i.e., $\norm{\bold{D}}_2 \in \Theta(1), \frac{1}{3}\norm{\bold{D}}_2^2 < 1$,  previous works have considered special cases. For instance, Refs.\ \cite{aharonov2022polynomial, fontana2023classical, schuster2024polynomial, gonzalez2024pauli} consider noise within the depolarizing class. Additionally, Ref.\ \cite{schuster2024polynomial} investigates randomized non-unital noise, which models spontaneous emissions occurring in random directions. This noise is characterized by normal form parameters given by $\bold{D} = ({1-\gamma/2}, {1-\gamma/2}, 1- \gamma)$ and $\bold{t} = (0,0, r\,\gamma)$, where $r$ is randomly set to either $1$ or $-1$ with equal probability.

\begin{figure*}
    \centering
\includegraphics[width=.95\linewidth]{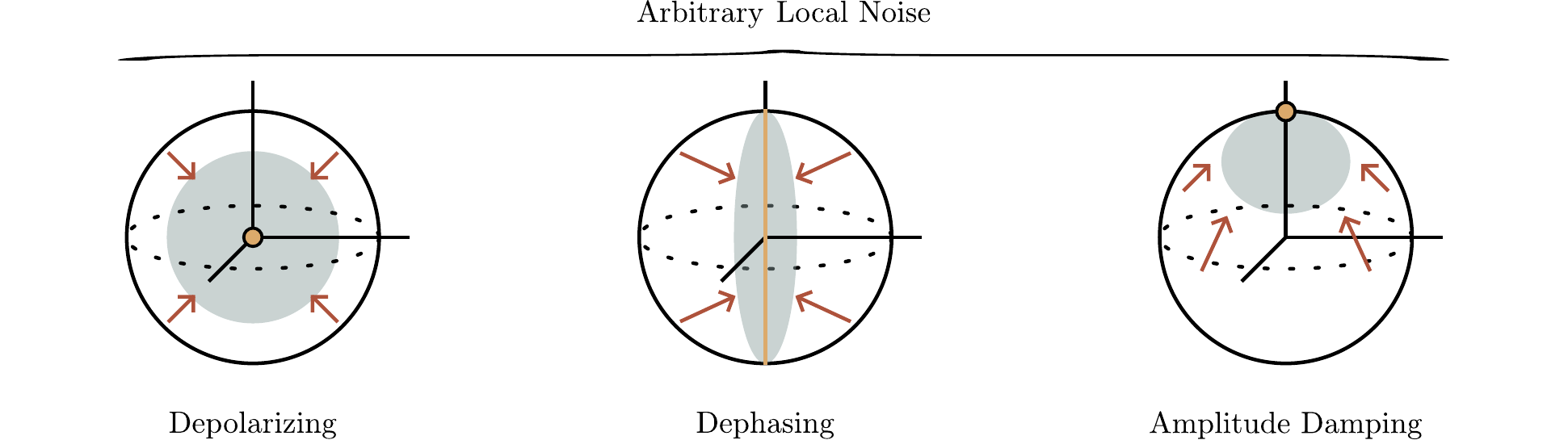}
    \caption{\textbf{Overview of noise models.} Here we provide a geometric sketch of three commonly considered families of local noise: a) Depolarizing noise, which drives any input state towards the maximally mixed state, b) Dephasing-like noise, which drives states to $z$-axis connecting poles of the Bloch sphere, and c) Non-unital noise, which drives states to a given fixed point on the Bloch sphere.  While prior work considered special cases of these families our analysis applies to any local noise channel with a constant noise rate. }
    \label{fig:noisemodels} 
\end{figure*}

\medskip

\noindent\textbf{Random local noise circuit models.}
In this work, we focus on the task of estimating expectation values such as $ \Tr[O \calC(\rho)]$. Here, $\rho$ is an arbitrary initial $n$-qubit quantum state, $O$ is a bounded Hermitian observable such that $\norm{O} \leq 1$ (where $\norm{\dots}$ denotes the spectral norm), and $\calC$ is an $L$-layered noisy random circuit of the form
\begin{align}
    \calC\coloneqq \calV^{\mathrm{single}} \circ \calN^{\otimes n} \circ \calU_{L} \circ \calN^{\otimes n} \circ \calU_{L-1} \circ \dots \circ \calN^{\otimes n} \circ \calU_{1} \, . \label{eq:noisy-circuit-main}
\end{align}
Above, the final layer $\calV^{\mathrm{single}}$ consists of single-qubit gates, the layers $\calU_j$ consist of non-overlapping gates, each acting on $\calO(1)$ qubits. Then, we assume for simplicity that each $\calN$ is an arbitrary single-qubit noise channel as defined in Eq.~\eqref{eq:normalform}. The extension to different local noise channels acting on different qubits is straightforward. Finally, we study the case when all random layers $\calU_1, \dots, \calU_L$ and $\calV^{\mathrm{single}} $  
are sampled independently from some distributions invariant (up to their second moment) 
under right-multiplication of random single-qubit gates.
In particular, we focus on two broad classes of random circuits, characterized by the distribution of these single-qubit gates.
\begin{itemize}
    \item \textbf{Locally unbiased circuits:} If the single-qubit gates are sampled from unitary 1-designs, then we say that the layers are sampled from \emph{locally unbiased} distributions. This generalizes the ensembles of circuits considered in Ref.\ \cite{aharonov2022polynomial}, which are composed by layers invariant under right-multiplication of random Pauli unitaries. 
    In this work, we demonstrate that circuits with locally unbiased layers affected by non-unital noise can be efficiently classically simulated.
    \item \textbf{Approximately locally scrambling circuits:} If the single-qubit gates are sampled from unitary 2-designs, then we say that the layers are sampled from \emph{locally scrambling} distributions\ \cite{caro2022outofdistribution, huang2023learning, angrisani2024classically}. In Appendix~\ref{sec:dist-maps} we introduce a generalization of this notion termed \emph{approximately locally scrambling} distributions. This broader framework encompasses, for example, arbitrary unitaries preceded by two consecutive random Pauli rotations along orthogonal directions. We further show that circuits composed of approximately locally scrambling layers, affected by arbitrary local noise, possibly dephasing-like, can be efficiently simulated classically.
\end{itemize}

\section{Related works}

This work combines several technical tools and insights from the previous literature. Here, we briefly review some previous works on noisy quantum circuits and on classical simulation of quantum circuits via Pauli-path methods.

\smallskip

\noindent\textbf{Classical simulation of quantum circuits with noise beyond the `depolarizing assumption'.}
Previous works have significant advanced the understanding of quantum noise beyond the depolarizing model. Under certain circumstances, different noise models can lead to qualitatively different scenarios. A prominent example is the so-called ``quantum refrigerator''~\cite{ben2013quantum}, which demonstrates that non-unital noise can be turned into a resource for fault-tolerance, removing the need for fresh auxiliary qubits
or the mid-circuit measurements required for implementing error-correcting schemes.
In particular, for a noise rate below the error-correcting threshold, quantum computation under non-unital noise remains possible for $\Theta(\exp(n))$ time~\cite{ben2013quantum, fawzi2022lower}.
An analogous, yet weaker, construction is also possible under dephasing noise, which allows for arbitrary quantum computation for polynomial time~\cite{ben2013quantum}. As there exist polynomial depth quantum circuits that are widely believed to be hard to classically simulate, these results imply that there exist at least some (carefully designed) circuits subject to non-unital noise and dephasing noise that cannot be efficiently simulated classically. 

More recently, a series of works investigated the impact of arbitrary local noise on random quantum circuits~\cite{quek2022exponentially, fefferman2023effect, mele2024noise, crognaletti2024estimates}. 
Even in this setting, some stark differences arise between unital and non-unital noise. In the latter case, noisy random circuits do not exhibit anti-concentration of the output distribution~\cite{fefferman2023effect} and exponential concentration of local observables~\cite{mele2024noise, singkanipa2024beyond}. 
Loosely speaking, these effects occur because non-unital noise effectively cools down the qubits throughout the computation.
On the other hand, when averaged over random gates, any local noise induces an \emph{effective depth}~\cite{mele2024noise}, that is any noise ``truncates''’ most quantum circuits
to effectively logarithmic depth, for the task of estimating observable expectation values.
In particular, as discussed in Ref.~\cite{mele2024noise}, the existence of this noise-induced effective depth allows one to classically estimate expectation values of  observables with inverse-polynomial precision in quasi-polynomial time if the circuit has constant geometric locality. The runtime improves to polynomial if the circuit architecture is one-dimensional.
In the present paper, we conduct a more fine-grained analysis of this phenomenon, by extending it to a more general class of random circuits on one hand, and by employing a more sophisticated simulation algorithm on the other hand, achieving inversely polynomial precision in polynomial time across all circuit architectures.

\smallskip

\noindent\textbf{Classical simulation of noisy circuits via Pauli propagation algorithms.}
Pauli propagation algorithms have gained momentum in recent years as powerful tools for simulating quantum circuits in a variety of settings. For circuits interspersed with noise within the depolarizing class, polynomial-time classical simulation with inverse-polynomial precision has been established, both for sampling from the output distribution of random circuits~\cite{aharonov2022polynomial} and for estimating expectation values~\cite{fontana2023classical, shao2023simulating}. 
Recently, it was shown that arbitrary circuits under depolarizing noise, when applied to a random input state, are also classically simulable in polynomial time with inverse-polynomial precision~\cite{schuster2024polynomial}.

While the results mentioned above rely on average-case assumptions, worst-case results for simulating noisy circuits using Pauli propagation have also been demonstrated in specific regimes~\cite{aharonov1996limitations, rall2019simulation, gonzalez2024pauli,schuster2024polynomial}. 
Namely, any circuit under depolarizing noise is classically simulable at depths larger than logarithmic in the system size, as the output state is inversely polynomially close to the maximally mixed state~\cite{aharonov1996limitations}.
Conversely, at sub-logarithmic depth, circuits dominated by Clifford gates~\cite{rall2019simulation, gonzalez2024pauli} or those with input states that are nearly maximally mixed, such as in the one clean-qubit DQC1) model of computation~\cite{schuster2024polynomial}, are also classically simulable under local depolarizing noise.

Beyond circuits with depolarizing noise, Ref.~\cite{schuster2024polynomial} investigates a \emph{specific} example of non-unital noise that effectively acts as depolarizing noise. This enables the estimation of expectation values of observables with inverse-polynomial precision in polynomial time across all circuit architectures, given a mild form of randomness in the choice of the input state. The noise model in question involves spontaneous emissions occurring in random directions. As the authors discuss, such randomization arises naturally in random circuits as an effect of the random gates applied before and after each noise channel.
In our work, we extend this analysis by explicitly identifying several classes of noisy random circuits which 
behave effectively as subject to depolarizing noise, and providing a unified treatment for \emph{any} arbitrary local noise, possibly non-unital or dephasing-like. Another extension of Pauli-path simulation to non-unital noise is presented in Ref.\ \cite{martinez2025efficient}, which focuses on circuits consisting of Clifford gates and random Pauli rotations interspersed with amplitude damping noise. This work introduces a randomized variant of the Pauli Propagation algorithm specifically designed for circuits affected by non-unital noise. We summarize the state-of-the-art results for estimating observables of noisy circuits in Table\ \ref{tab:comparison}.

\smallskip

\noindent\textbf{Classical simulation of noiseless circuits via Pauli propagation algorithms.}
Moving beyond noisy devices, it has been recently proven that Pauli-path methods allow for the estimation of expectation values in noiseless random quantum circuits in polynomial time for arbitrarily small constant precision, and in quasi-polynomial time for inverse-polynomial precision~\cite{angrisani2024classically}. Furthermore, this can be improved to almost-polynomial time if the circuit has constant geometric locality and a depth of at most poly-logarithmic scale. 

Crucially, the algorithm in Ref.~\cite{angrisani2024classically} truncated all Pauli operators above a certain weight threshold. In contrast, the algorithm considered in the present paper truncates Pauli paths based on their path weight, as sketched in Fig.~\ref{fig:schematic}. This distinction explains the differing runtimes of the two methods. Notably, for inversely polynomial precision, the runtime for noiseless circuits in Ref.~\cite{angrisani2024classically} is significantly higher than that achieved in our work for noisy circuits.

Moreover, Pauli-path methods are also provably efficient for noiseless near-Clifford circuits -- those composed of Clifford gates and Pauli rotations by sufficiently small angles~\cite{beguvsic2023fast, beguvsic2023simulating, lerch2024efficient, zhang2024clifford,mitarai2022quadratic}.
Refs.~\cite{rudolph2023classical, bermejo2024quantum} provide further numerical evidence of the performance of Pauli Propagation on noiseless circuits of particular relevance in the near-term and early-fault-tolerant era, such as Trotterized circuits for Hamiltonian simulations~\cite{rudolph2023classical} and Quantum Convolutional Neural Networks (QCNNs)~\cite{bermejo2024quantum}.

A further application of Pauli-path methods consists in classically spoofing the linear cross-entropy benchmark used in quantum supremacy experiments~\cite{gao2024limitations, aharonov2022polynomial, tanggara2024classically, angrisani2024classically}.

\begin{table}[h]
    \centering
    \begin{tabular}{|c|c|c|}
        \hline
        \textbf{Local noise model} & \textbf{Average-case assumptions} & \textbf{Runtime} \\
        \hline
         Noiseless & locally scrambling layers  & \(  n^{\calO(\log(n))} \)\ \cite{angrisani2024classically} \\
        \hline
        Depolarizing & random input state & $\poly(n)$\ \cite{schuster2024polynomial} \\
        \hline
        Spontaneous emission & emissions occurring in random  & $M\cdot \poly(n)\ $ \cite{schuster2024polynomial}         \vspace{-3pt}\\
             & directions, random input state &  \\
        \hline
        Depolarizing-like\ \cite{fontana2023classical, shao2023simulating}, & circuit with Clifford gates  & $M \cdot \poly(n)$
        \vspace{-3pt}\\
         amplitude damping\ \cite{martinez2025efficient}     & and random Pauli rotations&  \\
        \hline
        Arbitrary non-unital noise  & locally unbiased layers & $M\cdot \poly(n)$ [Thm.\ \ref{thm:non-u-inf}] \\
        \hline
        Arbitrary unital noise  & approximately locally & $\poly(n)$ [Thm.\ \ref{thm:deph-inf}] \vspace{-3pt} \\
        (including dephasing-like) & scrambling layers &  \\
        \hline
    \end{tabular}
    \caption{Comparison of runtimes for previous and our results established under different noise models and average-case assumptions. The goal is to estimate the expectation value of an observable $O$ with inversely polynomial precision and inversely polynomial failure probability for a polynomial depth circuit with an arbitrary topology. We assume that the observable $O$ satisfies $\norm{O}\leq 1$ and that it contains at most $M$ different Pauli terms.}
    \label{tab:comparison}
\end{table}

\section{Results}

\noindent \textbf{Theoretical guarantees.} 
Our first contribution establishes the efficient classical simulability of typical quantum circuits interspersed by arbitrary \textit{non-unital} noise. For inversely polynomial precision, the computational cost of our Pauli propagation algorithm (described in Section~\ref{sec:methods}) is polynomial if the measured observable contains polynomially-many Pauli operators. The runtime of our algorithm thus aligns with the polynomial scalings found in Refs~\cite{martinez2025efficient, schuster2024polynomial} (see Table~\ref{tab:comparison}) but holds for a wider class of noise models. 
Namely, our result holds for arbitrary local non-unital noise and any random noisy circuit whose layers are sampled independently from a distribution invariant under single-qubit unitary 1-designs. We stress that some form of average case assumption over circuits 
is essential due to the existence of the ``quantum refrigerator'' scheme~\cite{ben2013quantum}, as discussed in detail in Appendix\ \ref{app:necessity}.

\begin{theorem}[Non-unital noise, informal]
\label{thm:non-u-inf}
Let $O$ be an observable expressed as linear combination of $M$ Pauli operators, and let $\rho$ be an initial state. Assume the Pauli coefficients
of $O$ and $\rho$ can be efficiently computed.
Let $\calC$ be a random noisy circuit whose layers are sampled independently from a distribution invariant under single-qubit unitary 1-designs.  Assume the noise to be local and non-unital, and the noise rate to be constant.
Then the noisy expectation value $\Tr[O\calC(\rho)]$ can be estimated classically in time $M \cdot \poly(n, \epsilon^{-1}, \delta^{-1})$ within error $\epsilon\norm{O}$ and with probability $1-\delta$ over the circuit randomness.
\end{theorem}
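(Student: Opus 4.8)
The plan is to analyze the Pauli propagation algorithm via a Pauli-path expansion of $\Tr[O\calC(\rho)]$ and show that, after conditioning on the circuit randomness, the contribution of all paths whose \emph{cumulative} Pauli weight exceeds a logarithmic threshold is inversely polynomially small in expectation, hence (by Markov) with high probability. First I would write $O = \sum_{P} a_P P$ and expand the noisy channel backward in the Heisenberg picture: $\calC^\dagger(O) = \sum_{\text{paths}} (\text{coefficient}) \, Q$, where a path is a sequence of Pauli operators $P = Q_L \to Q_{L-1} \to \dots \to Q_0$ compatible with the gates and noise layers, and the path weight is $w(\text{path}) = \sum_{j} |Q_j|$ (total cumulative support), in contrast to the maximal instantaneous weight used in Ref.~\cite{angrisani2024classically}. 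The algorithm keeps only paths with $w \le \ell$ for a suitable $\ell = \calO(\log(n/\epsilon))$; I must bound the squared $\ell_2$-mass (or the expected absolute contribution) of the discarded paths.

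The key mechanism is that, conditioned on the $1$-design randomness of the single-qubit gates, each noise channel sandwiched by random single-qubit unitaries acts \emph{on average} like a depolarizing-type contraction on each nontrivial single-qubit Pauli. Concretely, using the normal-form decomposition $\calN = U\calN'(V\cdot V^\dagger)U^\dagger$ from Eq.~\eqref{eq:normalform} and averaging the adjacent Haar/$1$-design single-qubit gate, the non-unital ``$\bold{t}$'' term — which maps a nontrivial Pauli partly onto the identity — averages out at first moment, and what remains is a strict contraction factor $\lambda < 1$ (controlled by $\tfrac13\|\bold{D}\|_2^2 < 1$, the constant noise rate) applied to each non-identity single-qubit Pauli on every one of the $L$ noise layers. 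Thus in expectation over the circuit, a path of cumulative weight $w$ picks up a factor of order $\lambda^{w}$ (more carefully, $\lambda$ to the number of (qubit, layer) pairs on which the running Pauli is nontrivial, which is exactly $w$ for a bottom-to-top accounting). I would then bound $\Ex{\calC}\big[\,|\sum_{w(\text{path}) > \ell} (\text{coeff})\,\Tr[Q\rho]|\,\big]$, or rather its square via an orthogonality argument over Pauli paths: the cross terms vanish under the $1$-design average (distinct paths end on distinct Paulis at some layer and the $1$-design kills the off-diagonal), leaving $\sum_{w > \ell} |\text{coeff}|^2 \le \sum_{w>\ell} \lambda^{2w}\cdot(\text{combinatorial count of weight-}w\text{ paths})$. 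Since each layer has bounded geometric locality-free fan-out only through the $\calO(1)$-qubit gates, the number of paths of cumulative weight $w$ grows at most like $c^{w}$ for a constant $c$ depending on the gate arity; choosing $\lambda^2 c < 1$ is not automatic, so instead I would absorb the combinatorial growth into the threshold: picking $\ell \gtrsim \log(M/(\epsilon\delta))/\log(1/(\lambda^2 c))$ makes the tail $\le \epsilon^2\delta / M$, and summing over the $M$ Pauli terms of $O$ and applying Markov's inequality gives error $\le \epsilon\|O\|$ with probability $\ge 1-\delta$. Finally, the truncated sum has at most $M \cdot c^{\ell} = M\cdot\poly(n,\epsilon^{-1},\delta^{-1})$ surviving paths, each evaluable in $\poly(n)$ time given the assumed efficient computability of the Pauli coefficients of $O$ and $\rho$, yielding the claimed runtime.

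The main obstacle I anticipate is making the ``averaging kills the non-unital term and leaves a clean contraction'' step rigorous and \emph{uniform} across all paths simultaneously, rather than one path at a time. The subtlety is that the non-unital $\bold{t}$-contribution does not simply vanish — it redirects weight from a nontrivial Pauli to the identity on that qubit, which is a \emph{lower}-weight branch, so it helps the truncation rather than hurting it; but proving this requires carefully tracking the sign/interference structure so that these redirected branches cannot conspire to produce a large cancellation-free contribution above the threshold. The clean way around this is the second-moment (orthogonality) computation sketched above: because we only need the $\ell_2$ mass of the Heisenberg-evolved observable (and $\Tr[Q\rho] \le 1$), we never need to control signs — the $1$-design average diagonalizes the second moment in the Pauli-path basis and every path contributes non-negatively, with its weight $\lambda^{2w}$ dictated purely by $\bold{D}$, completely bypassing $\bold{t}$. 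Verifying the orthogonality under a mere $1$-design (rather than a $2$-design) is the one genuinely delicate point — it works because the relevant second-moment object is a single application of the $1$-design twirl to the operator $Q_j^{(1)} \otimes Q_j^{(2)}$ summed against its conjugate, which only needs matching of single-copy Pauli labels; I would isolate this as a lemma about $1$-design twirls of products of single-qubit channels and then feed it into the path-counting bound.
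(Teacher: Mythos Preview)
Your overall architecture—path-weight truncation, a second-moment tail bound via orthogonality, then Markov—matches the paper. But the error bound has a genuine gap. You upper-bound the tail by $\sum_{w>\ell}\lambda^{2w}\cdot(\text{count at weight }w)\lesssim\sum_{w>\ell}(\lambda^2 c)^w$ and then set $\ell\gtrsim\log(M/(\epsilon\delta))/\log(1/(\lambda^2 c))$. This silently requires $\lambda^2 c<1$, which you flag as ``not automatic'' and then never resolve. It is in fact false in the regime the theorem covers: for amplitude damping at rate $\gamma=0.1$ one already has $\tfrac13\|\bold D\|_2^2\approx 0.87$, while the branching constant $c$ for $2$-qubit gates is certainly larger than~$2$, so $\lambda^2 c>1$ and your tail sum diverges. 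Since the theorem claims \emph{any} constant noise rate, this breaks the argument as written.

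The paper's fix is to decouple the contraction from the combinatorics rather than play them against each other. Write $\calN=\calN_p^{(\mathrm{depo})}\circ\tilde\calN_p$ with $p=1-\chi_{\calD}(\calN)$, so that $\Phi_\gamma(\calC)=(1-p)^{|\gamma|}\Phi_\gamma(\tilde\calC)$ for the (non-physical) map $\tilde\calC$ obtained by stripping out the depolarizing factors (Lemma~\ref{lem:path-supp-dep}). The tail second moment then becomes
\[
\sum_{w\ge k}(1-p)^{2w}\,\bbE\Bigl\|\text{weight-}w\text{ part of }\tilde\calC^\dagger(O)\Bigr\|_F^2\;\le\;(1-p)^{2k}\,\bbE\bigl\|\tilde\calC^\dagger(O)\bigr\|_F^2,
\]
and the crucial lemma is that $\tilde\calN_p^\dagger$, for this specific choice of $p$, \emph{does not increase} the Frobenius norm on average over the single-qubit $1$-design (Lemma~\ref{lem:norm-not-incr}), whence $\bbE\|\tilde\calC^\dagger(O)\|_F^2\le\|O\|_F^2$ (Corollary~\ref{cor:norm-not-incr}). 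No path count ever enters the error bound; the $M\exp(\calO(k))$ combinatorics appears only in the runtime analysis (Lemma~\ref{lem:runtime-non-u}). This separation is exactly what makes the result hold for arbitrarily small constant $p$.

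Two smaller points. First, under a mere $1$-design the second moment does not diagonalize in the Pauli-path basis; it only block-diagonalizes by path \emph{support} (Lemma~\ref{lem:ortho1}, part~1). That still suffices here because distinct path weights force distinct support vectors, but your description of the twirl (``matching of single-copy Pauli labels'') overstates what the $1$-design gives. Second, the $\bold t$ term is not ``completely bypassed'' in the second moment: it contributes $(\sum_Q a_Q t_Q)^2$ to $\|\calN'^\dagger(V^\dagger P V)\|_F^2$, which is why the paper's contraction coefficient under the $1$-design assumption (Lemma~\ref{lem:contr-general}) is $\Upsilon(\bold D,\bold t)$ rather than $\|\bold D\|_2^2/3$ alone.
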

This result also closes a gap in understanding the relationship between barren plateaus and classical simulability~\cite{cerezo2023does} in the context of generic noisy circuits. While previous works have shown that non-unital noisy random circuits do not exhibit barren plateaus~\cite{mele2024noise} (although due to only their last few layers), only a quasi-polynomial time classical average-case algorithm was known to achieve inverse-polynomial precision in estimating observable expectation values. 
Our result improves upon this by reducing the time complexity to polynomial time, demonstrating that in this physically motivated non-unital noisy setting, the absence of barren plateaus is indeed accompanied by efficient average-classical simulation. 
More specifically, this provides a concrete example where we observe absence of barren plateaus~\cite{mele2024noise}, efficient average classical simulability (our work), and provable worst-case hardness for classical simulability~\cite{ben2013quantum}.

We also establish the efficient classical simulability of typical quantum circuits interspersed by arbitrary unital noise, possibly within the \textit{dephasing} class.
In this case, the computational cost for inversely polynomial precision is polynomial for all observables.
Here, we make a slightly stronger assumption on the circuit ensemble, by assuming the circuit layers to be sampled from distributions invariant under single-qubit gates satisfying a property which we call \emph{approximate scrambling} (cf. Definition~\ref{def:mixing}). This natural property is satisfied by a broad class of gate-sets, such as Clifford gates and consecutive Pauli rotations along two orthogonal axes. 

\begin{theorem}[Unital noise, informal]
\label{thm:deph-inf}
Let $O$ be an observable and let $\rho$ be an initial state. Assume the Pauli coefficients
of $O$ and $\rho$ can be efficiently computed.
Let $\calC$ be a random noisy circuit whose layers are sampled independently from a distribution invariant under approximately scrambling single-qubit gates. Assume the noise to be local and unital, possibly dephasing-like, and that the noise rate is constant.
\begin{itemize}
    \item The noisy expectation value of any observable $O$ expressed as a linear combination of $M$ Pauli operators can be estimated classically in time $M\,\mathrm{poly}(n, \epsilon^{-1}, \delta^{-1})$  within additive error $\epsilon\norm{O}$ with probability at least $1-\delta$ over the circuit randomness.
    \item If the circuit depth is at least logarithmic in the number of qubits $n$, then the noisy expectation value of any observable $O$ can be estimated classically in time  $\mathrm{poly}(n)$ within additive error $\epsilon\norm{O}$ with probability at least $1-\delta$ over the circuit randomness,  for arbitrary $\epsilon,\delta$ decaying inverse-polynomially in $n$.
\end{itemize} 
\end{theorem}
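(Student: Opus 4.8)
The plan is to analyze the Pauli-propagation representation of $\Tr[O\,\calC(\rho)]$ in the Heisenberg picture, tracking each Pauli path through the layers and showing that the contribution of any path decays exponentially in its \emph{path weight} — the cumulative number of non-identity single-qubit slots encountered by the path at the noise locations. First I would expand $O = \sum_{P} a_P\, P$ in the Pauli basis and write $\Tr[O\,\calC(\rho)] = \sum_{\text{paths}\ \gamma} c_\gamma$, where a path fixes a Pauli string $P^{(j)}$ after each layer $\calU_j$ (with $\calN^{\otimes n}$ inserted), and $c_\gamma$ is a product of per-layer transition amplitudes times the initial overlap $\bbrakket{P^{(0)}}{\rho}$. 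The key input is the normal form~\eqref{eq:normalform}: writing $\calN = U\calN'(V\cdot V^\dag)U^\dag$ and absorbing $U,V$ into the adjacent random gates (here one uses that the layer distributions are invariant under right-multiplication by single-qubit unitaries, up to the second moment), the effective noise seen by a path is $\calN'$, which multiplies each non-identity single-qubit Pauli $P$ by $D_P$ with $|D_P|\le 1$.

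Next I would take the expectation over the circuit randomness. Because each layer is invariant under (approximate) single-qubit $2$-designs, averaging the squared coefficient $\Ex{\calC}|c_\gamma|^2$ (or a suitable first-moment quantity for the non-unital sibling theorem) factorizes across layers, and on each single-qubit slot where the incoming Pauli is non-identity the $2$-design average produces a contraction factor of the form $\tfrac13(D_X^2+D_Y^2+D_Z^2)=\tfrac13\|\bold{D}\|_2^2 =: q < 1$ for unital noise — crucially, this holds even for dephasing-like noise, where $|D_P|=1$ for one Pauli but the $2$-design averages the incoming Pauli uniformly over the orthogonal directions so that one still picks up a strictly-less-than-one factor on average. (This is exactly where the $2$-design / approximate-scrambling hypothesis is needed and where the dephasing case differs from the unital-nonunital dichotomy; for $1$-designs alone one cannot beat dephasing, which is why Theorem~\ref{thm:non-u-inf} excludes it.) Summing the geometric series, $\Ex{\calC}\big[\sum_\gamma |c_\gamma|^2\big]$ is dominated by low-path-weight paths, giving $\sum_{\gamma:\ \mathrm{pw}(\gamma) > k} \Ex{\calC}|c_\gamma|^2 \le \poly(n)\, q^{k}$ after a counting argument that bounds the number of paths of a given path weight (each layer has $\calO(1)$-local gates, so the Pauli can only spread to $\calO(1)$ new qubits per layer, making the branching factor bounded and the path-weight-$\le k$ paths at most $\exp(\calO(k\log n))$ in number — but weighted by $a_P^2$ and the $q^k$ decay this is controlled; to get a clean sum one also uses $\|O\|\le 1$ so $\sum_P a_P^2\le 1$).

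Then I would define the algorithm: enumerate exactly all Pauli paths of path weight at most $k := \Theta(\log(n/\epsilon)/\log(1/q))$, evaluate their coefficients (using that Pauli coefficients of $O$ and $\rho$ are efficiently computable), and output the sum; by Chebyshev/Markov over the circuit randomness the truncation error is $\le \epsilon\|O\|$ with probability $1-\delta$ after inflating $k$ by a $\log(1/\delta)$ factor. For the first bullet, the number of surviving paths is $M \cdot \exp(\calO(k\log n)) = M\cdot\poly(n,\epsilon^{-1},\delta^{-1})$, giving the stated runtime. For the second bullet, when the circuit depth $L \ge c\log n$, I would additionally invoke the \emph{effective-depth} phenomenon: any path reaching back more than $\Theta(\log(n/\epsilon))$ layers before the observable must have accumulated path weight growing with the number of traversed layers (on average, by the same per-layer contraction), so one may truncate the circuit to its last $\calO(\log(n/\epsilon))$ layers; on a logarithmic-depth circuit the total number of Pauli paths of \emph{any} weight is only $\poly(n)$ for inverse-polynomial $\epsilon$, so one can afford to enumerate \emph{all} of them and the $M$-dependence disappears. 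The main obstacle I anticipate is the dephasing case of the per-layer contraction bound: one must show that the approximate (rather than exact) $2$-design property, together with the fact that the noise is only \emph{unital} and not strictly depolarizing-like, still yields a uniform $q<1$ contraction on average — this requires carefully handling the $\abs{D_P}=1$ direction, bounding the second-moment error terms from the "approximate" design, and propagating these error terms through the telescoping product over all $L$ layers without them accumulating to an uncontrolled constant. A secondary technical point is making the path-counting argument tight enough that $\exp(\calO(k\log n))$ with $k=\Theta(\log(n/\epsilon))$ indeed lands at $\poly(n,\epsilon^{-1})$ rather than quasi-polynomial, which is where the path-weight truncation (as opposed to current-weight truncation, cf.\ Fig.~\ref{fig:schematic}) is essential.
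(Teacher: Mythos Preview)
Your overall strategy---Pauli-path expansion, path-weight truncation, per-layer contraction from noise averaged over approximately scrambling single-qubit gates---matches the paper. You also correctly identified the two genuine obstacles. Here is how the paper resolves them, and where your sketch diverges.

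\textbf{The MSE bound.} You propose to bound $\sum_{\gamma:\mathrm{pw}(\gamma)>k}\bbE|c_\gamma|^2$ by combining a per-path contraction $q^{\mathrm{pw}(\gamma)}$ with a path count. This does not close: your count $\exp(\calO(k\log n))$ would swamp $q^k$. The paper avoids this entirely by the decomposition $\calN=\calN_p^{(\mathrm{depo})}\circ\tilde\calN_p$ with $p=1-\chi_{\calD_{\mathrm{single}}}(\calN)$, which gives $\Phi_\gamma(\calC)=(1-p)^{|\gamma|}\Phi_\gamma(\tilde\calC)$ exactly. Then, using orthogonality of paths with different supports (from local unbiasedness) and the fact that $\tilde\calN_p^\dag$ does \emph{not increase} the Frobenius norm on average, one gets $\bbE\|\calC^\dag(O)-O_\calC^{(k)}\|_{\mathrm F}^2\le(1-p)^{2k}\bbE\|\tilde\calC^\dag(O)\|_{\mathrm F}^2\le(1-p)^{2k}\|O\|_{\mathrm F}^2$ with no path counting at all. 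The dephasing case is handled by showing $\chi_\calD^2(\calN)\le\eta+(1-\eta)\tfrac{\|\bold D\|_2^2}{3}<1$ for an $\eta$-approximate scrambler; the proof reinterprets $\bbE_{U\sim\calD}\llangle P|U\otimes U^*|Q\rrangle^2$ as a probability distribution with bounded max-relative entropy to uniform and uses data processing, which cleanly absorbs the ``approximate'' in approximate scrambling without error accumulation across layers.

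\textbf{The runtime / path count.} Path counting is needed only for runtime, and the correct bound is $M\cdot\exp(\calO(k))$, not $\exp(\calO(k\log n))$. The argument (Lemma~\ref{lem:runtime-non-u}): since $\calN^\dag$ is unital, a legal path can have at most $k$ non-identity entries and these must be consecutive (so w.l.o.g.\ $L\le k$); the number of ways to partition the weight budget among $\le k$ layers is $\exp(\calO(k))$; and once $P_L$ is fixed (one of $M$ choices), each subsequent $P_{j-1}$ lives in the light cone of $P_j$ under a layer of non-overlapping $\calO(1)$-local gates, giving $\calO(1)^{|P_j|}$ choices and hence $\exp(\calO(k))$ total. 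No $\log n$ factor appears.

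\textbf{Second bullet.} Your argument---truncate to the last $\calO(\log n)$ layers by effective depth, then ``enumerate all paths''---fails as stated: a $\log n$-depth circuit still has exponentially many Pauli paths in general. The paper instead invokes the unital-noise path count $n^{k/L}\cdot\exp(\calO(k))$ (Lemma~\ref{lem:unital-runtime}, adapted from Aharonov et al.); when $L\in\Omega(\log n)$ and $k\in\calO(\log n)$ one has $k/L\in\calO(1)$, so this is $\poly(n)$. The effective-depth theorem is a separate result and is not used for the runtime here.
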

Finally, we prove that any local noise effectively truncates typical circuits to a logarithmic depth for the task of estimating expectation values, generalizing the main result of Ref.~\cite{mele2024noise} to a broader class of circuits. This result implies that for inversely polynomial precision, the full circuit $\mathcal{C}$ can effectively be replaced by a logarithmic-depth circuit, specifically by its last logarithmic-many layers, which can potentially be simulated classically.

\begin{theorem}[Noise-induced shallow depth, informal]
\label{thm:main-eff}
Let $O$ be an observables and consider an $L$-layered circuit $\calC$ randomly sampled from $\calD_{\mathrm{circ}}$ where $\calD_{\mathrm{circ}}$ is a distribution over noisy circuits satisfying the assumptions of Theorems\ \ref{thm:non-u-inf} or\ \ref{thm:deph-inf}. 
Consider the ``truncated'' noisy circuit 
\begin{align}
        \mathcal{C}_{[L, L-j]}  \coloneqq \mathcal{V}^{\mathrm{single}} \circ \mathcal{N}^{\otimes n} \circ \mathcal{U}_{L} \circ \mathcal{N}^{\otimes n} \circ \mathcal{U}_{L-1} \circ \dots \circ \mathcal{N}^{\otimes n} \circ \mathcal{U}_{L-j}, 
\end{align}
for a suitable $j \in \mathcal{O}(\log(1/\epsilon))$.
With high probability over the choice of $\mathcal{C}$, it holds that 
\begin{align}
    \abs{\Tr[O\mathcal{C}(\rho)] - \Tr[O\mathcal{C}_{[L, L-j]}(\sigma)]} \leq \epsilon \|O\|,
\end{align}
where $\sigma$ is an arbitrary state.
\end{theorem}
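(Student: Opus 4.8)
The plan is to show that the expectation value $\Tr[O\calC(\rho)]$ is, up to error $\epsilon\|O\|$, insensitive to the first $L-j$ layers of the circuit, so that these may be replaced by an arbitrary state $\sigma$ fed directly into the last $j+1$ layers. The natural strategy is to work in the Heisenberg picture: write $O(j) \coloneqq (\mathcal{V}^{\mathrm{single}} \circ \mathcal{N}^{\otimes n} \circ \mathcal{U}_{L} \circ \dots \circ \mathcal{N}^{\otimes n} \circ \mathcal{U}_{L-j})^\dagger(O)$ for the backward-propagated observable through the last $j+1$ layers, and compare $\Tr[O(j)\, \calC_{[1,L-j-1]}(\rho)]$ with $\Tr[O(j)\,\sigma]$, where $\calC_{[1,L-j-1]}$ denotes the first $L-j-1$ layers. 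Both quantities are bounded by $\|O(j)\|_{\mathrm{tr-dual}}$ acting on states, so it suffices to prove that with high probability over the last $j+1$ layers, the traceless part of $O(j)$ has small operator norm, specifically $\| O(j) - \Tr[O(j)]\, I/2^n \| \leq \epsilon\|O\|/2$; then both traces equal $\Tr[O(j)]/2^n$ up to the desired error, regardless of $\rho$ or $\sigma$. Equivalently, in the Pauli basis, I want the sum of squared (or absolute) Pauli coefficients of $O(j)$ over all \emph{nonidentity} Paulis to be inversely polynomially small.

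The key mechanism is the contraction induced by a single noisy layer after averaging over the single-qubit randomness, which is exactly the quantity controlled by the technical lemmas underlying Theorems~\ref{thm:non-u-inf} and~\ref{thm:deph-inf}. Concretely, for a locally unbiased (resp.\ approximately scrambling) layer followed by $\calN^{\otimes n}$, each nonidentity Pauli operator of weight $w$ is, in expectation over the layer randomness, mapped to a combination of Paulis whose \emph{total} nonidentity-coefficient mass is shrunk by a factor of roughly $\bar q^{\,w}$ where $\bar q \coloneqq \tfrac13\|\mathbf{D}\|_2^2 < 1$ is a constant determined by the normal-form parameters of $\mathcal{N}$ (this is the same per-qubit contraction factor that makes the Pauli-path sum converge). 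Since every nonidentity Pauli has weight at least one, one noisy layer contracts the total nonidentity mass by at least $\bar q$. Iterating over the $j+1$ layers of $\calC_{[L,L-j]}$ — using independence of the layers and the tower property of expectation — gives $\Ex{}\big[\,\|O(j)\|^2_{\text{nonid}}\,\big] \leq \bar q^{\,j+1}\,\|O\|^2_{\text{nonid}} \leq \bar q^{\,j+1}\|O\|^2$. Choosing $j = \Theta(\log(1/\epsilon)/\log(1/\bar q)) \in \mathcal{O}(\log(1/\epsilon))$ makes this expectation at most $\epsilon^2\|O\|^2 \cdot \delta$, and Markov's inequality then yields $\|O(j)\|_{\text{nonid}} \leq \epsilon\|O\|$ with probability $\geq 1-\delta$ over the last $j+1$ layers; the remaining layers and the states $\rho,\sigma$ play no role.

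The main obstacle is handling the dephasing-like case cleanly. For depolarizing-like and non-unital noise with locally unbiased layers, the per-layer contraction by $\bar q^{\,w}$ is immediate from the normal form, because a unitary 1-design average fully symmetrizes each qubit's Pauli label before the noise acts, so every nonidentity single-qubit Pauli is damped by the same factor $\tfrac13(D_X^2+D_Y^2+D_Z^2) < 1$. For dephasing-like noise one of the $D_P$ equals $\pm 1$, so a 1-design average is \emph{not} enough — a Pauli aligned with the protected axis is not contracted at all. This is precisely why the theorem's hypotheses demand approximately scrambling layers (e.g.\ two Pauli rotations along orthogonal axes) in the unital case: such a layer rotates any fixed single-qubit Pauli off the protected axis with constant probability, restoring a strictly-less-than-one contraction factor in expectation. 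I would therefore invoke the corresponding single-layer contraction estimate already established for the proof of Theorem~\ref{thm:deph-inf} (in the appendix on approximately locally scrambling distributions) as a black box, and the iteration argument above goes through verbatim with $\bar q$ replaced by whatever constant $<1$ that lemma provides. A secondary, more routine point is bookkeeping the "up to second moment" invariance: since the argument only ever uses expectations of squared Pauli coefficients, second-moment invariance of each layer's distribution suffices, matching the hypotheses inherited from Theorems~\ref{thm:non-u-inf} and~\ref{thm:deph-inf}.
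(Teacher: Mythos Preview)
Your strategy matches the paper's (Theorem~\ref{thm:ed} and Corollary~\ref{thm:supp-eff}): back-propagate $O$ through the last $j+1$ layers, show the Frobenius norm of the traceless part contracts by a constant factor per noisy layer via the same mean-squared contraction coefficient you invoke, iterate using independence, and finish with Markov. Your treatment of the dephasing case via approximate scrambling also mirrors Lemma~\ref{lem:norm-contr-scr}.

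There is, however, a genuine gap in the first paragraph. You correctly note that an \emph{operator}-norm bound $\|O(j)-\Tr[O(j)]\,I/2^n\|_\infty \leq \epsilon\|O\|/2$ would suffice, but then claim this is ``equivalently'' a bound on the sum of squared Pauli coefficients, i.e.\ on $\|O(j)_{\mathrm{nonid}}\|_{\mathrm F}$. It is not: for an $n$-qubit operator the spectral norm can exceed the normalized Frobenius norm by $\sqrt{2^n}$, so a Frobenius bound of order $\epsilon$ only gives $|\Tr[O(j)_{\mathrm{nonid}}\,\tau]|\leq \sqrt{2^n}\,\epsilon$ for an arbitrary state $\tau$, forcing $j\in\Theta(n)$ rather than $\Theta(\log(1/\epsilon))$. (Your parenthetical ``or absolute'' would work, but the $\ell^1$ Pauli norm is not what the per-layer contraction controls.) The paper never bounds the operator norm; instead it uses the locally-unbiased assumption one more time: since $\calU_{L-j}$ is invariant in second moment under right-multiplication by single-qubit $1$-designs $V$, one may insert $V$ for free, turning any fixed input state $\tau$ into the low-average ensemble $V\tau V^\dag$ and then applying Lemma~\ref{lem:schuster} to get $\bbE\,\Tr[O(j)_{\mathrm{nonid}}\,\tau]^2 \leq \bbE\,\|O(j)_{\mathrm{nonid}}\|_{\mathrm F}^2$ with no dimensional factor. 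This step is what converts your Frobenius-norm contraction into the desired expectation-value bound; without it the argument as written does not close.
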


It is noteworthy that combining this result with the recent findings from Ref.~\cite{angrisani2024classically}, already provides a non-trivial simulation algorithm for noisy random circuits. In particular, Ref.~\cite{angrisani2024classically} demonstrated that observables of logarithmic-depth circuits can be estimated with inversely polynomial precision in almost-polynomial time (i.e., $n^{\calO(\log\log(n))}$-time) for circuits with constant geometric dimension~\footnote{The results in Ref.~\cite{angrisani2024classically} were originally established for noiseless circuits but can be extended straightforwardly to noisy circuits.}.
However, the bounds derived in this paper are significantly stronger, as they establish a polynomial runtime for noisy circuits across all architectures.

\medskip

\begin{figure}
    \centering
    \includegraphics[width=0.6\linewidth]{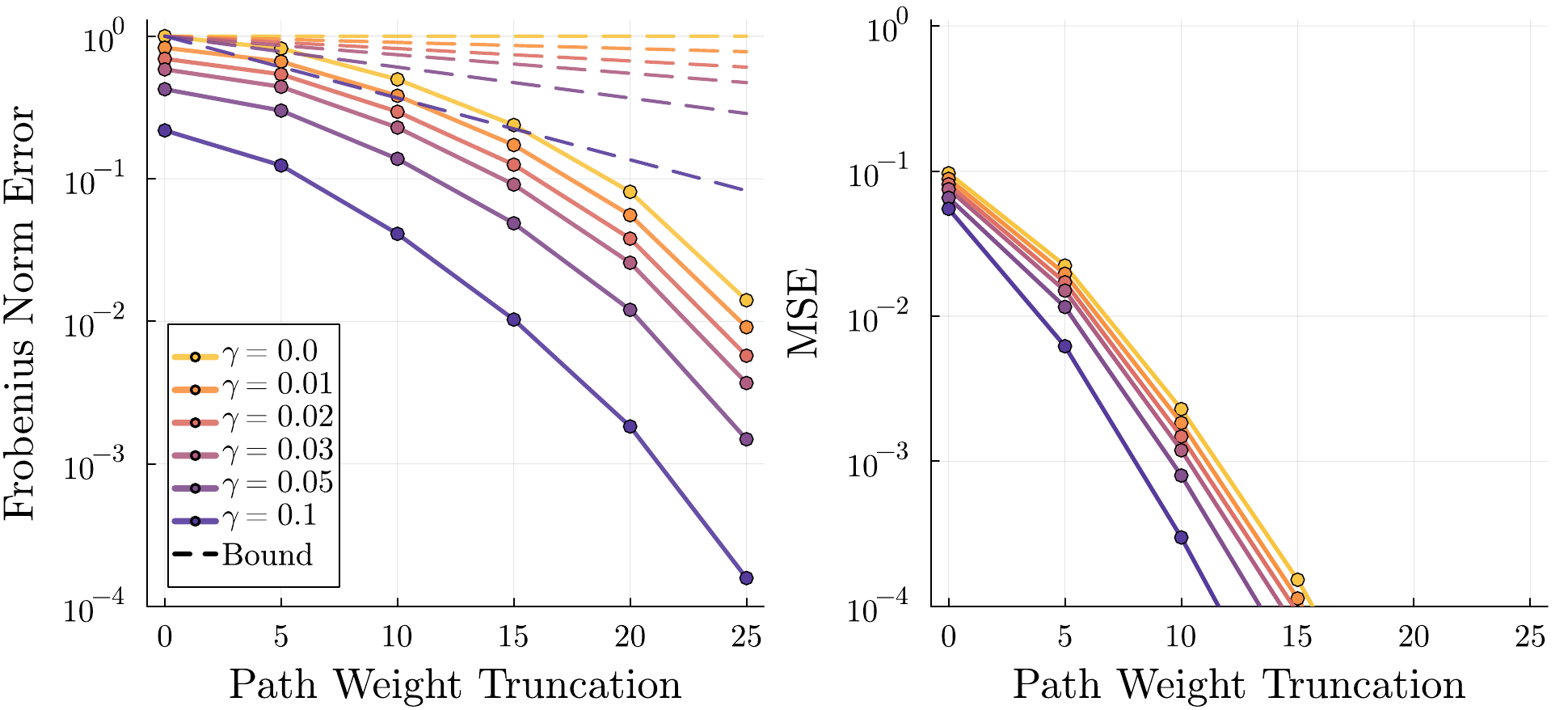}
    \caption{\textbf{Error of simulating circuits with amplitude damping noise}. a) Frobenius norm error and b) mean square error (MSE) for different path-weight truncation values and noise strengths in a periodic $6\times6$ lattice of qubits. The observable is a Pauli-Z operator in the middle of the lattice with a circuit consisting of RX, RZ single-qubit rotations and RZZ entangling gates and in b) the initial state is the all zero state. The theoretical bound, established in Thm.\ \ref{thm:mse-core}, is calculated via $(1-\gamma + \gamma^2)^k$, where $k$ is the path-weight truncation order and $1-\gamma + \gamma^2$ is the squared contraction coefficient computed in Example\ \ref{obs:coeff-noise}.}
    \label{fig:ampdamp_error}
\end{figure}

\noindent \textbf{Numerical implementation.} We complement our theoretical analysis with precise numerical estimates of the average simulation error for a Hamiltonian variational ansatz~\cite{wecker2015progress} and a large-scale simulation of noisy quantum dynamics. All numerical results were collected with the \href{https://github.com/MSRudolph/PauliPropagation.jl}{PauliPropagation.jl} package.

As shown in Figs.~\ref{fig:ampdamp_error} and~\ref{fig:dephasing_error}, the average simulation error encountered in practice is often significantly smaller than predicted by our bounds, demonstrating that Pauli-path methods are very well-suited for simulating typical noisy circuits. 
Both simulations treat the same system, but with varying degrees of amplitude damping noise (Fig.~\ref{fig:ampdamp_error}) or dephasing noise (Fig.~\ref{fig:dephasing_error}). The quantum circuits consist of parametrized RX and RZ Pauli rotation gates per qubit followed by parametrized RZZ Pauli rotation gates on a periodic $6\times6$ latice on 36 qubits. This corresponds to an HVA ansatz of the so-called tilted transverse-field Ising model. The observable here is a Pauli-Z operator in the middle of the lattice, and the initial state is the all-zero state.  
We emphasize that this circuit ensemble would not satisfy the stronger assumptions of other works, such as Refs.~\cite{mele2024noise, angrisani2024classically}, which require the presence of local 2-designs. 
Moreover, we note that simulating deep average-case circuits with amplitude damping noise is arguably more interesting than with noiseless or depolarizing noise~\cite{mcclean2018barren, angrisani2024classically, wang2020noise}, due to the fact that such circuits can experience `absence of barren plateaus'~\cite{mele2024noise}, highlighting the need for non-trivial classical algorithms to tackle such cases. In Appendix~\ref{sec:addnumer}, we further elaborate on this and demonstrate the absence of barren plateaus for our circuit ensemble, which does not meet the stronger assumptions used in Refs.~\cite{mele2024noise}.

Our bounds on the Frobenius norm error of the backpropagated observable as a function of the path-weight truncation order are well satisfied (see panels (a)). This norm also indirectly bounds the error of expectation values with random initial states. We show, however, that the mean square error (MSE) of expectation values with the ubiquitous all-zero initial state is orders of magnitude lower at the same path-weight truncation order (see panels (b)). These error estimates were obtained via the Monte Carlo approach provided in Theorem~\ref{thm:num-estimate} in the Appendices. Overall, we conclude that the simulation error tends to decay exponentially with the path-weight truncation order, and that circuits affected by stronger noise are continually easier to simulate on average.

\begin{figure}
    \centering
    \includegraphics[width=0.6\linewidth]{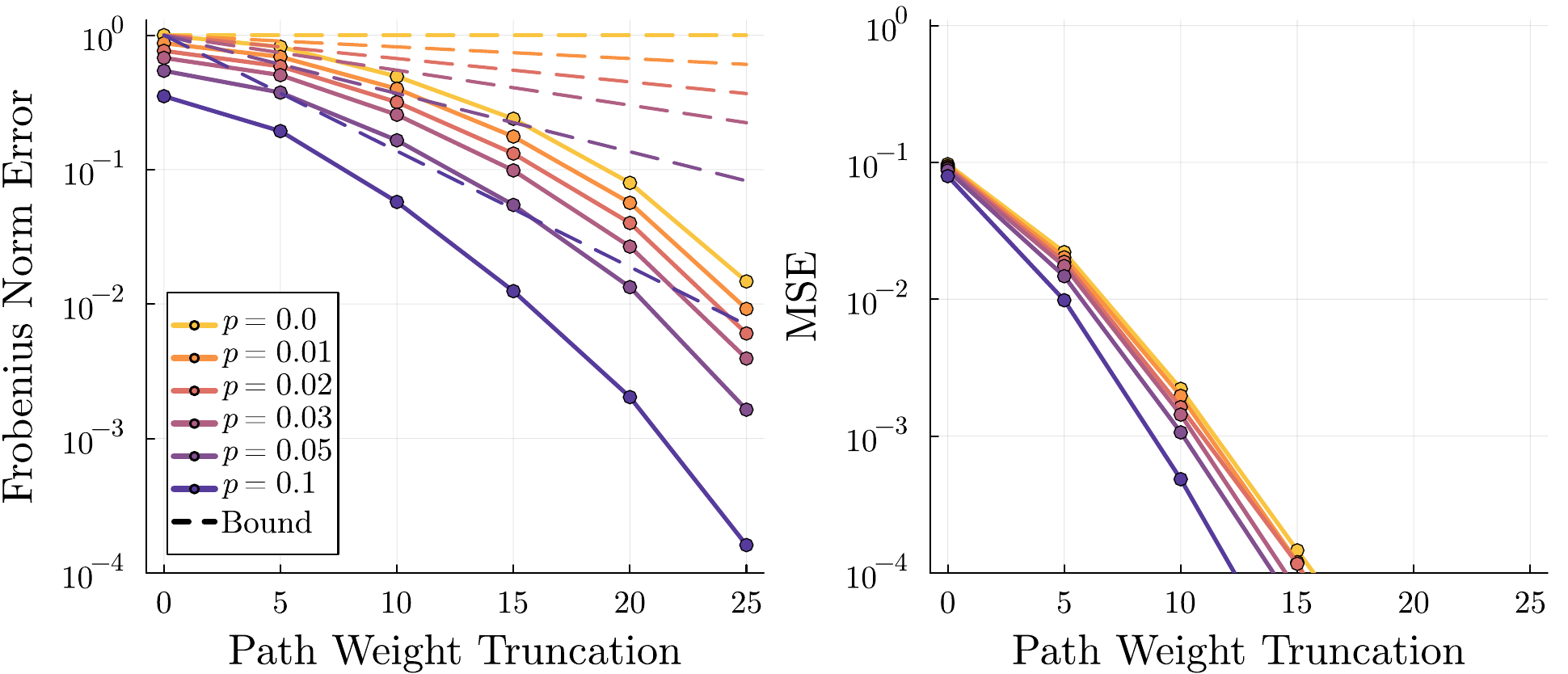}
    \caption{\textbf{Error of simulating circuits with dephasing noise}.  a) Frobenius norm error and b) mean square error (MSE) for the same 36-qubit simulation as in Fig.~\ref{fig:ampdamp_error}. The theoretical bound, established in Thm.\ \ref{thm:mse-core}, is calculated via $\left(\frac{1+(1-2p)^2}{2}\right)^k$, where $k$ is the path-weight truncation order and $\frac{1+(1-2p)^2}{2}$ is the mean squared contraction coefficient computed in Example\ \ref{obs:coeff-noise}.}
    \label{fig:dephasing_error}
\end{figure}

To unquestionably go beyond our theoretical guarantees, and to position Pauli propagation as a key tool for simulating quantum systems, we showcase a large-scale simulation of noisy real-time dynamics.
As with most numerical simulations that go beyond brute-force computation, it becomes difficult to judge the reliability of the results. 
To address this we start by using a system from Ref.~\cite{beguvsic2024real}, which compares the converging Pauli propagation results to iPEPS simulation data. 
Namely, we consider an $11\times11$ square qubit lattice with a rotated transverse-field Ising Hamiltonian,
\begin{equation}
    H = - J \sum_{\langle i,j\rangle}  X_iX_j -h \sum_i Z_i\,,
\end{equation}
where the XX interaction terms with strength $J$ act on neighbors on the 2D lattice, and $h$ is the local field strength. Importantly, we choose $h=1$ and $J=3.004438$, which puts the system at a quantum critical point~\cite{blote2002cluster}. The quantum circuit is then constructed as the second-order Trotterization of this Hamiltonian with time step $dt=0.04$ and up to 23 Trotter steps. The initial state is the all-zero state, and we estimate the expectation value of a Pauli-Z operator in the middle of the 2D lattice as a function of time. 

Ref.~\cite{beguvsic2024real} truncates propagating Pauli operators with more than 5 X or Y Paulis (this is not Pauli weight, which also counts Z Paulis) and Pauli operators with small coefficients below a magnitude of $2^{-18}$ up to $2^{-23}$. They then use their comparison with an iPEPS simulation to argue that these truncations are sufficient to ensure convergence. We here employ the same XY-Pauli truncation with the most challenging coefficient truncation threshold of $2^{-23}$, which brings the noise-free system close to apparent convergence with more than 1TB of memory usage. Then, we insert local amplitude damping noise into the quantum circuit and simulate the evolution with path-weight truncation on a laptop. 

To probe whether our substantially less resource-intensive noisy simulations are reliable we continually loosen the algorithmic truncations and observe convergence to expectation values. 
Based on the results with path-weight truncation values of 20 (dotted) and 25 (dashed) in Fig.~\ref{fig:dynamics}, which diverge at time $t=0.4$ and $t=0.6$, respectively, we believe that, given the exponential convergence present in Fig.~\ref{fig:ampdamp_error}, the simulation of the least-noisy system with $\gamma=0.05$ with path-weight truncation of 30 can be trusted to approximately time $t=0.8$. Our peak memory usage was around 12GB, and the full simulation took approximately one hour. However, systems affected by more noise ($\gamma=0.1$ and $0.2$) appear to require lower truncation values and are likely more accurate until the final time of $t=0.92$. 

We stress that this setup strongly deviates from the theoretical assumptions needed to derive our bounds. Namely, we are considering a highly structured circuit with correlated angles. So far, no theoretical accuracy guarantees have been given in this regime. But we show that for systems of interest, even at a quantum critical point, Pauli propagation equipped with suitable truncations can provide accurate results.

\begin{figure}
    \centering
    \includegraphics[width=0.4\linewidth]{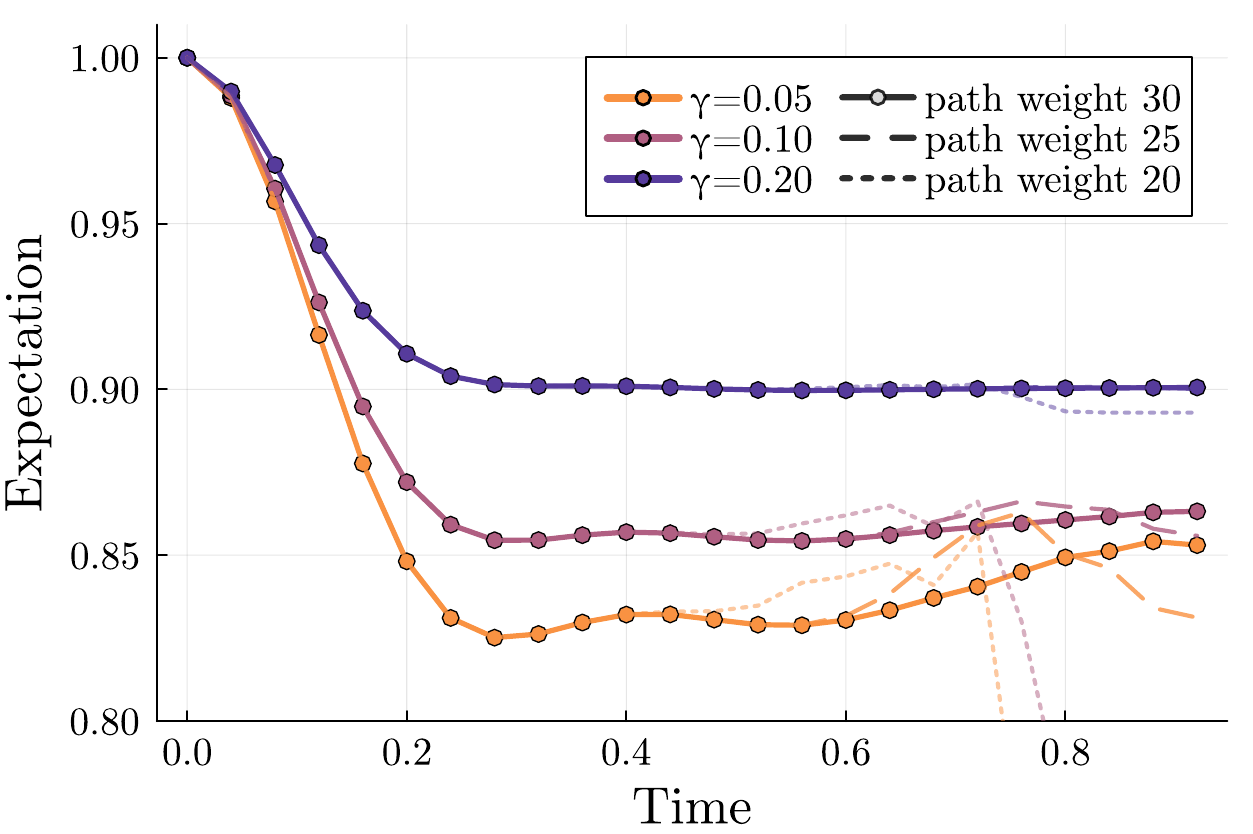}
    \caption{\textbf{Simulation of noisy real-time quantum dynamics}. We simulate the expectation value of a Pauli-Z observable in the middle of an $11\times11$ arrangement of 121 qubits evolving under a rotated transverse-field Ising model at a quantum critical point. In addition to the path-weight truncations 20, 25 and 30, we truncate small coefficients below $2^{-23}$ and propagating Pauli operators with more than 5 X or Y Paulis. This setup was used in Ref.~\cite{beguvsic2024real} to accurately simulate the noise-free case with significantly more computational resources.}
    \label{fig:dynamics}
\end{figure}

\section{Methods}\label{sec:methods}

In this section, we outline the primary methods and proof strategies of this work. \\

\noindent\textbf{Notation.} In order to illustrate our methods and proof ideas, we first introduce some essential notation.  Let $\calP_n \coloneqq \{I,X,Y,Z\}^{\otimes n}$ be the $n$-qubit Pauli basis.
Given a Pauli operator $P = P^{(1)}\otimes P^{(2)} \otimes \dots \otimes P^{(n)} \in \calP_n$, we define its support, denoted as $\mathrm{supp}(P)$,  as the subset of qubits upon which $P$ acts non-trivially, and we define its Pauli weight, denoted as $ \abs{P} $ as the cardinality of its support or, equivalently, as the number of non-identity single-qubit Pauli operators occurring in the tensor decomposition of $P$.
Given an observable $O = \sum_{P\in\calP_n} a_P P $, we define its squared normalized Frobenius norm as
\begin{align}
    \norm{O}^2_{\mathrm{F}}\coloneqq {\sum_{P\in\calP_n} a_P^2}.
\end{align}

\smallskip

\noindent\textbf{Pauli-path analysis.} Our algorithm and guarantees draw heavily on a Pauli-path based analysis. We start by rewriting the noisy circuit model in Eq.~\eqref{eq:noisy-circuit-main} as follows
\begin{align}
    \calC\coloneqq &\underbrace{\calV^{\mathrm{single}} \circ \calN^{\otimes n} \circ \calU_{L}}_{\calC_L} \circ \underbrace{\calN^{\otimes n} \circ \calU_{L-1}}_{\calC_{L-1}} \circ \dots \circ \underbrace{\calN^{\otimes n} \circ \calU_{1}}_{\calC_1}
     = \calC_L \circ \calC_{L-1} \circ \dots \circ \calC_1.
\end{align}
In what follows we trade generality for conciseness by assuming that the circuit layer distributions are invariant under random single-qubit Clifford gates. We emphasize that our results extend to more general circuit ensembles, which require a more detailed and nuanced analysis, as elaborated in the Appendices.

Our technique consists in computing an approximate Heisenberg evolution of the observable via a truncated Pauli-path summation, analogous to similar approaches exploited for simulating noisy and noiseless circuits in the previous literature.
In particular, we truncate Pauli paths according to their path weight (also referred as Hamming weight in the prior literature), thus our estimator coincides with that employed in Refs.~\cite{aharonov2022polynomial, schuster2024polynomial, gonzalez2024pauli}.

As the Pauli operators forms a basis for the Hermitian operators, we can express the Heisenberg-evolved observable $\calC^\dag(O)$ as a sum across Pauli paths as follows
\begin{align}
    \calC^\dag(O)  = &\sum_{P_0, P_1,\dots, P_L \in \calP_n} \llangle O| P_{L} \rrangle    \llangle P_L | \widehat{\calC}_L | P_{L-1} \rrangle \llangle P_{L-1} | \widehat{\calC}_{L-1} | P_{L-2} \rrangle \dots  \llangle P_1 | \widehat{\calC}_{1} | P_{0} \rrangle P_0,
    \label{eq:heis-pp}
\end{align}
where we employed the normalized vectorized notation $\llangle O| P_{L} \rrangle  \coloneqq \frac{1}{2^n}{\Tr[OP_L]}$, $\llangle P_j | \widehat{\calC}_j | P_{j-1} \rrangle \coloneqq\frac{1}{2^n}\Tr[P_j \calC_j(P_{j-1})]$.
We can also rearrange Eq.~\eqref{eq:heis-pp} as follows by introducing a more compact notation:
\begin{align}
     \calC^\dag(O) \coloneqq \sum_{\gamma = (P_0, P_1,\dots, P_L) \in \calP_n^{L+1}} \llangle O| P_{L} \rrangle  \, \Phi_\gamma(\calC) P_0,
\end{align}
where $\gamma = (P_0, P_1,\dots, P_L) $ denotes a Pauli path and $\Phi_\gamma(\calC) \coloneqq   \llangle P_L | \widehat{\calC}_L | P_{L-1} \rrangle  \llangle P_{L-1} | \widehat{\calC}_{L-1} | P_{L-2} \rrangle \dots \llangle P_1 | \widehat{\calC}_{1} | P_{0} \rrangle$
denotes the associated Fourier coefficient.

As the above sum involves exponentially many Pauli paths, in order to keep a manageable runtime we consider only those with a small path weight $\abs{\gamma}$, which we define as follows
\begin{align}
    \abs{\gamma} \coloneqq \sum_{j=1}^L \abs{P_j}.
\end{align}
 Thus, given a truncation order $k > 0$, we consider the following weight-truncated estimator:
\begin{align}\label{eq:truncpath}
    \calC_k^\dag(O) \coloneqq \sum_{\substack{\gamma = (P_0, P_1,\dots, P_L) \in \calP_n^{L+1} \\ \abs{\gamma}<k}} \llangle O| P_{L} \rrangle  \, \Phi_\gamma(\calC) P_0.
\end{align}
Thus, we approximate the expectation value $ \Tr[\calC^\dag(O)\rho]$ by computing the overlap between $\calC_k^\dag(O)$ and $\rho$, i.e., $\Tr[\calC_k^\dag(O)\rho]$.

\smallskip

\noindent{\textbf{The Pauli propagation algorithm.}}
The approximate Heisenberg evolved observable $\calC_k^\dag(O)$ can be computed with an iterative process.
Assume without loss of generality that $O$ consists in a single Pauli term with weight less than $k$. If $O$ is a linear combination of multiple Pauli terms, we will compute the contribution of each term separately. Moreover, Pauli terms with weight at least $k$ can be discarded, as they will form Pauli path with weight at least $k$.  
We start by computing $\calC_L^\dag(O)$ in the Pauli basis:
\begin{itemize}
    \item If $\calC_L$ contains only Clifford unitaries and Pauli noise channels, then $\calC_L^\dag(O)$ is also a single Pauli operator.
    \item In general, $\calC_L^\dag(O)$ can be expressed as a linear combination of Pauli operators. In this case, we say that the layer $\calC_L$ \emph{splits} $O$ in different Pauli paths. This is the case if $\calC_L$ contains non-Clifford unitaries or non-unital noise channels. 
\end{itemize}
For each Pauli path  obtained as described above, we can then apply the adjoint channel $\calC_{L-1}^\dag(\cdot)$ and proceed backwards until $\calC_1^\dag(\cdot)$. Thus, after each iteration the Pauli paths may split again, forming a tree-like structure as shown in Fig.~\ref{fig:schematic}c. 

Computing all the Pauli paths would require a runtime exponential in the number of non-Clifford unitaries and non-unital noise channels. Therefore, we combine such iterative method with a carefully chosen \emph{truncation rule}, which consists in discarding all the Pauli paths with path weight larger than a cutoff $k$, as in Eq.~\eqref{eq:truncpath}, ensuring that the number of computed Pauli paths remains tractable. 
In particular, for an observable composed of polynomially many Pauli terms and $k$ scaling logarithmically in system size, the above procedure requires a polynomial runtime.
This truncation rule has also been employed in Ref.~\cite{aharonov2022polynomial, schuster2024polynomial, gonzalez2024pauli} and bears similarities with the strategies adopted in Refs.~\cite{fontana2023classical, shao2023simulating, rudolph2023classical}.

If we further assume that the noise is unital, then the approximate Heisenberg evolved observable $\calC_k^\dag(O)$ can also be computed with an alternative strategy proposed in Ref.\ \cite{aharonov2022polynomial}, which yields a polynomial runtime provided that $k$ scales logarithmically in system size.

\smallskip

\noindent\textbf{Intuitive explanation of theoretical guarantees.} The widespread success of Pauli-path methods under depolarizing noise can be attributed to a key feature: depolarizing noise has a remarkably simple representation in the Pauli basis. For a Pauli operator \( P \in \mathcal{P}_n \), its evolution under the local depolarizing channel \( \mathcal{N}_p^{\mathrm{(depo)}\otimes n} \) with noise rate \( p \) is given by:  
\begin{align}
     \calN_p^{\mathrm{(depo)}\otimes n}(P) = \calN_p^{\mathrm{\dag(depo)}\otimes n}(P) = (1-p)^{|P|} P,
\end{align}
where the first identity follows from the fact that the depolarizing channel and its adjoint are equal.
Therefore, the Frobenius norm of $P$ is damped by a factor $(1-p)^\abs{P}$, as we have that $\norm{\mathcal{N}_p^{\mathrm{(depo)}\dagger \otimes n}(P)}_{\mathrm{F}} = (1-p)^{\abs{P}}$.
Moreover, the Frobenius norm of an observable $O = \sum_{P\in\calP_n} a_P P$ takes the following form after the application of the depolarizing channel
\begin{align}
    \norm{\calN_p^{\dagger(\mathrm{depo})\otimes n}(O)}_\mathrm{F}^2 = \sum_{P\in\calP_n} a_P^2 (1-p)^{2\abs{P}} \label{eq:dep-frob}.
\end{align}

Non-unital noise channels, in stark contrast with the behavior of depolarizing noise, can increase the Frobenius norm of observables under Heisenberg evolution. However, as we prove in Lemma\ \ref{lem:norm-contr-scr}, when averaged over random gates, any incoherent noise channel contracts the Frobenius norm of Pauli operators under Heisenberg evolution, as we have
\begin{align}
    \bbE_{V\sim\calD} \norm{\calN^{\dag\otimes n}(V^\dag P V)}^2_\mathrm{F} \leq \left(\chi^2_\calD(\calN)\right)^{\abs{P}}
\end{align}
where $\chi^2_\calD(\calN)\coloneqq\frac{\norm{\bold{D}}_2^2  + \norm{\bold{t}}_2^2}{3}$ is the mean squared contraction coefficient of $\calN$ with respect to the distribution $\calD$. As this coefficient is strictly smaller than one, the Frobenius norm of Pauli operators is damped exponentially with respect to the Pauli weight $\abs{P}$, on average over the random unitaries.
Similarly, given an observable $O = \sum_{P\in\calP_n} a_P P$, we have that
\begin{align}
    \bbE_{V\sim\calD} \norm{\calN^{\dag\otimes n}(V^\dag O V)}^2_\mathrm{F} \leq \sum_{P\in\calP_n} a^2_P \left(\chi^2_\calD(\calN)\right)^{\abs{P}},
\end{align}
which also matches the effect of local depolarizing noise of Eq.\ \eqref{eq:dep-frob}.

\smallskip

Given the striking resemblance in Heisenberg picture between the action of depolarizing noise and the \emph{average} action of arbitrary noise, it is convenient to rewrite the channel $\calN$ as follows:
\begin{align}
    \calN \coloneqq \calN^{\mathrm{(depo)}}_{p} \circ \tilde\calN_p,
\end{align}
where $\tilde\calN_p$ is a suitable (non-physical) linear map and $p$ is the \emph{effective depolarizing rate} of the channel $\calN$:
\begin{align}
   p \coloneqq 1- \chi_\calD(\calN). 
\end{align}
This strategy is inspired by Ref.\ \cite{schuster2024polynomial}, which proposed a similar decomposition for noisy circuits affected by spontaneous emissions occurring in random directions.

Having established this correspondence between depolarizing noise and arbitrary noise averaged over random gates, we can conduct a Pauli-path analysis akin to that introduced in Ref.\ \cite{aharonov2022polynomial}. As for the depolarizing case, we find that the contribution of different Pauli-paths is exponentially suppressed with respect to their path-weight, which allows us to show that
\begin{align}
    \bbE \,
    \abs{\Tr[(\calC_k^\dag(O) - \calC^\dag(O))\rho]} \in \exp(-\Omega\left(k\right)).
\end{align}
Thus, the average error over the random choice of the circuit can be exponentially suppressed by increasing the truncation order $k$.
Moreover, we also demonstrate that the truncated observable $\calC_k^\dag(O)$ can be computed in $\poly(n)$-time provided that $k \in \calO(\log(n))$ and the observable $O$ is a linear combination of at most $\poly(n)$ Pauli operators.

\section{Open problems}

In this paper we have conducted a systematic study of random quantum circuits interspersed by arbitrary incoherent noise. 
Taken together, our results provide a robust evidence that expectation values of typical noisy circuits can be efficiently estimated classically with Pauli-path methods. 
Nevertheless, several questions remain unaddressed by our results. Here, we outline a series of open problems for future investigation.

\begin{itemize}
    \item \textbf{Estimating ``dense'' Hamiltonians under non-unital noise.}
    Our main result, Theorem\ \ref{thm:non-u-inf}, guarantees that the noisy expectation values of an observable $H$ can be estimated in polynomial time with precision $\poly(n^{-1}) \norm{H}_{\mathrm{F}}$ on most circuits, provided that $H$ contains \emph{polynomially many} Pauli terms. However, this constraint excludes an interesting class of observables representing dense Hamiltonians with long-range interactions.
    For instance, consider the observable
    \begin{align}
        H = \sum_{ \substack{P \in \{I,Z\}^{\otimes n} \\ \abs{P}\leq \ell}} P.
    \end{align}
    If $\ell \in \omega(1)$, then $H$ contains $n^{\omega(1)}$ Pauli terms, which makes the runtime of our algorithm super-polynomial.
    \item \textbf{Classically sampling under arbitrary noise.}
    In their seminal work, the authors of Ref.\ \cite{aharonov2022polynomial} provided a polynomial-time classical algorithm for sampling from the distribution of random circuits under depolarizing noise. 
    It is natural to consider whether this result can be extended to circuits subject to arbitrary noise. Notably, non-unital noise introduces several challenges. For instance, the algorithms proposed in Ref.\ \cite{aharonov2022polynomial} estimate the expectation values of projectors using a truncated path summation, with a truncation order of $\Theta(\log(n))$. This approach creates a scenario analogous to that encountered with dense Hamiltonians, as described earlier. Consequently, the algorithms discussed in this paper would require quasi-polynomial time for such computations.
    On the other hand, the proof technique of Ref.\ \cite{aharonov2022polynomial} requires the ensemble of noiseless quantum circuit to exhibit anti-concentration. This poses an additional challenge, as random circuits with non-unital noise do not exhibit anti-concentration~\cite{fefferman2023effect}.
    
    \item \textbf{Circuits with correlated parameters.} While our results apply to a broad class of random quantum circuits, we have assumed that all circuit layers are sampled independently. This assumption, though common in the literature on random quantum circuits, does not account for scenarios where different layers are correlated. Such correlations arise, for example, in Trotterized circuits for simulating Hamiltonian dynamics and in circuits used for variational quantum algorithms~\cite{cerezo2020variationalreview}. Although variational circuits are often initialized with random parameters, these parameters are updated during training, breaking the independence assumption. Investigating circuits with correlated parameters presents significant theoretical challenges. However, numerical studies such as the one in Fig.~\ref{fig:dynamics}  could offer valuable insights into their classical simulability.  

    \item \textbf{Alternative noise models.} Despite its generality, our noise models also comes with its limitations.
    Namely, we are restricted to \textit{local} incoherent noise models that are independent of the particular gate applied. However, realistic quantum hardware is also subject to coherent errors and errors with long range correlations (such as those arising from cross talk~\cite{zhou2023quantum, tuziemski2023efficient}). The classical simulability of such noisy circuits is an important direction for future research. 
\end{itemize}

\section{Code Availability}
The numerical simulations in this work were performed with the open-source \href{https://github.com/MSRudolph/PauliPropagation.jl}{PauliPropagation.jl} package. The extensions necessary to employ path-weight truncation will soon be publicly available, but can be shared upon request.

\section{Acknowledgments}
The authors thank Lennart Bittel, Sumeet Khatri, Victor Martinez, Thomas Schuster, Daniel Stilck França and Chu Zhao for valuable discussions and feedback.
AA and ZH acknowledge support from the Sandoz Family Foundation-Monique de Meuron program for Academic Promotion.
AAM acknowledges support by the German Federal Ministry for Education and Research (BMBF) under the project FermiQP. AAM was also supported by the U.S. DOE through a quantum computing program sponsored by the Los Alamos National Laboratory (LANL) Information Science \& Technology Institute and by the Laboratory Directed Research and Development (LDRD) program of LANL under project number 20230049DR. MSR acknowledges funding from the 2024 Google PhD Fellowship. MC acknowledges support by LANL ASC Beyond Moore’s Law project and by the U.S. Department of Energy, Office of Science, Office of Advanced Scientific Computing Research through the Accelerated Research in Quantum Computing Program MACH-Q project. MR and ZH acknowledge support of the NCCR MARVEL, a National Centre of Competence in Research, funded by the Swiss National Science Foundation (grant number 205602).

\bibliography{quantum}

 \clearpage
 \newpage
 
\renewcommand\partname{} 
\appendix
{\huge Appendices}
\part{}
\parttoc 


\section{Preliminaries}

\subsection{Notation}
We briefly introduce the notations employed in this paper. Given a positive integer $n$, we denote the set $[n] \coloneqq \{1,2,\dots, n\}$. Given a distribution $\calD$ over the set $\mathcal{X}$ and a measurable function $F: \mathcal{X} \rightarrow \mathbb{R}$, we denote the expectation value of $F(X)$ for $X$ randomly sampled from $\calD$ as $\bbE_{X\sim\calD} [F(X)]$.
Given two distributions $\calD_1, \calD_2$, we denote by $\calD_1 \otimes \calD_2$ the derived distribution obtained by sampling independently $X_1$ from $\calD_1$ and $X_2$ from $\calD_2$, and outputting $X_1 \otimes X_2$.    

\smallskip

\noindent\textbf{Linear operators.} Let \(\mathcal{H}_n\) denote the Hilbert space for \(n\)-qubits, and let \(\mathcal{B}(\mathcal{H}_n)\) represent the space of linear operators acting on \(\mathcal{H}_n\). We introduce the \emph{vectorization} map \(\mathrm{vec} : \mathcal{B}(\mathcal{H}_n) \to \mathcal{H}_n^{\otimes 2}\), which is defined as follows: for any \(i,j \in 
\{0,1\}^n\), we set \(\mathrm{vec}(\ketbra{i}{j}) \coloneqq \ket{i} \otimes \ket{j}\). 
By linearity, if \(A \in \mathcal{B}(\mathcal{H}_n)\) is expressed as \(A = \sum_{i,j\in\{0,1\}^n} A_{i,j} \ketbra{i}{j}\), then the vectorization of \(A\) is given by \(\mathrm{vec}(A) = \sum_{i,j\in\{0,1\}^n} A_{i,j} \ket{i} \otimes \ket{j}\).
We also adopt notation \(\kket{A} \coloneqq \frac{1}{\sqrt{2^n}}\mathrm{vec}(A)\), which includes a normalization factor. Notably, the canonical inner product between two normalized vectorized operators \(A\) and \(B\) is simply their normalized Hilbert-Schmidt inner product: \(\bbrakket{A}{B} = \frac{1}{2^n}\Tr[A^\dagger B]\), where \(\bbrakket{A}{B} \coloneqq \frac{1}{2^n}\mathrm{vec}(A)^\dagger \mathrm{vec}(B)\).
The following \emph{ABC-rule} is easily verified:
\begin{align}
    \kket{ABC} = A \otimes C^T \kket{B},
    \label{eq:ABCtrick}
\end{align}
for all \(A, B, C \in \mathcal{B}(\mathcal{H}_n)\).

\smallskip

\noindent\textbf{Linear maps.} We denote by $\calL_{n\rightarrow n}$ the set of all linear maps of the form \(\Phi : \mathcal{B}(\mathcal{H}_n) \to \mathcal{B}(\mathcal{H}_n)\).
Quantum channels are linear maps which are also completely positive and trace-preserving. Given $\Phi \in \calL_{n\rightarrow n}$, it is important to note that the map \(\kket{\Phi(X)}\) is linear in \(\kket{X}\) for all \(\kket{X} \in \mathcal{H}_n^{\otimes 2}\). This linearity guarantees the existence of a matrix \(\widehat{\Phi} \in \mathcal{B}(\mathcal{H}_n^{\otimes 2})\), that represents this linear transformation. Hence, the action of \(\Phi\) on an operator \(X\) can be written as:
\begin{align}
\label{eq:phikraus}
\kket{\Phi(X)} = \widehat{\Phi} \kket{X} \quad \text{for all } X \in \mathcal{B}(\mathcal{H}_n).
\end{align}
Furthermore, for every linear map \(\Phi : \mathcal{B}(\mathcal{H}_n) \to \mathcal{B}(\mathcal{H}_n)\), there exists~\cite{watrous2018thetheory} a set of matrices \(\{A_i\}_{i=1}^{4^{n}}\), \(\{B_i\}_{i=1}^{4^{n}} \in \mathcal{B}(\mathcal{H}_n)\) such that \(\Phi(X) = \sum_{i=1}^{4^{n}} A_i X B_i^\dagger\).
Using the \emph{ABC-rule} \eqref{eq:ABCtrick}, we can rewrite $\Phi(X)=\sum^{4^{n}}_{i=1}A_i X B^\dagger_i$ as $\kket{\Phi(X)}=\sum^{4^{n}}_{i=1} A_i \otimes B^*_i \kket{X}$, which implies $\widehat{\Phi}=\sum^{4^{n}}_{i=1} A_i \otimes B^*_i$, that we denote as the matrix form of the linear map.

We will always use the \emph{hat} symbol over a linear map ($\widehat{\Phi}$), to indicate the matrix form of the linear map. It is also easy to observe that the matrix form of composition of two linear maps is equal to the product of the associated matrix form: Let $\Phi_1,\Phi_2$ be two linear maps, then we have $\widehat{\Phi_2\circ \Phi_1}= \widehat{\Phi}_2\cdot \widehat{\Phi}_1$.

For every linear map \(\Phi : \mathcal{B}(\mathcal{H}_n) \to \mathcal{B}(\mathcal{H}_n)\), written as \(\Phi(X) = \sum_{i=1}^{4^{n}} A_i X B_i^\dagger\), its adjoint linear map (or Heisenberg representation) can be defined as \(\Phi^{\dagger}(X) \coloneqq \sum_{i=1}^{4^{n}} A^{\dag}_i X B_i\). The matrix form of the adjoint linear map $\Phi^{\dag}$ is equal to the adjoint of the matrix form of the linear map, i.e., $\widehat{\Phi^{\dag}}=\widehat{\Phi}^\dag$ .

\smallskip

\noindent\textbf{Pauli basis.} We define the Pauli basis as $\calP_n \coloneqq  \{I, X, Y, Z\}^{\otimes n}$. Since the Pauli basis forms an orthonormal Hermitian basis with respect to the normalized Hilbert-Schmidt inner product, we have for all $P,Q \in \calP_n$,
\begin{align}
    \llangle P | Q \rrangle = \frac{1}{2^n} \Tr[PQ] = \delta_{PQ}.
\end{align}
Given a linear map $\calC : \mathcal{B}(\mathcal{H}_n) \rightarrow \mathcal{B}(\mathcal{H}_n)$ we denote the transition amplitude associated to two Pauli operators $P,Q\in\calP_n$ as follows
\begin{align}
    \llangle P | \widehat{\mathcal{C}} | Q  \rrangle = \llangle P |  \calC(Q) \rrangle = \llangle  \calC^\dag(P) | Q \rrangle. 
\end{align}
then, note that $\calC$ can be completely described by the the associated transition amplitudes. For any Pauli operator $P\in\calP_n$ we can write
\begin{align}
    \calC(P) = \sum_{Q\in\calP_n}  \llangle Q | \widehat{\mathcal{C}} | P  \rrangle Q.
\end{align}
By linearity, for any operator $O\in\calB(\calH_n)$ we have
\begin{align}
    \calC(O) = \sum_{P,Q\in\calP_n} \llangle O | P \rrangle \llangle Q | \widehat{\mathcal{C}} | P  \rrangle Q.
\end{align}

Equivalently, the transition amplitudes $ \llangle Q | \widehat{\mathcal{C}} | P \rrangle$ can be expressed as a $4^n\times 4^n$ matrix, which is usually referred as the \emph{Pauli Transfer Matrix} (PTM) of the channel.

\smallskip

\noindent\textbf{Support and weight of Pauli operators.} Given a Pauli operator $P = P^{(1)}\otimes P^{(2)} \otimes \dots \otimes P^{(n)} \in \calP_n$, we define its support, denoted as $\mathrm{supp}(P)$,  as the subset of qubits upon which $P$ acts non-trivially, i.e.
\begin{align}
    \mathrm{supp}(P) \coloneqq \left\{ i \, | \, P^{(i)} \neq I \right\},
\end{align} and we use \(|P|\) to denote the Pauli weight of \(P\), which is defined as the number of non-identity single-qubit Pauli operators in the Pauli tensor decomposition of \(P\), i.e., $\abs{P} = \abs{\mathrm{supp}(P)}$.

\smallskip

\noindent\textbf{Matrix norms.} For a matrix \( A\in\mathcal{B}(\mathcal{H}_n) \), the \emph{Schatten \( p \)-norm} is defined as \( \|A\|_p \coloneqq \big(\Tr\big[(\sqrt{A^\dagger A})^p\big]\big)^{1/p} \), representing the \( p \)-norm of the matrix's singular values. Closely related to the Schatten \( p \)-norm are the trace norm (\(\|\cdot\|_1 \)) and the Hilbert-Schmidt norm (\( \|\cdot\|_2 \)).  Then, we recall that the infinity norm (\( \|\cdot\|_\infty \)) of a matrix corresponds to its largest singular value and can also be obtained as the limit of the Schatten \( p \)-norm when \( p \to \infty \).

We remind that an operator $O \in\mathcal{B}(\mathcal{H}_n)$ is Hermitian if and only if it can be expressed as a linear combination of Pauli operators with real coefficients, i.e.,
\begin{align}
    O = \sum_{P \in \calP_n} a_P P  \;\;\;\text{ where } a_P \in \mathbb{R} \text{ for all } P\in\calP_n.
\end{align}
Given a Hermitian operator $O$, we define its squared \emph{normalized} Frobenius norm $\norm{O}_\mathrm{F}^2 $ as
\begin{align}
    \norm{O}_\mathrm{F}^2 \coloneqq \frac{\norm{O}_2^2}{2^n}  = {\frac{1}{4^n} \sum_{P\in\calP_n} \Tr[OP]^2} = {\sum_{P\in\calP_n}a_P^2},
\end{align}
which is also referred as normalized Hilbert-Schmidt norm or squared Pauli 2-norm.
We will also use the following useful relation:
\begin{align}
    \norm{O}^2_2 = \Tr[\mathbb{F}O^{\otimes 2}]
\end{align}
where $\mathbb{F}$ is the \emph{flip operator}, also known as SWAP operator, defined as
\begin{align}
     \mathbb{F} \coloneqq \sum_{i,j \in \{0,1\}^n} \ketbra{i,j}{j,i}.
\end{align}

\subsection{Ensembles of linear maps and states}
\label{sec:dist-maps}
In this section, we define several ensembles of linear maps that will serve as essential tools throughout this work. These ensembles enable us to analyze the limitations of noisy random circuits without relying on the global~\cite{quek2022exponentially}
or local~\cite{mele2024noise} unitary 2-design assumption commonly adopted in previous works on noisy random circuits.

We start by recalling the definition of the Haar measure, which formalizes the notion of uniform distribution over unitaries. 
For a more detailed explanation we refer to
Ref.~\cite{mele2023introduction} .
\begin{definition}[Haar measure and $t$-designs]
The \emph{Haar measure} $\mu_H$ is the (unique) probability distribution over the unitary group $\mathbb{U}(2^n)$ which is left and right-invariant, which means that for any integrable function $f$, we have
\begin{align}
    \bbE_{U\sim\mu_H} \left[f(U)\right]= \bbE_{U\sim\mu_H} \left[f(UV)\right]= \bbE_{U\sim\mu_H} \left[f(VU)\right],
\end{align}
for any $U,V \in \mathbb{U}(2^n)$.
Furthermore, a probability distribution $\calD$ over the unitary group $\mathbb{U}(d)$ is a unitary $t$-design if it matches the moments of the Haar measure up ot the $t$-th order, that is 
\begin{align}
    \bbE_{U\sim \calD} \left[U^{\otimes t} O U^{\dag \otimes t}\right] =  \bbE_{U\sim \mu_H} \left[U^{\otimes t} O U^{\dag \otimes t}\right],
\end{align}
for all linear operators $O\in\mathcal{B}(\calH_n)$. 
\end{definition}
Equivalently, in the vectorized notation, the previous definition reads as follows
\begin{align}
    \bbE_{U\sim \calD} \left[U^{\otimes t} \otimes U^{\emph*\otimes t}\right] =  \bbE_{U\sim \mu_H} \left[U^{\otimes t} \otimes U^{\emph*\otimes t}\right] .
\end{align}

We further provide some examples of unitary 1 and 2-designs.

\begin{example}[Unitary 1 and 2-designs]
\label{ex:1-2-des}
A distribution  $\calD$ over $\mathbb{U}(2^n)$ is a unitary 1-design if, for all linear operators $O \in \calB(\calH_n)$, it satisfies
\begin{align}
    \bbE_{U\sim \calD} \left[U O U^\emph\dag\right]=\bbE_{U\sim \mu_H} \left[U O U^\emph\dag\right] = \frac{I^{\otimes n}}{2^n}\Tr[O].
\end{align}
An example of unitary 1-design is the uniform distribution over the Pauli basis $\calP_n$, as we have
\begin{align}
    \frac{1}{4^n}\sum_{P\in\calP_n} [POP] = \frac{I^{\otimes n}}{2^n}\Tr[O].
\end{align}
The above identity is also commonly referred as \emph{Pauli twirling.}
It is also noteworthy that the Haar measure over the orthogonal group $\mathbb{O}(2^n)$, which we denote as $\mu_H(\mathbb{O}(2^n))$, also forms a unitary 1-design.

\medskip

A distribution  $\calD$ over $\mathbb{U}(2^n)$ is a unitary 2-design if, for all linear operators $O \in \calB(\calH_n^{\otimes 2})$, it satisfies
\begin{align}
    \bbE_{U\sim \calD} \left[U^{\otimes 2} O U^{\emph\dag\otimes 2}\right]=\bbE_{U\sim \mu_H} \left[U^{\otimes 2} O U^{\emph\dag\otimes 2}\right] = c_{\mathbb{I},O}\mathbb{I} + c_{\mathbb{F},O} \mathbb{F}.
\end{align}
    where
    \begin{align}
        c_{\mathbb{I},O}=\frac{\Tr\!\left(O\right)-2^{-n}\Tr\!\left(\mathbb{F}O\right)}{4^n-1}\quad \text{and} \quad c_{\mathbb{F},O}=\frac{\Tr\!\left(\mathbb{F}O\right)-2^{-n}\Tr\!\left(O\right)}{4^n-1}.
    \end{align}
An example of unitary 2-design is the uniform distribution over the Clifford group $\mathrm{Cl}(2^n)$.    
\end{example}

In the following, we define the class of \emph{locally unbiased} distributions over linear maps. In loose terms, a distribution belongs to this class if it remains invariant under single-qubit unitary 1-designs. We refer to these distributions as ``locally unbiased'' because the expectation value of a quantum state evolved under a unitary 1-design is the maximally mixed state. This state lies at the center of the Bloch sphere, thus it is not biased toward any specific direction.

\begin{definition}[Locally unbiased distribution]
\label{def:ls-supops}

A distribution $\calD$ over linear maps $\calL_{n\rightarrow n}$ is locally unbiased if it is invariant under right-multiplication of single-qubit unitary 1-designs up to the second moment, i.e., if there exist 
some distribution $\calD_1, \calD_2,\dots, \calD_n$ over $\mathbb{U}(2)$ such that (i) all $\calD_i$ are unitary 1-designs, and (ii) the following identity holds 
\begin{align}
    \bbE_{\calC\sim\calD} \left[\widehat{\calC^{\otimes 2}} 
    \right] = 
    \bbE_{\calC\sim\calD} \mathbb{E}_{V\sim \bigotimes_{i=1}^n \calD_i}
    \left[\widehat{\calC^{\otimes 2} }   \left( V^{\otimes 2} \otimes  V^{*\otimes 2}\right) \right].
\end{align}
\end{definition}

Locally unbiased distributions generalize the following notion of \emph{Pauli-invariance} which was previously proposed in Ref.~\cite{aharonov2022polynomial} (in the sense that Pauli-invariance implies local unbiasedness).

\begin{definition}[Pauli-invariant distribution]
A distribution $\calD$ over linear maps $\calL_{n\rightarrow n}$ is Pauli-invariant if it is invariant under right-multiplication of random Pauli operators up to the second moment, i.e., if 
\begin{align}
    \bbE_{\calC\sim\calD} \left[\widehat{\calC^{\otimes 2}}\right] = 
    \frac{1}{4^n}\sum_{P\in\calP_n} \bbE_{\calC\sim\calD}  \left[\widehat{\calC^{\otimes 2}} \left(P^{\otimes 2} \otimes P^{\emph*\otimes 2}\right)\right]\,.
\end{align}
\end{definition}

The following technical lemma plays a central role in the analysis of noisy random circuits. 

\begin{lemma}[Orthogonality]
\label{lem:ortho1}
Let $\calD$ be a distribution over linear maps $\calL_{n\rightarrow n}$. 
\begin{enumerate}
    \item If $\calD$ is locally unbiased, then for all $P,Q\in\calP_n$ such that $\mathrm{supp}(P)\neq \mathrm{supp}(Q)$, we have
    \begin{align}
        \bbE_{\calC \sim \calD} \left[{\calC}^{\otimes 2} (P\otimes Q) \right]=0.
    \end{align}
    In particular, for all observables $O$, we have
    \begin{align}
        \bbE_{\calC \sim \calD} \, \calC^{\otimes 2}(O^{\otimes 2}) = \sum_{A \subseteq [n]}  \bbE_{\calC \sim \calD} \left(\sum_{\substack{P\in\calP_n \\ \mathrm{supp}(P) = A}} \llangle O | P \rrangle \, \calC(P) \right)^{\otimes 2}. \label{eq:ortho-from-ls}
    \end{align}
    \item $\calD$ is Pauli-invariant if and only if, for all $P,Q\in\calP_n$ such that $P\neq Q$, we have
    \begin{align}
        \bbE_{\calC \sim \calD} \left[{\calC}^{\otimes 2} (P\otimes Q) \right]=0.
    \end{align}
    In particular, for all observables $O$, we have
    \begin{align}
        \bbE_{\calC \sim \calD} \, \calC^{\otimes 2}(O^{\otimes 2}) = \sum_{P \in \calP_n}  \llangle O | P \rrangle^2  \bbE_{\calC \sim \calD} \, \calC^{\otimes 2}(P^{\otimes 2}). \label{eq:ortho-from-pauli}
    \end{align}
\end{enumerate}
\end{lemma}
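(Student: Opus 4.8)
\textbf{Proof proposal for Lemma~\ref{lem:ortho1}.}

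The plan is to treat the two items in parallel, since both reduce to a single-qubit twirling computation tensored across the qubits. First I would use the matrix-form identity $\widehat{\calC^{\otimes 2}} = \widehat{\calC}^{\otimes 2}$ together with the ABC-rule to rewrite $\calC^{\otimes 2}(P \otimes Q)$ in vectorized form as $\widehat{\calC}^{\otimes 2}\,\kket{P}\otimes\kket{Q}$ (up to normalization bookkeeping), so that the invariance hypotheses can be applied directly. For item 2, Pauli-invariance says $\bbE_{\calC}[\widehat{\calC^{\otimes 2}}] = \frac{1}{4^n}\sum_{R\in\calP_n}\bbE_{\calC}[\widehat{\calC^{\otimes 2}}(R^{\otimes 2}\otimes R^{*\otimes 2})]$, and since $R$ is Hermitian ($R^* = R^T$, and in the vectorized $P\otimes Q$ picture the relevant object is $R P R \otimes R Q R$), inserting a random Pauli $R$ on the right multiplies $P\otimes Q$ by the sign $\chi_P(R)\chi_Q(R)$ where $\chi_P(R) = \pm 1$ records whether $P$ and $R$ commute or anticommute. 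Averaging over $R\in\calP_n$ gives $\frac{1}{4^n}\sum_R \chi_P(R)\chi_Q(R) = \delta_{PQ}$ by orthogonality of Pauli characters, which kills all terms with $P\neq Q$; this proves the forward direction, and the converse follows because applying the displayed vanishing condition to a generic input and expanding in the Pauli basis reconstructs exactly the Pauli-invariance identity. Equation~\eqref{eq:ortho-from-pauli} is then immediate: expand $O^{\otimes 2} = \sum_{P,Q}\llangle O|P\rrangle\llangle O|Q\rrangle\, P\otimes Q$, apply $\bbE_{\calC}\calC^{\otimes 2}$ termwise, and drop the off-diagonal terms.

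For item 1, the only change is that we twirl by single-qubit 1-designs $V = \bigotimes_i V_i$ rather than global Paulis. The key point is that a unitary 1-design on a single qubit, acting by conjugation, annihilates every non-identity Pauli in expectation: $\bbE_{V_i\sim\calD_i}[V_i^\dagger P^{(i)} V_i] = 0$ when $P^{(i)}\neq I$, since $\bbE[V_i^\dagger(\cdot)V_i] = \frac{I}{2}\Tr[\cdot]$ and traceless Paulis map to $0$. Now take $P,Q\in\calP_n$ with $\mathrm{supp}(P)\neq\mathrm{supp}(Q)$; then there is a qubit $i$ lying in the symmetric difference of the supports, so exactly one of $P^{(i)}, Q^{(i)}$ is the identity and the other is a non-identity Pauli. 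I would factor the single-qubit twirl on that qubit out of the product $\bbE_{V}[V^\dagger P V \otimes V^\dagger Q V]$: on qubit $i$ it contributes $\bbE_{V_i}[P^{(i)}\otimes Q^{(i)}]$ in the conjugation sense, which (since one factor is $I$ and the other is traceless) equals $\frac{I}{2}\Tr[\text{traceless}]\otimes(\text{something}) = 0$. Wait — more carefully, one writes $\bbE_{V_i}[V_i^\dagger P^{(i)} V_i \otimes V_i^\dagger Q^{(i)} V_i]$; if $P^{(i)}=I$ this is $I\otimes \bbE_{V_i}[V_i^\dagger Q^{(i)} V_i] = I\otimes 0 = 0$. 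So the whole expectation vanishes, and then $\bbE_{\calC}[\calC^{\otimes 2}(P\otimes Q)] = \bbE_{\calC}\bbE_{V}[\calC^{\otimes 2}(V^\dagger P V\otimes V^\dagger Q V)]$ (using the locally-unbiased invariance to insert the $V$-twirl) $= 0$. Equation~\eqref{eq:ortho-from-ls} then follows by expanding $O^{\otimes 2}$ in the Pauli basis, grouping the sum according to $\mathrm{supp}(P) = A$ for $A\subseteq[n]$, and discarding all cross terms between different supports by the vanishing just established.

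The main obstacle, such as it is, is bookkeeping rather than conceptual: one must be careful that the locally-unbiased hypothesis is stated for the \emph{full} tensor power $\widehat{\calC^{\otimes 2}}$ twirled by $V^{\otimes 2}\otimes V^{*\otimes 2}$, so to extract the single-qubit argument on qubit $i$ I need to commute the $V$-twirl through the tensor structure and recognize that $V^{\otimes 2}\otimes V^{*\otimes 2}$ acting on $\kket{P}\otimes\kket{Q}$ implements simultaneous conjugation $P\mapsto V^\dagger P V$, $Q\mapsto V^\dagger Q V$ in the operator picture — this is where the ABC-rule \eqref{eq:ABCtrick} and the identification $\kket{VXV^\dagger} = V\otimes V^* \kket{X}$ are used. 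Once that translation is in place, the single-qubit $1$-design property does all the work and the rest is routine. I would also note the subtlety that in item~1 we only get vanishing when the \emph{supports} differ (not merely $P\neq Q$), because two distinct Paulis with the same support can fail to be separated by any single-qubit $1$-design twirl — e.g.\ $X$ and $Y$ on one qubit are not distinguished by a $1$-design in this sense — which is precisely why item~2 needs the stronger Pauli-invariance hypothesis.
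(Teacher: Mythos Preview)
Your proposal is correct and follows essentially the same route as the paper's proof: for item~1 you insert the single-qubit $1$-design twirl guaranteed by the locally-unbiased hypothesis, isolate a qubit in the symmetric difference of $\mathrm{supp}(P)$ and $\mathrm{supp}(Q)$, and kill the expectation via $\bbE_{V_i}[V_i P^{(i)} V_i^\dag]=0$ for traceless $P^{(i)}$; for item~2 you insert the Pauli twirl and use character orthogonality (equivalent to the paper's computation $\sum_R(-1)^{\langle PQ,R\rangle}=0$ for $PQ\neq I$), with the converse and the expansion identities handled exactly as the paper does. The only cosmetic difference is that you phrase the Pauli case via characters $\chi_P(R)\chi_Q(R)$ while the paper writes out the commutator sign explicitly, and your converse for item~2 is sketched rather than spelled out; neither is a gap.
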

\begin{proof}
We start by proving the first statement. 
Recall that, by Definition~\ref{def:ls-supops}, we have 
\begin{align}
    \bbE_{\calC\sim\calD} \left[\widehat{\calC}^{\otimes 2} 
    \right] = 
    \bbE_{\calC\sim\calD} \mathbb{E}_{V \sim \bigotimes_{i=1}^n \calD_i}\left[\widehat{\calC}^{\otimes 2} \,  V^{\otimes 2} \otimes V^{*\otimes 2} 
    \right],
\end{align}
for some unitary 1-designs $\calD_1, \calD_2,\dots, \calD_n$.
Let $P, Q\in\calP_n$ be two Pauli operators which do not share the same support.
Consider their tensor decompositions $P=P_1\otimes P_2\otimes \dots \otimes P_n$ and $Q=Q_1\otimes Q_2\otimes \dots \otimes Q_n$.
As $\mathrm{supp}(P)\neq \mathrm{supp}(Q)$, there exists an index $j \in [n]$ such that $P_j \neq I$ and $Q_j = I$.
From the previous, we have that
\begin{align}
    \bbE_{V_j \sim \calD_j} \left[V_j^{\dag\otimes 2} (P_j \otimes Q_j) V_j^{\otimes 2}\right]
    = \bbE_{V_j \sim \calD_j}\left[ V_j^\dag P_j V_j\right] \otimes I = 0 \label{eq:1des-supp},
\end{align}
where we used the fact that $\calD_j$ is a unitary 1-design and unitaries are trace-preserving. Then, the desired result easily follows
\begin{align}
&\bbE_{\calC \sim \calD} \left[{\calC}^{\otimes 2} (P\otimes Q) \right]
   \\=&\bbE_{\calC \sim \calD}\left[\calC^{\otimes 2}\left(\mathbb{E}_{V \sim \bigotimes_{i=1}^n \calD_i}V^{\dag\otimes 2} (P\otimes Q)  V^{\otimes 2}\right)\right]
   \\=& \bbE_{\calC \sim \calD}\left[\calC^{\otimes 2} \left(\left(\bbE_{V_j \sim \calD_j} V_j^{\otimes 2} (P_j \otimes Q_j) V_j^{\dag\otimes 2}\right)
   \otimes \left(\mathbb{E}_{\substack{V_1 \sim \calD_1, \dots, V_{j-1} \sim \calD_{j-1},\\ V_{j+1} \sim \calD_{j+1}, \dots, V_n \sim \calD_n}}  \bigotimes_{i\neq j}\left( V_i^{\otimes 2} (P_i\otimes Q_i)  V_i^{\dag\otimes 2} \right)\right)\right)\right] = 0,
\end{align}    
where in the last step we plugged Eq.~\eqref{eq:1des-supp}.
Then Eq.~\eqref{eq:ortho-from-ls} readily follows
\begin{align}
    \bbE_{\calC\sim\calD} [\calC^{\otimes 2}(O^{\otimes 2})]  = & 
    \bbE_{\calC\sim\calD} \left(\calC\left( \sum_{P\in\calP_n} \llangle O | P \rrangle P\right)\right)^{\otimes 2}
     = \sum_{P,Q\in\calP_n} \llangle O | P \rrangle\llangle O | Q \rrangle \underbrace{\bbE_{\calC\sim\calD} [\calC^{\otimes 2} (P\otimes Q)]}_{\text{$=0$ if $\mathrm{supp}(P) \neq \mathrm{supp}(Q)$}}
    \\ = & \sum_{A \subseteq [n]} \left(\sum_{\substack{P\in\calP_n \\ \mathrm{supp}(P) = A}} \bbE_{\calC\sim\calD} \llangle O | P \rrangle \, \calC(P) \right)^{\otimes 2}. 
\end{align}

Regarding the second statement, Lemma 2 in Ref.~\cite{aharonov2022polynomial} established the result for the case where $\calD$ is a distribution over $\mathbb{U}(4)$. Here, we demonstrate that this argument can be naturally extended to linear maps $\calL_{n\rightarrow n}$. First, let us use the invariance under right-multiplication of random Pauli
\begin{align}
\mathbb{E}_{\calC\sim\mathcal{D}}\left[\calC^{\otimes 2}( P  \otimes Q )\right] = 
\frac{1}{4^n} \mathbb{E}_{ \calC\sim\mathcal{D}} 
\left[\calC^{\otimes 2}\left(\sum_{R\in \calP_n}R P R   \otimes   R  Q R \right) \right] .
\end{align}
From here we can see that it suffices to show that
\begin{align}
    \sum_{R\in\calP_n}R P R   \otimes   R  Q R =0\,.
\end{align}
Let $\langle P, Q \rangle :=1[P\text{ and }Q\text{ anticommute}]$, i.e., $\langle P, Q \rangle $ is the indicator function which equals $1$ if $\{P,Q\}=0$ and $0$ if $[P,Q]=0$. We have
\begin{align}
 \sum_{R\in\calP_n} RPR   \otimes   RQR  =  \sum_{R\in\calP_n}(-1)^{\langle P, R \rangle  + \langle Q, R \rangle }P\otimes Q = \sum_{R\in\calP_n}(-1)^{\langle PQ, R\rangle}P\otimes Q= 0,   
\end{align}
where the last line follows from the fact that $PQ$ is not identity, and therefore commutes with half Paulis and anticommutes with the other half.

We now prove the other direction, i.e., that $\mathbb{E}_{\calC\sim\mathcal{D}}\left[\calC^{\otimes 2}( P  \otimes Q )\right] = 0 $ implies the Pauli invariance property. 
Let $O$ be an arbitrary Hermitian operator. We have
\begin{align}
    \frac{1}{4^n}\sum_{R\in\calP_n}\bbE_{\calC\sim \calD} \, \calC^{\otimes 2}\left( R^{\otimes 2} O R^{\otimes 2}\right) =  &\frac{1}{4^n}\sum_{R\in\calP_n}\sum_{P\in\calP_{2n}} \llangle O | P\rrangle \underbrace{\bbE_{\calC\sim \calD}  \left[\calC^{\otimes 2}(R^{\otimes 2}P R^{\otimes 2})\right]}_{ =0 \text{ unless } P = Q^{\otimes 2} \text{ with } Q\in\calP_n}
    \\ = &\frac{1}{4^n}\sum_{Q, R\in\calP_n} \llangle O | Q^{\otimes 2}\rrangle {\bbE_{\calC\sim \calD}  \left[\calC^{\otimes 2}(Q^{\otimes 2})\right]} (-1)^{2\langle Q, R \rangle}
    \\ = &\bbE_{\calC\sim \calD} \, \calC^{\otimes 2}\left(  O \right) ,
\end{align}
where in the last step we used that $(-1)^{2\langle Q, R \rangle} = 1$. 
Equation~\eqref{eq:ortho-from-pauli} can also be proven analogously
\begin{align}
    \bbE_{\calC\sim\calD} [\calC^{\otimes 2}(O^{\otimes 2})]  = & 
    \bbE_{\calC\sim\calD} \left(\calC\left( \sum_{P\in\calP_n} \llangle O | P \rrangle P\right)\right)^{\otimes 2}
    \\ = &\sum_{P,Q\in\calP_n} \llangle O | P \rrangle\llangle O | Q \rrangle \underbrace{\bbE_{\calC\sim\calD} [\calC^{\otimes 2} (P\otimes Q)]}_{\text{$=0$ if $P \neq Q$}}
    \\ = & \sum_{P \in \calP_n }  \bbE_{\calC\sim\calD} \llangle O | P \rrangle^2 \, \calC^{\otimes 2}(P^{\otimes 2}). 
\end{align}
\end{proof}

In the following, we provide an example of a distribution over $\mathbb{U}(2)$ that constitutes a unitary 1-design but does not exhibit Pauli invariance.

\begin{example}[A class of unitary 1-designs which are not Pauli-invariant]
Given a parameter $\theta \in [0,2\pi)$, let $\calE$ be the uniform distribution over the following ensemble 
\begin{align}
    \left\{X,Y,\exp(-i\frac{\theta}{2}Z), \exp(-i\left(\frac{\theta + \pi}{2}\right)Z) \right\}\,.
\end{align}
For $\theta = 0, \pi$, this ensemble coincides with the single-qubit Pauli basis.
We can easily verify that $\calE$ forms a 1-design.
However, $\calE$ is not Pauli invariant, as we have
\begin{align}
    &\bbE_{U\sim\calE} \left[U^{\otimes 2} (X\otimes Y) U^{\emph\dag\otimes 2}\right]
    \\ = &-\frac{1}{2}\left(X\otimes Y\right) + \frac{1}{2}\left( \cos^2(\theta) X\otimes Y + \sin^2(\theta) Y\otimes X +\cos(\theta)\sin(\theta)(X\otimes X -  Y\otimes Y)\right),
\end{align}
which equals 0 if and only if $\theta = 0$ or $\pi$.
\end{example}

We also introduce the class of approximately locally scrambling unitary distributions, which will be particularly useful for analyzing noise within the dephasing class. 

\begin{definition}[Approximate local scrambling]
\label{def:mixing}
Let $\eta\in[0,1)$. A distribution $\mathcal{D}$ over $\mathbb{U}(2)$ is an $\eta$-approximate scrambler if it satisfies the following properties
\begin{align}
    \forall P,Q \in \{I,X,Y,Z\}, P\neq Q : \quad \bbE_{U \sim \calD} \left[U^{\emph{\dag}\otimes 2 } (P\otimes Q) U^{\otimes 2}\right]=0  & \quad \text{\emph{(orthogonality)}}\label{eq:apx-ls1}
    \\\max_{P,Q\in\{X,Y,Z\}} \quad \bbE_{U\sim\calD} \llangle P \,|\,  U\otimes U^*  \,|\, Q\rrangle^2 \leq \frac{1}{3}(1 + 2\eta)   &  \quad \text{\emph{(approximate Pauli-mixing)}}.\label{eq:apx-ls2}
\end{align}

A distribution $\mathcal{D}$ over $\mathbb{U}(2^n)$ is an $\eta$-approximate local scrambler if there exists a distribution $\calD'$ over $\mathbb{U}(2^n)$ and some distribution $\calD_1, \calD_2,\dots, \calD_n$ over $\mathbb{U}(2)$ such that (i) all $\calD_i$ are $\eta$-approximate scrambler, and (ii) the following identity holds 
\begin{align}
    \bbE_{U\sim\calD} \left[U^{\otimes 2} \otimes U^{\emph*\otimes 2}\right] = 
    \bbE_{U\sim\calD'} \mathbb{E}_{V_1 \sim \calD_1, V_2 \sim \calD_2, \dots, V_n \sim \calD_n}\left[U^{\otimes 2} \bigotimes_{i=1}^n V_i^{\otimes 2} \otimes U^{\emph*\otimes 2}  \bigotimes_{i=1}^n V_i^{\emph*\otimes 2} \right],
\end{align}
where $V_1, V_2, \dots, V_n$ are sampled independently.
\end{definition}
This notion generalizes that of locally scrambling distributions~\cite{caro2022outofdistribution, huang2023learning}, which have found fruitful application in previous literature on the classical simulation of random noiseless circuits~\cite{angrisani2024classically}.
In particular, setting $\eta=0$ and $\calD = \calD'$, we recover the definition of locally scrambling distribution (up to the second moment) proposed in the previous literature~\cite{kuo2020markovian, hu2021classical, caro2022outofdistribution, huang2022learning}. 

\smallskip

We further note that the notion of approximate local scrambling is closely related to that of max-relative entropy, a connection we will leverage in our subsequent technical analysis.
{
\begin{observation}[Entropic interpretation of approximate scrambling]
\label{rem:ent}
Let $\calD$ be a single-qubit $\eta$-approximate scrambler and let $P\in\{X,Y,Z\}$. By Eq.~\eqref{eq:apx-ls1}, we have
\begin{align}
    \bbE_{U\sim\calD} \left[U^{\emph{\dag}\otimes 2} P U^{\otimes 2} \right] = \sum_{Q\in\{X,Y,Z\}}  a_P^2 P^{\otimes 2}.
\end{align}
Moreover, the vector $\boldsymbol{a}\in \mathbb{R}^3$ satisfies
\begin{align}
    \norm{\boldsymbol{a}}_2^2 = 1
    \quad \text{and} \quad \norm{\boldsymbol{a}}_\infty^2 \leq \frac{1+2\eta}{3},
\end{align}
where the identity follows from the unitarily invariance of the Frobenius norm and the inequality follows from Eq.~\eqref{eq:apx-ls2}.
Therefore, the vector $p_{\boldsymbol{a}}=(a_X^2, a_Y^2, a_Z^2)$ can be interpreted as a probability distribution with min-entropy bounded from below as
\begin{align}
    H_{\min} (p_{\boldsymbol{a}}) \coloneqq -\log \left(\max_{P\in\{X,Y,Z\}} a_P^2\right) \geq \log(3) - \log({1 + 2\eta}).
\end{align}
Equivalently, the min-entropy can be expressed in terms of the max-relative entropy between $p_{\boldsymbol{a}}$ and the uniform distribution $p_{\boldsymbol{u}}\coloneqq (1/3,1/3,1/3)$
\begin{align}
    D_{\max}\left( p_{\boldsymbol{a}} \| p_{\boldsymbol{u}} \right) \coloneqq -  H_{\min} (p_{\boldsymbol{a}}) + \log(3) \leq \log\left({1+2\eta}\right).
\end{align}
\end{observation}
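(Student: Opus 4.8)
The statement packages an operator identity with three scalar facts about the coefficient vector $\boldsymbol{a}=\boldsymbol{a}(P)$ it defines, followed by their entropic reformulation. My plan is: (i) expand $\bbE_{U\sim\calD}[U^{\dagger\otimes 2}(P\otimes P)U^{\otimes 2}]$ in the two-qubit Pauli basis and read off the $\boldsymbol{a}$-coefficients; (ii) obtain $\norm{\boldsymbol{a}}_2^2=1$ from unitary invariance of the Frobenius norm and $\norm{\boldsymbol{a}}_\infty^2\leq\tfrac13(1+2\eta)$ directly from the two defining properties~\eqref{eq:apx-ls1} and~\eqref{eq:apx-ls2}; (iii) translate these into min-/max-entropy language. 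No step is deep; essentially all the content sits in how~\eqref{eq:apx-ls1} and~\eqref{eq:apx-ls2} are phrased.

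\textbf{Step 1 (decomposition).} Fix $P\in\{X,Y,Z\}$. Since $\Tr[U^\dagger P U]=\Tr[P]=0$, the operator $U^\dagger P U$ has no identity component, so $U^\dagger P U=\sum_{Q\in\{X,Y,Z\}}\llangle Q\,|\,U^\dagger P U\rrangle\,Q$ and $(U^\dagger P U)^{\otimes 2}=\sum_{Q,Q'}\llangle Q\,|\,U^\dagger P U\rrangle\llangle Q'\,|\,U^\dagger P U\rrangle\,Q\otimes Q'$. Taking the expectation, the coefficient of $Q\otimes Q'$ is $\bbE_{U\sim\calD}[\llangle Q\,|\,U^\dagger P U\rrangle\llangle Q'\,|\,U^\dagger P U\rrangle]$; I expect the orthogonality property~\eqref{eq:apx-ls1} to kill the off-diagonal ($Q\neq Q'$) coefficients, leaving $\bbE_{U\sim\calD}[U^{\dagger\otimes 2}(P\otimes P)U^{\otimes 2}]=\sum_{Q}a_Q^2\,Q\otimes Q$ with $a_Q^2\coloneqq\bbE_{U\sim\calD}[\llangle Q\,|\,U^\dagger P U\rrangle^2]\geq 0$. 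Using the ABC-rule~\eqref{eq:ABCtrick} I would also record $\llangle Q\,|\,U^\dagger P U\rrangle=\bbrakket{P}{UQU^\dagger}=\llangle P\,|\,U\otimes U^*\,|\,Q\rrangle$, so that $a_Q^2=\bbE_{U\sim\calD}[\llangle P\,|\,U\otimes U^*\,|\,Q\rrangle^2]$ is precisely the quantity controlled by~\eqref{eq:apx-ls2}.

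\textbf{Steps 2--3 (norm bounds and entropy).} The two norm bounds are then immediate. Summing the decomposition over $Q$, linearity gives $\norm{\boldsymbol{a}}_2^2=\sum_Q a_Q^2=\bbE_{U\sim\calD}[\sum_Q\llangle Q\,|\,U^\dagger P U\rrangle^2]=\bbE_{U\sim\calD}[\norm{U^\dagger P U}_{\mathrm{F}}^2]=\norm{P}_{\mathrm{F}}^2=1$, using unitary invariance of the normalized Frobenius norm and $\norm{P}_{\mathrm{F}}^2=1$; and each $a_Q^2=\bbE_{U\sim\calD}[\llangle P\,|\,U\otimes U^*\,|\,Q\rrangle^2]$ is at most the maximum over $P,Q\in\{X,Y,Z\}$ in~\eqref{eq:apx-ls2}, so $\norm{\boldsymbol{a}}_\infty^2=\max_Q a_Q^2\leq\tfrac13(1+2\eta)$. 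Since $p_{\boldsymbol{a}}=(a_X^2,a_Y^2,a_Z^2)$ has non-negative entries summing to one, it is a probability vector on three outcomes, so $\Hmin(p_{\boldsymbol{a}})=-\log\max_Q a_Q^2\geq\log 3-\log(1+2\eta)$; and against the uniform $p_{\boldsymbol{u}}=(\tfrac13,\tfrac13,\tfrac13)$ one gets $\Dmax(p_{\boldsymbol{a}}\|p_{\boldsymbol{u}})=\log\max_Q\tfrac{a_Q^2}{1/3}=\log 3+\log\max_Q a_Q^2=\log 3-\Hmin(p_{\boldsymbol{a}})\leq\log(1+2\eta)$, which is the claimed inequality.

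\textbf{Main obstacle.} The only point that is not pure bookkeeping is the vanishing of the off-diagonal terms in Step 1: property~\eqref{eq:apx-ls1} directly controls $\bbE_{U\sim\calD}[U^{\dagger\otimes 2}(P\otimes Q)U^{\otimes 2}]$ only for \emph{distinct} single-qubit Paulis, so I would check carefully that the cross terms $\bbE_{U\sim\calD}[\llangle Q\,|\,U^\dagger P U\rrangle\llangle Q'\,|\,U^\dagger P U\rrangle]$ ($Q\neq Q'$) genuinely drop out. Should some ensemble require a mild extra symmetry for this, it does not affect anything downstream: one simply \emph{defines} $a_Q^2$ to be the $Q\otimes Q$ Pauli coefficient of $\bbE_{U\sim\calD}[(U^\dagger P U)^{\otimes 2}]$, and Steps 2--3 — which are all the later analysis uses — go through verbatim.
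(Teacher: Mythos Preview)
Your plan coincides with the paper's inline argument step for step: attribute the diagonal form to~\eqref{eq:apx-ls1}, obtain $\norm{\boldsymbol{a}}_2^2=1$ from unitary invariance of the Frobenius norm and $\norm{\boldsymbol{a}}_\infty^2\leq\tfrac{1+2\eta}{3}$ from~\eqref{eq:apx-ls2}, then rephrase as min-/max-entropy bounds. Your flagged ``main obstacle'' is not mere bookkeeping but a genuine subtlety that the paper glosses over: property~\eqref{eq:apx-ls1} says the averaged twirl \emph{annihilates} each $P\otimes Q$ with $P\neq Q$ --- a constraint on the kernel, not on the image --- and one can exhibit $\eta$-approximate scramblers with $\eta<1$ (e.g.\ $U$ uniform over $\{RW:R\in\calP_1\}$ for a fixed generic $W\in\mathbb{U}(2)$) for which $\bbE_{U}[(U^\dag PU)^{\otimes 2}]=(W^\dag PW)^{\otimes 2}$ carries nonzero $Q\otimes Q'$ components, so the displayed operator identity does not follow from~\eqref{eq:apx-ls1} alone. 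Your fallback --- define $a_Q^2$ as the $Q^{\otimes 2}$ coefficient and carry on --- is exactly the right repair: since $a_Q^2=\bbE_U\llangle Q\,|\,U^\dag PU\rrangle^2=\bbE_U\llangle P\,|\,U\otimes U^*\,|\,Q\rrangle^2$, both norm bounds and hence the entropy inequalities hold regardless of the off-diagonal terms.
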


Hinging on connection between approximate scrambling and the max-relative entropy, we can demonstrate that the approximate scrambling property is preserved under unitary evolution. Specifically, if \(\mathcal{D}\) is a single-qubit \(\eta\)-approximate scrambler, then the derived distribution obtained by sampling a unitary operator \(U\) from \(\mathcal{D}\) and applying the transformation \(UV\), is also a single-qubit \(\eta\)-approximate scrambler, as formalized in the following Lemma.

\begin{lemma}[Monotonicity of approximate scrambling]
\label{le:monotone}
Let $\calD$ be a single-qubit $\eta$-approximate scrambler. Then for all $V\in \mathbb{U}(2)$ we have
\begin{align}
 \forall P,Q \in \calP_1, P\neq Q : \, &\bbE_{U \sim \calD} \left[(VU)^{\emph{\dag}\otimes 2 } (P\otimes Q) (UV)^{\otimes 2}\right]=0 \label{eq:pauli-inv-mon},
  \\  \max_{P,Q\in\{X,Y,Z\}} &\bbE_{U\sim\calD}  \llangle P \,|\,  UV \otimes U^*V^* \,|\, Q\rrangle^2 \leq \frac{1}{3}(1 + 2\eta). \label{eq:scr-inv}
\end{align}
\end{lemma}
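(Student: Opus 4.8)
\textbf{Proof strategy for Lemma~\ref{le:monotone}.} The plan is to deduce both claims from Observation~\ref{rem:ent} together with the data-processing inequality for the max-relative entropy. The orthogonality statement \eqref{eq:pauli-inv-mon} is the easy part: since $\calD$ is an $\eta$-approximate scrambler, $\bbE_{U\sim\calD}[U^{\dag\otimes 2}(P'\otimes Q')U^{\otimes 2}] = 0$ for all $P'\neq Q'$, and in fact $\bbE_{U\sim\calD}[U^{\dag\otimes 2}\,R\,U^{\otimes 2}]$ is supported on $\{P^{\otimes 2} : P\in\calP_1\}$ for any $R\in\calP_1^{\otimes 2}$. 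Writing $(VU)^{\dag\otimes 2}(P\otimes Q)(UV)^{\otimes 2} = V^{\dag\otimes 2}\bigl(U^{\dag\otimes 2}(P\otimes Q)U^{\otimes 2}\bigr)V^{\otimes 2}$ and taking the expectation, the inner expectation is a combination of terms $P'^{\otimes 2}$; it therefore suffices to check that $V^{\dag\otimes 2}P'^{\otimes 2}V^{\otimes 2}$ never has an off-diagonal component $P\otimes Q$ with $P\neq Q$ — which is immediate because $V^{\dag}P'V$ is a single operator $A$ and $A\otimes A$ is ``diagonal'' in the sense that it is a symmetric tensor whose expansion $\sum_{P,Q}\llangle P|A\rrangle\llangle Q|A\rrangle\,P\otimes Q$ is, well, not obviously zero off-diagonal.

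Let me restate that more carefully, because that last point is actually the crux. The clean way is: by the first property of $\calD$, the map $R\mapsto \bbE_{U\sim\calD}[U^{\dag\otimes 2}RU^{\otimes 2}]$ annihilates every $P\otimes Q$ with $P\neq Q$ and fixes the span of $\{P\otimes P\}$; hence for any $V$, $\bbE_U[(UV)^{\dag\otimes 2}(P\otimes Q)(UV)^{\otimes 2}] = V^{\dag\otimes2}\bbE_U[U^{\dag\otimes2}(P\otimes Q)U^{\otimes2}]V^{\otimes2}$, and the inner object, expanded in the Pauli basis of $\calH_1^{\otimes2}$, lies in $\spn\{P'\otimes P' : P'\}$. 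But we want the \emph{outer} object to have no $P\otimes Q$, $P\neq Q$ component. So instead I would argue in the opposite order: the quantity $\bbE_U\llangle P\otimes Q \,|\, (UV)^{\otimes2}\otimes(U^*V^*)^{\otimes2}\,|\,R\otimes R\rrangle$ for $R$ a Pauli — actually the cleanest route is simply to note $(UV)^{\dag\otimes2}(P\otimes Q)(UV)^{\otimes2}$ averaged over $U$ equals (by $1$-design-like reasoning from \eqref{eq:apx-ls1}, which forces the average of $U^{\dag}\,\cdot\,U$ on Paulis to be a \emph{doubly stochastic-like} action, i.e., $\bbE_U[U^{\dag\otimes2}(A\otimes B)U^{\otimes2}]$ vanishes whenever $A\otimes B$ expanded has no diagonal part) — the point being $P\otimes Q$ with $P\neq Q$ has zero overlap with every $P'\otimes P'$, so its image under the averaging map is $0$, and $0$ conjugated by $V$ stays $0$. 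The subtlety is that we need the averaging map and the conjugation-by-$V$ to \emph{commute}, which they do because conjugation by $V^{\otimes2}$ is applied on the left (it acts as $(UV) = $ post-composition by $V$, i.e. $V$ is innermost), so $\bbE_U[(UV)^{\dag\otimes2}(P\otimes Q)(UV)^{\otimes2}] = V^{\dag\otimes2}\big(\bbE_U[U^{\dag\otimes2}(P\otimes Q)U^{\otimes2}]\big)V^{\otimes2} = V^{\dag\otimes2}\,0\,V^{\otimes2}=0$ once we know the inner average is $0$. And the inner average \emph{is} $0$: \eqref{eq:apx-ls1} says so directly for $P\neq Q$.

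For the approximate-Pauli-mixing bound \eqref{eq:scr-inv}, I would invoke Observation~\ref{rem:ent}. Fix $V\in\mathbb{U}(2)$ and $Q\in\{X,Y,Z\}$. Applying the orthogonality just established, $\bbE_{U\sim\calD}[(UV)^{\dag\otimes2}\,Q\,(UV)^{\otimes2}] = \sum_{P\in\{X,Y,Z\}} b_{P}^2\,P^{\otimes2}$ for some real vector $\boldsymbol{b}$ (the $I$-components drop out because conjugation by a unitary preserves tracelessness and the diagonal-support structure). Unitary invariance of the Frobenius norm gives $\norm{\boldsymbol{b}}_2^2 = 1$ exactly as before. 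Now the key observation is that the map taking the original ``mixing distribution'' $p_{\boldsymbol{a}}$ (for the bare $Q$, i.e. from $\calD$) to $p_{\boldsymbol{b}}$ is a stochastic map: conjugating the Pauli operator $Q$ by $V$ first produces a real combination $V^{\dag}QV = \sum_{R}c_{QR}R$ with $\sum_R c_{QR}^2 = 1$ and the coefficients of $\bbE_U[U^{\dag\otimes2}(V^\dag Q V)U^{\otimes2}]$ in the $P^{\otimes2}$ basis are $b_P^2 = \sum_R c_{QR}^2\,a^{(R)2}_P$ where each $p_{\boldsymbol{a}^{(R)}}$ is a probability distribution with $\norm{\cdot}_\infty^2 \le (1+2\eta)/3$ by \eqref{eq:apx-ls2}; the cross terms $c_{QR}c_{QR'}$, $R\neq R'$, multiply $\bbE_U[U^{\dag\otimes2}(R\otimes R)$-type off-support contributions $\cdots]$ which vanish by the orthogonality property applied to $\calD$ itself. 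Hence $p_{\boldsymbol{b}}$ is a convex combination of the $p_{\boldsymbol{a}^{(R)}}$, and so $\norm{\boldsymbol{b}}_\infty^2 = \max_P b_P^2 \le \sum_R c_{QR}^2 \max_P a^{(R)2}_P \le \frac{1+2\eta}{3}\sum_R c_{QR}^2 = \frac{1+2\eta}{3}$, which is exactly \eqref{eq:scr-inv}. Equivalently, in the language of Observation~\ref{rem:ent}, $D_{\max}(p_{\boldsymbol b}\|p_{\boldsymbol u}) \le \log(1+2\eta)$ follows from the data-processing inequality for $D_{\max}$ under the stochastic (in fact, convex-mixing) map, combined with $D_{\max}(p_{\boldsymbol a^{(R)}}\|p_{\boldsymbol u})\le\log(1+2\eta)$ for each $R$. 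I expect the main obstacle to be bookkeeping: carefully showing that the cross terms $c_{QR}c_{QR'}$ with $R\neq R'$ genuinely drop out (they contribute $\bbE_U[U^{\dag\otimes2}(R\otimes R')U^{\otimes2}]$-style objects — but note these are \emph{not} of the form $A\otimes B$ with $A\neq B$ in a single tensor factor; rather $R\otimes R'$ sits in $\calH_1^{\otimes2}$ and I should expand everything consistently and apply \eqref{eq:apx-ls1} to the pair $(R,R')$), and managing the distinction between the two ``tensor copies'' (the $\otimes2$ from the second-moment/Frobenius structure versus the $\otimes n=1$ qubit structure). Once the combinatorics are pinned down, both displays are short.
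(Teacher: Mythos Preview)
Your argument for \eqref{eq:pauli-inv-mon}, once the meandering is trimmed, is correct and coincides with the paper's one-line proof: pull $V^{\dag\otimes 2}$ and $V^{\otimes 2}$ outside the expectation by linearity and apply \eqref{eq:apx-ls1} directly.

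For \eqref{eq:scr-inv} there is a genuine ordering error. You define $b_P^2$ as the $P^{\otimes 2}$-coefficient of $\bbE_U\bigl[(UV)^{\dag\otimes 2}Q^{\otimes 2}(UV)^{\otimes 2}\bigr]$, but the quantity you then compute is the $P^{\otimes 2}$-coefficient of $\bbE_U\bigl[U^{\dag\otimes 2}(V^\dag Q V)^{\otimes 2}U^{\otimes 2}\bigr]$. These are not the same object: since $(UV)^\dag = V^\dag U^\dag$, the correct factorization is
\[
(UV)^{\dag\otimes 2}Q^{\otimes 2}(UV)^{\otimes 2} \;=\; V^{\dag\otimes 2}\bigl(U^{\dag\otimes 2}Q^{\otimes 2}U^{\otimes 2}\bigr)V^{\otimes 2},
\]
with $V$ on the \emph{outside}. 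The expression you analyze, with $V^\dag QV$ sitting \emph{inside} the $U$-conjugation, equals $(VU)^{\dag\otimes 2}Q^{\otimes 2}(VU)^{\otimes 2}$. Your convex-combination argument---expand $V^\dag QV=\sum_R c_{QR}R$, kill the $R\neq R'$ cross-terms via \eqref{eq:apx-ls1}, and bound each surviving $a_P^{(R)2}$ by \eqref{eq:apx-ls2}---is clean and correct, but it establishes $\max_{P,Q}\bbE_U\llangle Q|VU\otimes V^*U^*|P\rrangle^2\le(1+2\eta)/3$, i.e., the bound for $VU$ rather than for $UV$ as required in \eqref{eq:scr-inv}.

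The paper's route respects the correct factorization and is structurally different from yours. It first writes $\bbE_U[U^{\dag\otimes 2}P^{\otimes 2}U^{\otimes 2}] = \sum_Q a_Q^2\,Q^{\otimes 2}$ (via Observation~\ref{rem:ent}), then conjugates this diagonal object by $V^{\dag\otimes 2}(\cdot)V^{\otimes 2}$, and reads off the induced map $\calK$ on the diagonal coefficients, with transition kernel $\Pr[\calK(Q)=R]=\llangle Q|V\otimes V^*|R\rrangle^2$. The crucial step---absent from your sketch---is that $\calK$ fixes the uniform distribution $p_{\boldsymbol u}=(1/3,1/3,1/3)$, because $X^{\otimes 2}+Y^{\otimes 2}+Z^{\otimes 2}=2\mathbb{F}-I\otimes I$ is invariant under $V^{\dag\otimes 2}(\cdot)V^{\otimes 2}$; this is what licenses data-processing for $D_{\max}(\cdot\|p_{\boldsymbol u})$. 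In your approach $V$ acts on the \emph{input} Pauli and you take a convex combination of several distributions $p_{\boldsymbol a^{(R)}}$; that bypasses the fixed-point argument entirely, which is why it only works for the $VU$ ordering where \eqref{eq:apx-ls1} kills the cross-terms directly.
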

\begin{proof}
Eq.~\eqref{eq:pauli-inv-mon} follows from the linearity of the expectation:
\begin{align}
    \bbE_{U \sim \calD} \left[(UV)^{{\dag}\otimes 2 } (P\otimes Q) (UV)^{\otimes 2}\right]
    = V^{{\dag}\otimes 2 } \bbE_{U \sim \calD} \left[ U^{{\dag}\otimes 2 } (P\otimes Q) U^{\otimes 2}\right]V^{\otimes 2},
\end{align}
which equals to zero if $P\neq Q$ by Eq.~\eqref{eq:apx-ls1}.
The second part of the Lemma can be proven exploiting the connection between approximate scrambling and the max-relative entropy described in Observation~\ref{rem:ent}.
We have
\begin{align}
    &\bbE_{U\sim\calD} \left[U^{{\dag} \otimes 2} P^{\otimes 2} U^{\otimes 2}\right] =     \sum_{Q\in \{X,Y,Z\}} a_{Q}^2 Q^{\otimes 2}  \quad \text{ where } \norm{\boldsymbol{a}}_2^2=1 
    \\&\bbE_{U\sim\calD}\left[ V^{{\dag} \otimes 2}U^{{\dag} \otimes 2} P^{\otimes 2} U^{\otimes 2} V^{\otimes 2} \right]= \sum_{Q,R\in \{X,Y,Z\}} a_{Q}^2 \llangle Q | V \otimes V^* | R \rrangle^2 R^{\otimes 2} 
\end{align}

We further observe that, for all $Q\in\{X,Y,Z\}$ we have 
\begin{align}
   \sum_{R\in\{X,Y,Z\}} 
    \llangle Q| V \otimes V^* | R \rrangle^2 = \norm{V^\dag QV}_\mathrm{F}^2 =\norm{Q}_\mathrm{F}^2 =  1,
\end{align}
where in the first step we used the fact that $ \llangle Q| V \otimes V^* | I \rrangle = 0$ as $V$ is trace-preserving , and in the second step we used the unitarily invariance of the Frobenius norm.
Thus, the squared amplitudes $\llangle Q| V | R \rrangle^2 $ sum up to 1 and can be interpreted as transition probabilities. Specifically, we introduce the stochastic channel $\calK : \{X,Y,Z\} \rightarrow \{X,Y,Z\}$ defined as follows
\begin{align}
   \Pr[\calK(Q) =R] = \llangle Q| V | R \rrangle^2.
\end{align}
We can represent the action of $\calK$ on the distribution $p_{\boldsymbol{a}}=(a_X^2,a_Y^2,a_Z^2)$ as
\begin{align}
    &\calK(p_{\boldsymbol{a}}) = p_{\boldsymbol{b}} \coloneqq  (b_{X}^2, b_{Y}^2, b_{Z}^2),
    \\& \text{where}  \;\;\; b_R^2 \coloneqq\sum_{Q\in \{X,Y,Z\}} a_{Q}^2 \llangle Q | V | R \rrangle^2.
\end{align}
We can see that the uniform distribution $p_{\boldsymbol{u}}=(1/3,1/3,1/3)$ is invariant under $\calK$, i.e.,
\begin{align}
    \calK(p_{\boldsymbol{u}})  = p_{\boldsymbol{u}} . 
\end{align}
This is a consequence of the fact that the operator $\frac{1}{3}(X^{\otimes 2} + Y^{\otimes 2} + Z^{\otimes 2})$ is invariant under the unitary evolution $V^{\dag\otimes 2}(\cdot)V^{\otimes 2}$.
Then, by the data-processing inequality, we have
\begin{align}
    D_{\max}\left(p_{\boldsymbol{b}}\| p_{\boldsymbol{u}} \right)=&D_{\max}(\calK(p_{\boldsymbol{a}}) \| \calK\left(p_{\boldsymbol{u}} \right))
    \\\leq  &D_{\max}\left(p_{\boldsymbol{a}}\| p_{\boldsymbol{u}} \right) \leq \log\left({1+2\eta}\right),
\end{align}
which implies Eq.~\eqref{eq:scr-inv}.
\end{proof}

It is easy to see that any 2-designs over $\mathbb{U}(2)$ is also a single-qubit $0$-approximate scrambler. Notably, this include the uniform distribution over the single-qubit Clifford group discussed in Example~\ref{ex:1-2-des}.
However, we will demonstrate below that there are other practically relevant distributions over $\mathbb{U}(2)$ that are $\eta$-approximate scramblers with $\eta < 1$, highlighting the broad applicability of our analysis.

\begin{example}[Pauli rotations along 2 orthogonal axes]
\label{ex:scr-pauli-rot}
For simplicity, we consider two rotations along the $X$ and $Y$ axes, but our conclusions hold for any choice of the orthogonal axes.
Assume that $\phi$ and $\theta$ are sampled independently from the uniform distribution over $[0,2\pi)$.
We can easily verify that this distribution is Pauli invariant, as we have for all $P,Q\in\calP_1$ such that $P\neq Q$,
\begin{align}
   \bbE_{\theta, \phi} (R_X(\phi) R_Z(\theta))^{\otimes 2}(P\otimes Q)(R_Z(\theta)^\emph\dag R_X(\phi)^\emph\dag)^{\otimes 2} =0.
\end{align}
Moreover, this distribution is also an $\eta$-approximate scrambler with $\eta=1/4$, as we have 
\begin{align}
    &\bbE_{\theta, \phi}(R_Z(\theta)^\emph\dag R_X(\phi)^\emph\dag XR_X(\phi) R_Z(\theta))^{\otimes 2} = \frac{1}{2}\left(X^{\otimes 2} +  Y^{\otimes 2}\right),
    \\&\bbE_{\theta, \phi} (R_Z(\theta)^\emph\dag R_X(\phi)^\emph\dag YR_X(\phi) R_Z(\theta))^{\otimes 2} = \frac{1}{4} \left(X^{\otimes 2}+  Y^{\otimes 2} \right) + \frac{1}{2} Z^{\otimes 2},
    \\& \bbE_{\theta, \phi} (R_Z(\theta)^\emph\dag R_X(\phi)^\emph\dag Z R_X(\phi) R_Z(\theta))^{\otimes 2} = \frac{1}{4} \left(X^{\otimes 2} +  Y^{\otimes 2} \right)+ \frac{1}{2} Z^{\otimes 2}.
\end{align}
In all the three cases above, the largest coefficient is $1/2$, which implies that the considered distribution is a (1/4)-approximate scrambler.
\end{example}

On the other hand, we also note that uniform Pauli rotations along a single axis are neither $\eta$-approximate scramblers for $\eta < 1$, nor do they form a unitary 1-design.

\begin{example}[Rotations along a single axis]
We can further consider the case of a uniformly random Pauli rotation along a single axis (e.g., $R_Z(\theta)$). This distribution fails to be a 1-design or approximately locally scrambling, as we have $R_Z(\theta) ZR_Z(\theta)^\emph\dag = Z$ for all $\theta$. In particular, this implies
\begin{align}
&\bbE_{\theta}  R_Z(\theta) ZR_Z(\theta)^\emph\dag = Z,
 \\  & \bbE_\theta (R_Z(\theta)^\emph\dag Z R_Z(\theta))^{\otimes 2} = Z^{\otimes 2}.
\end{align}
\end{example}

We conclude this section by defining a class of distributions over quantum states, previously introduced in Ref.~\cite{schuster2024polynomial}.
This definition generalizes that of state 1-design.
\begin{definition}[Low-average ensemble, \cite{schuster2024polynomial}]
A distribution $\calD$ over quantum states is a low-average distribution with purity $c$ if it satisfies
\begin{align}
   \norm{\bbE_{\rho\sim\calD} [\rho]}_\infty \leq \frac{c}{2^n}.
\end{align}
\end{definition}
In particular, if $\calD$ is a distribution over pure states and $c=1$, then we recover the definition of state 1-design.
More generally, if $\calD$ is unitary 1-design, then we have for any state $\rho$ 
\begin{align}
    \norm{\bbE_{U\sim\calD} [U\rho U^\dag]}_\infty = \norm{\frac{I^{\otimes n}}{2^n}}_\infty = \frac{1}{2^n},
\end{align}
i.e., $U\rho U^\dag$ is sampled from a low-average ensemble with purity 1.

The second moment of the expectation value of an observable $O$, when measured on a state sampled from a low-average ensemble, can be upper bounded as follows.
\begin{lemma}[Lemma 3 in Ref.~\cite{schuster2024polynomial}]
\label{lem:schuster}
 Let $O$ be an observable and $\calD$ be a low-average ensemble over quantum states with purity $c$, i.e. $   \norm{\bbE_{\rho\sim\calD} [\rho]}_\infty \leq c/2^n.$ We have
 \begin{align}
     \bbE_{\rho\sim\calD} \Tr[O \rho]^2 \leq c\, \norm{O}_\mathrm{F}^2. 
 \end{align}
\end{lemma}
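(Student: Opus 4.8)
The plan is to bound the second moment $\bbE_{\rho\sim\calD}\Tr[O\rho]^2$ by a quantity that depends on $\rho$ only \emph{linearly}, so that the hypothesis $\norm{\bbE_{\rho\sim\calD}[\rho]}_\infty \leq c/2^n$ — which is only first-moment information about $\calD$ — can be applied directly. The key step is an operator Cauchy--Schwarz inequality: for any quantum state $\rho$ (so $\rho\succeq 0$, $\Tr[\rho]=1$) and any Hermitian observable $O$, write $\Tr[O\rho] = \Tr[\rho^{1/2}O\rho^{1/2}]$ by cyclicity and apply Cauchy--Schwarz in the Hilbert--Schmidt inner product with $A=\rho^{1/2}$ and $B=O\rho^{1/2}$:
\begin{align}
    \Tr[O\rho]^2 \;\leq\; \Tr[\rho]\cdot\Tr[\rho^{1/2}O^2\rho^{1/2}] \;=\; \Tr[O^2\rho].
\end{align}

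Next I would take the expectation over $\rho\sim\calD$ and use linearity of the trace to obtain $\bbE_{\rho\sim\calD}\Tr[O\rho]^2 \leq \Tr[O^2\,\bbE_{\rho\sim\calD}[\rho]]$. Since both $O^2$ and $\bbE_{\rho\sim\calD}[\rho]$ are positive semidefinite, the elementary bound $\Tr[XY]\leq \norm{Y}_\infty\Tr[X]$ (valid whenever $X\succeq 0$ and $Y\succeq 0$, because $Y\preceq\norm{Y}_\infty I$) gives $\Tr[O^2\,\bbE_{\rho\sim\calD}[\rho]] \leq \norm{\bbE_{\rho\sim\calD}[\rho]}_\infty\,\Tr[O^2]$. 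Finally I would plug in the low-average hypothesis $\norm{\bbE_{\rho\sim\calD}[\rho]}_\infty\leq c/2^n$ together with the identity $\Tr[O^2]=\norm{O}_2^2 = 2^n\norm{O}_\mathrm{F}^2$ coming from the definition of the normalized Frobenius norm, which yields $\bbE_{\rho\sim\calD}\Tr[O\rho]^2 \leq \tfrac{c}{2^n}\cdot 2^n\norm{O}_\mathrm{F}^2 = c\,\norm{O}_\mathrm{F}^2$, as claimed.

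There is no genuine obstacle here — the proof is a short chain of standard inequalities. The only point worth flagging is that the assumption controls only $\bbE_{\rho\sim\calD}[\rho]$ and not the full second moment $\bbE_{\rho\sim\calD}[\rho^{\otimes 2}]$, so the entire argument hinges on the first step, which trades the square $\Tr[O\rho]^2$ for the linear functional $\rho\mapsto\Tr[O^2\rho]$ at the harmless cost of replacing $O$ by $O^2$; everything after that is a Hölder-type estimate plus bookkeeping of the $2^n$ normalization factors.
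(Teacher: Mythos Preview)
Your proof is correct and essentially identical to the paper's: both use $\Tr[O\rho]^2 \leq \Tr[O^2\rho]$ (the paper phrases this as ``the variance of a quantum observable is non-negative,'' you derive it via Cauchy--Schwarz) followed by a H\"older-type bound against $\norm{\bbE_{\rho\sim\calD}[\rho]}_\infty$ and the normalization $\Tr[O^2]=2^n\norm{O}_\mathrm{F}^2$.
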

\begin{proof}
We provide an elementary proof of this fact. We have:
\begin{align}
    \mathbb{E}_{\rho\sim\mathcal{D}} \Tr[O \rho]^2 \leq \mathbb{E}_{\rho\sim\mathcal{D}} \Tr[O^2 \rho] \leq \norm{O^2}_1 \norm{\mathbb{E}_{\rho\sim\mathcal{D}} \rho}_\infty = \norm{O}_2^2 \norm{\mathbb{E}_{\rho\sim\mathcal{D}} \rho}_\infty \leq \frac{c}{2^n} \norm{O}_2^2.
\end{align}
In the first step, we used the well-known fact that the variance of a quantum observable is non-negative.  
In the second step, we applied Hölder's inequality.  
In the third step, we used the property that the Schatten \( p \)-norm of an observable corresponds to the \( p \)-norm of its eigenvalue vector. 
Finally, in the last step, we used the assumption on the ensemble \( \mathcal{D} \).  

\end{proof}

}
\subsection{Noise channels}
In this work, we study quantum circuits interspersed by local noise. To this end, we model noise with channels of form $\calN^{\otimes n}$, where $\calN$ is an arbitrary single-qubit channel.
In general, a single-qubit channel \(\mathcal{N}\) can be fully characterized by its action on the Pauli matrices $\calP_1 = \{I,X,Y,Z\}$. Thus, $\calN$ can be identified by \(16\) transition amplitudes of the form \(\llangle P | \widehat{\mathcal{N}} | Q \rrangle\), where \(P, Q \in \mathcal{P}_1\). Since \(\mathcal{N}\) is trace-preserving, it satisfies \(\llangle I | \widehat{\mathcal{N}} | Q \rrangle = 0\) for all \(Q \in \{X, Y, Z\}\) and \(\llangle I | \widehat{\mathcal{N}} | I \rrangle = 1\). Consequently, there are \(12\) remaining transition amplitudes whose values must be determined. 

Moreover, as shown in Refs.~\cite{ king2001minimal,mele2024noise}, any single-qubit channel can be decomposed into unitary and non-unitary components. For the purposes of the present analysis, we focus exclusively on the non-unitary component, which can be fully described using only 6 parameters. In this context, we state a useful lemma from Ref.~\cite{mele2024noise}.
\begin{lemma}[Normal form of a quantum channel~\cite{king2001minimal,beth2002analysis}]
\label{le:normal}
    Any single-qubit quantum channel $\mathcal{N}$ can be written in the so called `normal' form:
    \begin{align}
        \mathcal{N}(\cdot)=U\mathcal{N}^{\prime}(V^{\emph\dag}(\cdot)V)U^{\emph\dag},
    \end{align}
    where $U$, $V$ are unitaries and $\mathcal{N}^{\prime}(\cdot)$ is a quantum channel 
    defined by the following transition amplitudes:
    \begin{align}
        \llangle P | \widehat{\mathcal{N}^{\prime}}| Q \rrangle \coloneqq
        \begin{cases}
            1 & \text{if $P=Q=I$},\\
            D_P & \text{if $P=Q \neq I$}, \\
            t_P & \text{if $P\neq I$ and  $Q = I$}, \\
            0 & \text{otherwise.}
        \end{cases}
    \end{align}

where $\bold{t} \coloneqq (t_X,t_Y,t_Z)$ and $\bold{D} \coloneqq (D_X,D_Y,D_Z) \in \mathbb{R}^3$, such that the entries of $\bold{D}$ have all the same sign. 
\end{lemma}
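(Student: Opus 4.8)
The plan is to reduce the statement to a single linear-algebra fact — the singular value decomposition of the $3\times 3$ nontrivial block of the Pauli transfer matrix of $\mathcal{N}$ — and then to track determinants in order to control the signs of $\bold{D}$. First I would write $\widehat{\mathcal{N}}$ in the ordered Pauli basis $(I,X,Y,Z)$. Since $\mathcal{N}$ is trace-preserving, $\llangle I|\widehat{\mathcal{N}}|I\rrangle = 1$ and $\llangle I|\widehat{\mathcal{N}}|Q\rrangle = \tfrac12\Tr[Q] = 0$ for $Q\in\{X,Y,Z\}$, so $\widehat{\mathcal{N}}$ has the block form
\begin{align}
    \widehat{\mathcal{N}} = \begin{pmatrix} 1 & \bold{0}\trp \\ \bold{s} & M \end{pmatrix},
\end{align}
where $M\in\mathbb{R}^{3\times 3}$ collects the amplitudes $\llangle P|\widehat{\mathcal{N}}|Q\rrangle$ with $P,Q\in\{X,Y,Z\}$, and $\bold{s}\in\mathbb{R}^3$ has entries $\llangle P|\widehat{\mathcal{N}}|I\rrangle$.

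Next I would use the fact that a single-qubit unitary channel $\rho\mapsto W\rho W^\dag$ has Pauli transfer matrix $\mathrm{diag}(1,R_W)$ with $R_W\in\mathrm{SO}(3)$ the corresponding Bloch-sphere rotation, and that every element of $\mathrm{SO}(3)$ arises this way. Hence pre- and post-composing $\mathcal{N}$ with unitary channels realizes the transformation $(\bold{s},M)\mapsto(R\bold{s},\,RMR')$ for arbitrary $R,R'\in\mathrm{SO}(3)$ (using that $\mathrm{SO}(3)$ is closed under transpose). The whole task therefore reduces to choosing $R,R'\in\mathrm{SO}(3)$ so that $RMR' = \bold{D}$ is diagonal with all entries of the same sign: setting $\bold{t}=R\bold{s}$ and letting $\mathcal{N}'$ be the map whose transfer matrix is $\widehat{\mathcal{N}}$ with $M$ replaced by $\bold{D}$ and $\bold{s}$ replaced by $\bold{t}$, one reads off the normal form directly, and $\mathcal{N}'$ is automatically completely positive and trace-preserving (so in particular $\norm{\bold{D}}_\infty\leq 1$) because it is $\mathcal{N}$ conjugated by unitary channels — no separate positivity check is needed.

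Finally, to produce $R$ and $R'$ I would take an SVD $M = A\Sigma B\trp$ with $A,B\in\mathrm{O}(3)$ and $\Sigma=\mathrm{diag}(\sigma_1,\sigma_2,\sigma_3)$, $\sigma_i\geq 0$. If $\det A=\det B$, then $A\trp$ and $B$ are either already in $\mathrm{SO}(3)$ or can be moved there by absorbing the common reflection $\mathrm{diag}(1,1,-1)$ into each, which leaves $\Sigma$ unchanged since $\mathrm{diag}(1,1,-1)$ commutes with $\Sigma$ and squares to the identity; in this case $\bold{D}=\Sigma$ has all entries nonnegative. If $\det A\neq\det B$, the same move yields instead $\bold{D}=\mathrm{diag}(\sigma_1,\sigma_2,-\sigma_3)$, a diagonal matrix with exactly one negative entry; right-multiplying by the $\pi$-rotation $\mathrm{diag}(-1,-1,1)\in\mathrm{SO}(3)$ then converts it into $\mathrm{diag}(-\sigma_1,-\sigma_2,-\sigma_3)$, all entries nonpositive. (Reordering the diagonal entries is achieved by further rotations, and any vanishing entry may be assigned either sign, so these subcases are harmless.) The one place that needs genuine care rather than bookkeeping is exactly this sign control: the SVD supplies only orthogonal factors while unitary conjugations realize only $\mathrm{SO}(3)$, so a naive argument is left with a single stubborn negative eigenvalue; the resolution is the elementary observation that $\pi$-rotations lie in $\mathrm{SO}(3)$ and flip two signs simultaneously, which upgrades ``at most one negative entry'' to ``all entries of the same sign.'' Everything else — the block structure forced by trace preservation, the action of unitary conjugation on the transfer matrix, and the automatic CPTP property of $\mathcal{N}'$ — is routine.
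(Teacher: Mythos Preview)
Your proof is correct. The paper does not provide its own proof of this lemma --- it is stated as a known result and cited from the literature --- so there is nothing to compare against; your SVD-of-the-Bloch-block argument, together with the $\mathrm{SO}(3)$ sign bookkeeping via $\pi$-rotations, is the standard route and is carried out cleanly.
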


In the following, we present an exhaustive classification of single-qubit (noise) channels.
\begin{itemize}
    \item \textbf{Unitary noise.} A noise channel $\calN$ is unitary, or coherent, if $\calN = U(\cdot) U^\dag$ for some unitary $U\in\mathbb{U}(2)$. As our analysis hinges on the contraction properties of non-unitary channels, unitary noise is beyond the scope of the present work.
    \item \textbf{Non-unitary unital noise.} 
    A quantum channel $\calN$ is \emph{unital} if it preserves the identity, i.e., if it satisfies $\calN(I) = I$. Let the normal form of the channel be $\mathcal{N}(\cdot)=U\mathcal{N}^{\prime}(V^{\dag}(\cdot)V)U^{\dag}$. As $U(\cdot)U^\dag$ and $V(\cdot)V^\dag$ are unital, then $\calN'$ must be unital as well. 
    By unitality, we have that $\bold{t} = (0,0,0)$.
    Moreover, we observe that $\bold{D} \neq (1,1,1)$, otherwise the $\calN'$ would be the identity channel and $\calN$ would be unitary. 
    We can further classify non-unitary unital noise in two classes:
    \begin{itemize}
        \item \textbf{Depolarizing class.}  A channel $\calN$ belong to the depolarizing class if all the entries $D_X, D_Y$ and $D_Z$ are strictly smaller than 1. 
    As a consequence, the limit of repeatedly applying $\calN$ on any input state is the maximally mixed state, which is also the only fixed point of the noise. 
Notably, the depolarizing class includes the depolarizing channel, which corresponds to the case $D_X=D_Y=D_Z=1-p$, where $p\in (0,1]$ is the noise rate of the channel.
\item \textbf{Dephasing class. } A channel $\calN$ belongs to the dephasing class if exactly one entry of the vector $\bold{D}$ is one and the other two are strictly smaller than one.
This implies that the limit of repeatedly applying $\calN$ on any input state lies on a diameter of the Bloch sphere.
The dephasing class includes the dephasing noise, which satisfy $D_Z = 1$ and $D_X=D_Y=(1-2p)$, where $p \in (0,1]$ is the noise rate of the channel. 

All state $\rho$ of the form $\rho = ({I + aZ})/{2}$ for $a \in [-1,1]$ is a fixed point of the dephasing noise.
    \end{itemize}
Furthermore, since there is no single-qubit channel with exactly two entries in the vector $\bold{D}$ strictly less than 0, then any non-unitary, unital noise must belong to either the depolarizing class or the dephasing class. 
 
\item  \textbf{Non-unital noise. } A channel $\calN$ is non-unital if $\calN(I)\neq I$. Consequently, we have that also $\calN'(I)\neq I$ and $\bold{t} \neq (0,0,0)$. This implies that channel $\calN$ is also non-unitary.
The limit of repeatedly applying $\calN$ on any input state leads to a state $\rho$ which is not the maximally mixed state.
Crucially, unlike noise channels in the other two classes, non-unital noise can increase the purity of an operator. 
\end{itemize}

Our analysis extensively relies on the normal form of the channels. As a preliminary step, we derive the following constraint on the parameters of this normal form.

\begin{lemma}[Constraint on normal form parameters]
\label{lem:constraint}
Let $\calN(\cdot) = U\mathcal{N}^{\prime}(V(\cdot)V^\dag)U^{\dag}$ 
be an arbitrary single-qubit channel with normal form parameters $\bold{D}, \bold{t}$. 
Let $\Upsilon(\cdot, \cdot)$ be the function defined as follows
\begin{align}
    \Upsilon(\bold{D}, \bold{t}) \coloneqq 
    \max_{\substack{
    \norm{\boldsymbol{a}}_2^2 = 1
    }} \left\{\sum_{Q\in\{X,Y,Z\}} a_Q^2 D_Q^2 + \left(\sum_{Q\in\{X,Y,Z\}} a_Q t_Q\right)^2 \right\} \label{eq:ups}.
\end{align}
We have that
\begin{align}
    \Upsilon(\bold{D}, \bold{t})  \leq 1.
\end{align}
Moreover the inequality is strict provided that $\norm{\bold{D}}_\infty^2 \in (0,1)$ or $\norm{\bold{t}}_2^2 \in (0,1)$.
\end{lemma}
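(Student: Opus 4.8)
The plan is to identify $\Upsilon$ with a Frobenius‑norm quantity attached to the Heisenberg map of the non‑unitary part $\calN'$. From Lemma~\ref{le:normal}, transposing the Pauli transfer matrix of $\calN'$ gives $\calN'^{\dagger}(I)=I$ (since $\calN'$ is trace preserving) and $\calN'^{\dagger}(P)=t_P\,I+D_P\,P$ for $P\in\{X,Y,Z\}$. Hence, for a real unit vector $\boldsymbol{a}$ with components $(a_X,a_Y,a_Z)$ and $O_{\boldsymbol{a}}\coloneqq a_X X+a_Y Y+a_Z Z$, one has $\calN'^{\dagger}(O_{\boldsymbol{a}})=\big(\sum_P a_P t_P\big)I+\sum_P a_P D_P\,P$, whose squared normalized Frobenius norm is exactly $\big(\sum_P a_P t_P\big)^2+\sum_P a_P^2 D_P^2$. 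Therefore $\Upsilon(\bold{D},\bold{t})=\max_{\norm{\boldsymbol{a}}_2=1}\norm{\calN'^{\dagger}(O_{\boldsymbol{a}})}_{\mathrm{F}}^2$, and the whole statement reduces to bounding this maximum.

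The first ingredient is an elementary state‑level constraint: letting $R\in\{X,Y,Z\}$ be a Pauli attaining $\norm{\bold{D}}_\infty=\abs{D_R}$, the channel $\calN'$ maps the density operators $\tfrac12(I\pm R)$ to density operators with Bloch vectors $\bold{t}\pm D_R\boldsymbol{e}_R$, both of norm at most one; summing the two inequalities gives $\norm{\bold{t}}_2^2\le 1-\norm{\bold{D}}_\infty^2$. Now $\Upsilon\le1$ follows in two ways: either combine this with Cauchy--Schwarz, $\Upsilon\le\norm{\bold{t}}_2^2+\norm{\bold{D}}_\infty^2\le1$; or use that $O_{\boldsymbol{a}}$ is Hermitian with $O_{\boldsymbol{a}}^2=(a_X^2+a_Y^2+a_Z^2)I=I$ and that $\calN'^{\dagger}$ is unital and completely positive, so Kadison--Schwarz gives $\calN'^{\dagger}(O_{\boldsymbol{a}})^2\preceq\calN'^{\dagger}(O_{\boldsymbol{a}}^2)=\calN'^{\dagger}(I)=I$, whence $\norm{\calN'^{\dagger}(O_{\boldsymbol{a}})}_{\mathrm{F}}^2=\tfrac12\Tr[\calN'^{\dagger}(O_{\boldsymbol{a}})^2]\le\tfrac12\Tr[I]=1$. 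I would feature the Kadison--Schwarz version, since it also powers the strictness claim.

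For strictness, suppose $\Upsilon=1$, attained at a unit vector $\boldsymbol{a}^{\ast}$. Equality in the trace bound forces the positive operator $\calN'^{\dagger}(I)-\calN'^{\dagger}(O_{\boldsymbol{a}^{\ast}})^2$ to have zero trace, hence to vanish: $\calN'^{\dagger}(O_{\boldsymbol{a}^{\ast}})^2=I$. Writing $\calN'^{\dagger}(O_{\boldsymbol{a}^{\ast}})=\tau I+b_X X+b_Y Y+b_Z Z$ with $\tau=\sum_P a^{\ast}_P t_P$ and $b_P=a^{\ast}_P D_P$, squaring gives $\tau\,(b_X,b_Y,b_Z)=0$ and $\tau^2+(b_X^2+b_Y^2+b_Z^2)=1$. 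If $(b_X,b_Y,b_Z)=0$ then $\tau^2=1$, so by Cauchy--Schwarz $\norm{\bold{t}}_2^2\ge\tau^2=1$, forcing $\norm{\bold{t}}_2=1$ and then, via the constraint above, $\bold{D}=0$. If instead $\tau=0$ then $\sum_P (a^{\ast}_P)^2 D_P^2=1$ forces $\norm{\bold{D}}_\infty=1$, and then $\norm{\bold{t}}_2^2\le1-\norm{\bold{D}}_\infty^2=0$. In either case $\norm{\bold{D}}_\infty^2\in\{0,1\}$ and $\norm{\bold{t}}_2^2\in\{0,1\}$, contradicting the hypothesis ``$\norm{\bold{D}}_\infty^2\in(0,1)$ or $\norm{\bold{t}}_2^2\in(0,1)$''; hence under that hypothesis $\Upsilon<1$.

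The step I expect to be the real obstacle is the strict inequality, because it genuinely requires complete positivity, not just positivity and trace preservation: there exist positive trace‑preserving but non‑completely‑positive single‑qubit maps with $\abs{D_P}\in(0,1)$ for all $P$ and $\norm{\bold{t}}_2\in(0,1)$ that nonetheless have $\Upsilon=1$ (for instance $D_X=D_Y=d$, $D_Z=0$, with $\bold{t}$ in the $XY$-plane and $\norm{\bold{t}}_2^2=1-d^2$, whose Choi operator has a strictly negative eigenvalue). So an argument using only ``$\calN'$ maps states to states'' cannot close the gap, and Kadison--Schwarz is the clean way to inject complete positivity; an equivalent but more computational route is to impose positivity of the Choi operator $\tfrac14\big(I\otimes I+I\otimes(t_X X+t_Y Y+t_Z Z)+D_X X\otimes X-D_Y Y\otimes Y+D_Z Z\otimes Z\big)$ and extract the needed inequality by a Schur complement. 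Everything else is routine bookkeeping with Cauchy--Schwarz and the elementary Bloch‑ball bound.
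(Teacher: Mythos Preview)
Your main argument is correct and takes a genuinely different route from the paper. The paper works in the Schr\"odinger picture: it picks the maximizer $\boldsymbol{b}$, builds a specific state $\rho=\tfrac12(I+O)$ with $O=\sum_P |b_P|\,\mathrm{sgn}(t_P D_P)\,P$, invokes H\"older $|\Tr[O\,\calN'(\rho)]|\le \|O\|_\infty=1$, and then uses $x^2\le x$ (strict if $|x|<1$) together with the equal-sign property of $\bold{D}$ from Lemma~\ref{le:normal}. Your approach instead identifies $\Upsilon=\max_{\|\boldsymbol a\|_2=1}\|\calN'^{\dagger}(O_{\boldsymbol a})\|_{\mathrm F}^2$ and bounds it either by the Bloch-ball estimate $\|\bold t\|_2^2+\|\bold D\|_\infty^2\le 1$ or by Kadison's inequality, with strictness handled cleanly through the equality case $\calN'^{\dagger}(O_{\boldsymbol a^\ast})^2=I$. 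Your version is shorter, avoids the sign gymnastics, and does not need the ``all $D_P$ share a sign'' fact; the paper's version is more hands-on and stays closer to the state picture used elsewhere.

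Your final paragraph, however, is off. Kadison's inequality $\Phi(A)^2\preceq\Phi(A^2)$ for \emph{Hermitian} $A$ requires only that $\Phi$ be positive and unital, not completely positive; since $\calN'$ being positive and trace-preserving already makes $\calN'^{\dagger}$ positive and unital, your entire proof (including strictness) goes through with positivity alone. Consistently, your proposed ``positive but not CP'' counterexample with $D_X=D_Y=d$, $D_Z=0$, $\bold t$ in the $XY$-plane and $\|\bold t\|_2^2=1-d^2$ is \emph{not} positive: feeding in the pure state with Bloch vector $\bold t/\|\bold t\|_2$ produces an output Bloch vector of length $\sqrt{1-d^2}+d>1$ for any $d\in(0,1)$. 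So the remark that complete positivity is essential here (and that a Schur-complement Choi analysis is the alternative) should be dropped; if anything, your proof shows the lemma holds for all positive trace-preserving single-qubit maps in normal form.
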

\begin{proof}Let $(b_X,b_Y,b_Z)$ be the coefficients maximizing Eq.~\eqref{eq:ups}, i.e.
\begin{align}
    (b_X,b_Y,b_Z) \coloneqq \underset{\substack{\boldsymbol{a} \in \mathbb{R}^3\\ \norm{\boldsymbol{a}}_2^2 = 1}}{\arg\max} \, \sum_{Q\in\{X,Y,Z\}} a_Q^2 D_Q^2 + \left(\sum_{Q\in\{X,Y,Z\}} {a_Q} t_Q\right)^2.
    \label{eq: contr-ub}
\end{align}
We define the observable $O$ and the state $\rho$ as follows:
\begin{align}
    O = \sum_{P\in \{X,Y,Z\}}\abs{b_P} \cdot \mathrm{sgn}(t_P \cdot D_P) P \qquad  \text{ and }
    \qquad \rho = \frac{I + O}{2},
\end{align}
where $\mathrm{sgn}(x)$ is the sign of $x$.
We observe that the operator norm of $O$ equals 1. First, we can upper bound it as follows
\begin{align}
    \norm{O}_\infty = \max_{\sigma} \abs{\Tr[O\sigma]} = \max_{\substack{\boldsymbol{r} \in \mathbb{R}^3\\ \norm{\boldsymbol{r}}_2^2 = 1}}\abs{\sum_{P\in\{X,Y,Z\}} \abs{b_P} r_P \cdot \mathrm{sgn}(t_P\cdot D_P)}
    \leq  \max_{\substack{\boldsymbol{r} \in \mathbb{R}^3\\ \norm{\boldsymbol{r}}_2^2 = 1}}\norm{\boldsymbol{b}}_2\norm{\boldsymbol{r}}_2=1,
\end{align}
where we used the Cauchy-Schwarz inequality. Moreover, the inequality is saturated choosing $\sigma = \rho$, so $\norm{O}_\infty =1$.
We can exploit this fact to upper bound Eq.~\eqref{eq: contr-ub}. Recall that the channel $\calN'$ acts as
\begin{align}
    \calN'\left(\frac{1}{2}\left(I + \sum_{P\in\{X,Y,Z\}} r_PP\right)\right) = \frac{I}{2} + \frac{1}{2}\sum_{P\in\{X,Y,Z\}} (r_PD_P + t_P)P.
\end{align} 
Replacing the coefficients $r_P$ with the appropriate values for the state $\rho$, we obtain that
\begin{align}
    {\Tr[O\calN'(\rho)]}= &\frac{1}{2}{\Tr\left[\left(\sum_{P\in \{X,Y,Z\}}\abs{b_P} \cdot \mathrm{sgn}(t_P\cdot D_P)P\right) \calN\left(I + \sum_{P\in \{X,Y,Z\}}\abs{b_P} \cdot \mathrm{sgn}(t_P\cdot D_P) P\right)\right]}
    \\ = &{\sum_{P\in \{X,Y,Z\}}b_P^2 D_P + \abs{b_P t_P} \mathrm{sgn}(D_P)}\,.
\end{align}
By Hölder's inequality, we have that
\begin{align}
   1= \norm{O}_\infty\norm{\rho}_1\geq \abs{\Tr[O\calN'(\rho)]}
    = &\abs{\sum_{P\in \{X,Y,Z\}}b_P^2 D_P + \abs{b_P t_P} \mathrm{sgn}(D_P)} \\ = &\sum_{P\in \{X,Y,Z\}}b_P^2 \abs{D_P} + \sum_{P\in \{X,Y,Z\}}\abs{b_P t_P} 
    \\\geq  &\sum_{P\in \{X,Y,Z\}}b_P^2 D_P^2 +  \left(\sum_{P\in \{X,Y,Z\}} b_P t_P \right)^2,
\end{align}
where the second-to-last step follows from the fact that the coefficients $D_P$ have all the same sign (cf. Lemma\ \ref{le:normal}) and in the last step we used the fact that $x^2 \leq x$ if $\abs{x}\leq 1$.

It remains to determine under which circumstances the last inequality is strict.
We observe that
\begin{align}
 \norm{\bold{D}}_\infty^2 \in (0,1) & \implies \forall {P\in\{X,Y,Z\}} : D_P^2 < \abs{D_P}
  \\& \implies \sum_{P\in \{X,Y,Z\}}b_P^2 D_P^2 < \sum_{P\in \{X,Y,Z\}}b_P^2 \abs{D_P},
\end{align}
where in the first step we used the fact that $x^2 < x$ if $\abs{x} < 1$ and in the second step we used the fact that $\norm{\boldsymbol{b}}_2^2=1$.
Moreover, by Cauchy-Schwarz inequality, we have
\begin{align}
    \left(\sum_{P\in \{X,Y,Z\}} b_P t_P \right)^2 \leq \norm{\bold{t}}_2^2\norm{\boldsymbol{b}}_2^2 =  \norm{\bold{t}}_2^2.
\end{align}
and therefore
\begin{align}
\norm{\bold{t}}_2^2 \in (0,1) \implies \left(\sum_{P\in \{X,Y,Z\}} b_P t_P \right)^2 < 
  \sum_{P\in \{X,Y,Z\}}\abs{b_P t_P},
\end{align}
where we used again the fact that $x^2 < x$ if $\abs{x} < 1$.
Putting all together, we obtain that
\begin{align}
     \sum_{P\in \{X,Y,Z\}}b_P^2 D_P^2 +  \left(\sum_{P\in \{X,Y,Z\}} b_P t_P \right)^2 < 1,
\end{align}
provided that $\norm{\bold{D}}_\infty^2 \in (0,1)$ or $\norm{\bold{t}}_2^2 \in (0,1) $. 
    
\end{proof}

\subsection{The Pauli propagation method}
We provide a brief introduction to the Pauli Propagation framework for classically simulating quantum circuits.
Let a circuit be represented by an $L$-layered quantum channel $\calC$
\begin{align}
    \calC = \calC_L \circ \calC_{L-1} \dots \circ\calC_1.
\end{align}
{Then, its matrix (vectorized) form is:}
\begin{align}
    \widehat{\calC} = \widehat{\calC}_L \widehat{\calC}_{L-1} \dots \widehat{\calC}_1.
\end{align}
Given a Pauli path $\gamma = (P_0, P_1, P_2,\dots, P_L) \in \calP_n^{L+1}$, we denote the associated Fourier coefficients as
\begin{align}
    \Phi_\gamma(\calC) \coloneqq 
    \llangle P_L | \widehat{\calC}_L | P_{L-1} \rrangle \times \llangle P_{L-1} | \widehat{\calC}_{L-1} | P_{L-2} \rrangle \times \dots \times \llangle P_1 | \widehat{\calC}_{1} | P_{0} \rrangle
\end{align}
Thus, we can express the Schrödinger-evolved state $\calC(\rho)$ and the Heisenberg-evolved observable $\calC^{\dag}(O)$ as sums over Pauli paths 
\begin{align}
&\calC(\rho) = \sum_{\gamma =(P_0,P_1,\dots, P_L) \in \calP_n^{L+1}} \llangle \rho| P_0\rrangle \Phi_\gamma(\calC) P_L,
  \\&\calC^{\dag}(O) = \sum_{\gamma =(P_0,P_1,\dots, P_L) \in \calP_n^{L+1}} \llangle O| P_L\rrangle \Phi_\gamma(\calC) P_0.
\end{align}

In order to compute such evolutions approximately, we consider a suitable subset of Pauli paths $\calA \subseteq \calP_n^{L+1}$ and the associated ``truncated linear map''  defined as
\begin{align}
    \calC_{\calA}(\cdot) \coloneqq 
    \sum_{\gamma =(P_0,P_1,\dots, P_L) \in \calA} \llangle \,\cdot\,| P_0\rrangle \Phi_\gamma(\calC) P_L\,.
\end{align}
Note that the adjoint of $\calC_{\calA}$ is
\begin{align}
        \calC^{\dag}_\calA(\cdot) =
    \sum_{\gamma =(P_0,P_1,\dots, P_L) \in \calA} \llangle \,\cdot\,| P_L\rrangle \Phi_\gamma(\calC) P_0\,.
\end{align}
Then we have
\begin{align}
&\calC_{\calA}(\rho) = \sum_{\gamma =(P_0,P_1,\dots, P_L) \in \calA} \llangle \rho| P_0\rrangle \Phi_\gamma(\calC) P_L,
  \\&\calC^{\dag}_\calA(O) = \sum_{\gamma =(P_0,P_1,\dots, P_L) \in \calA} \llangle O| P_L\rrangle \Phi_\gamma(\calC) P_0.
\end{align}

\subsubsection{Orthogonality of Pauli paths}

In this work, we categorize Pauli paths based on their support and weight, defined as follows.

\begin{definition}[Path support and path weight]
Let $\gamma = (P_0, P_1, \ldots, P_L) \in \calP_n^{L+1}$ be a Pauli path.
\begin{itemize}
    \item The support of $\gamma$ is the vector of the supports of its components:
    \begin{align}
        \mathrm{supp}(\gamma) = (\mathrm{supp}(P_0), \mathrm{supp}(P_1),\dots, \mathrm{supp}(P_L)).
    \end{align}
    \item The weight of $\gamma$ is the sum of the Pauli weight of its components:
    \begin{align}
        \abs{\gamma} = \sum_{i=1}^L \abs{P_i}.
    \end{align}
\end{itemize}
\end{definition}

We will also make extensive use the following orthogonality relationships. 

\begin{lemma}[Orthogonality of Pauli paths]
\label{lem:ortho2}
Let $\calC = \calC_L \circ \calC_{L-1}\circ \dots \circ \calC_1$ be an $L$-layered quantum channel sampled from a distribution $\calD_{\mathrm{circ}}$. 
\begin{enumerate}
    \item Assume that all layers $\calC_j$ are sampled independently from locally unbiased distributions $\calD_j$, and moreover $\calC_L^\dag$ is also sampled from a locally unbiased distribution. Then the Fourier coefficients of paths with different supports are uncorrelated:
    \begin{align}
         \mathrm{supp}(\gamma)\neq \mathrm{supp}(\gamma')\implies \bbE_{\calC\sim\calD_{\mathrm{circ}}} \left[\Phi_\gamma(\calC)\Phi_{\gamma'}(\calC) \right]= 0.
    \end{align}
    \item Assume that all layers $\calC_j$ are sampled independently from Pauli invariant distributions $\calD_j$, and moreover $\calC_L^\dag$ is also sampled from a Pauli invariant distribution. Then the Fourier coefficients of different paths are uncorrelated:
    \begin{align}
\gamma  \neq \gamma' \implies  \bbE_{\calC\sim\calD_{\mathrm{circ}}} \left[ \Phi_\gamma(\calC) \Phi_{\gamma'}(\calC)\right]
=0.
\end{align}
\end{enumerate}
\end{lemma}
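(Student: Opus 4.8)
The plan is to reduce both statements to the single-layer orthogonality relations of Lemma~\ref{lem:ortho1}, using the independence of the layers $\calC_L,\dots,\calC_1$. First I would expand the product of Fourier coefficients into a product of two-copy overlaps. Writing each transition amplitude as $\llangle P_j|\widehat{\calC}_j|P_{j-1}\rrangle=\frac{1}{2^n}\Tr[P_j\calC_j(P_{j-1})]$, using that Pauli operators are Hermitian, and using $\Tr[A]\Tr[B]=\Tr[A\otimes B]$, one obtains for any two paths $\gamma=(P_0,\dots,P_L)$ and $\gamma'=(P_0',\dots,P_L')$
\begin{align}
\Phi_\gamma(\calC)\,\Phi_{\gamma'}(\calC)=\prod_{j=1}^{L}\frac{1}{4^n}\Tr\!\left[(P_j\otimes P_j')\,\calC_j^{\otimes 2}(P_{j-1}\otimes P_{j-1}')\right].
\end{align}
Since the layers are sampled independently and the $j$-th factor depends only on $\calC_j$, the expectation factorizes,
\begin{align}
\bbE_{\calC\sim\calD_{\mathrm{circ}}}\!\left[\Phi_\gamma(\calC)\Phi_{\gamma'}(\calC)\right]=\prod_{j=1}^{L}\frac{1}{4^n}\Tr\!\left[(P_j\otimes P_j')\,\bbE_{\calC_j\sim\calD_j}\!\left[\calC_j^{\otimes 2}(P_{j-1}\otimes P_{j-1}')\right]\right],
\end{align}
so it suffices to find one index $j$ for which $\bbE_{\calC_j}[\calC_j^{\otimes 2}(P_{j-1}\otimes P_{j-1}')]=0$, which by the convention for the adjoint channel I may equivalently look for using $\bbE_{\calC_L}[(\calC_L^{\dag})^{\otimes 2}(P_L\otimes P_L')]$ in the last slot.

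For statement~1, assume $\mathrm{supp}(\gamma)\neq\mathrm{supp}(\gamma')$, so there is an index $m\in\{0,\dots,L\}$ with $\mathrm{supp}(P_m)\neq\mathrm{supp}(P_m')$. If $m\le L-1$, I would take the factor $j=m+1$: since $\calD_{m+1}$ is locally unbiased and $P_m,P_m'$ have distinct supports, Lemma~\ref{lem:ortho1}(1) gives $\bbE_{\calC_{m+1}}[\calC_{m+1}^{\otimes 2}(P_m\otimes P_m')]=0$, which makes the entire product vanish (in particular this covers the endpoint $m=0$ via the $j=1$ factor). If instead $m=L$, I would use the adjoint form $\llangle P_L|\calC_L(P_{L-1})\rrangle=\llangle\calC_L^{\dag}(P_L)|P_{L-1}\rrangle$, so the $j=L$ factor equals $\frac{1}{4^n}\Tr[(P_{L-1}\otimes P_{L-1}')\,(\calC_L^{\dag})^{\otimes 2}(P_L\otimes P_L')]$; since $\calC_L^{\dag}$ is locally unbiased and $\mathrm{supp}(P_L)\neq\mathrm{supp}(P_L')$, Lemma~\ref{lem:ortho1}(1) again forces the factor, hence the product, to vanish. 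This explains why the hypothesis requires both that every $\calC_j$ be locally unbiased and that $\calC_L^{\dag}$ be locally unbiased: the two ranges $m\le L-1$ and $m=L$ invoke the ``forward'' and ``backward'' invariance respectively. Statement~2 is proved identically, with ``$\mathrm{supp}(P)\neq\mathrm{supp}(Q)$'' replaced by ``$P\neq Q$'', ``locally unbiased'' replaced by ``Pauli invariant'', and Lemma~\ref{lem:ortho1}(2) in place of Lemma~\ref{lem:ortho1}(1); here $\gamma\neq\gamma'$ simply supplies an index $m$ with $P_m\neq P_m'$.

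The only delicate point is bookkeeping rather than mathematics: carrying the normalizations correctly through the rewriting of $\Phi_\gamma\Phi_{\gamma'}$ as a product of two-copy overlaps, and noting that a single vanishing factor annihilates the whole product because every other factor is, conditionally on its layer, a fixed finite scalar. I do not expect any substantive obstacle beyond this indexing; in particular the endpoint cases $m=0$ and $m=L$ are handled exactly as above by the first layer and by the adjoint of the last layer.
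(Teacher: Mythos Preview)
Your proposal is correct and follows essentially the same approach as the paper: rewrite $\Phi_\gamma(\calC)\Phi_{\gamma'}(\calC)$ as a product of two-copy layer overlaps, then locate an index $m$ where the Pauli operators differ (in support, respectively as operators) and kill the corresponding factor via Lemma~\ref{lem:ortho1}, using the forward distribution $\calD_{m+1}$ when $m<L$ and the adjoint hypothesis on $\calC_L^\dag$ when $m=L$. The only cosmetic difference is that you make the factorization of the expectation over independent layers explicit, whereas the paper leaves it implicit.
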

\begin{proof}
 We can express the product of the two coefficients $\Phi_\gamma(\calC)  \Phi_{\gamma'}(\calC)$ as follows:  
  \begin{align}
     &\Phi_\gamma(\calC)  \Phi_{\gamma'}(\calC) 
     \\=&\llangle P_L | \widehat\calC_L| P_{L-1} \rrangle
      \llangle P'_L | \widehat\calC_L| P'_{L-1} \rrangle
      \llangle P_{L-1} | \widehat\calC_{L-1} | P_{L-2} \rrangle 
      \llangle P'_{L-1} |\widehat\calC_{L-1} | P'_{L-2} \rrangle\dots
       \dots \llangle P_1 |\widehat\calC_1 | P_{0} \rrangle \llangle P'_1 |  \widehat\calC_1 | P'_{0} \rrangle
      \\=&\llangle P_L\otimes P_L' | \widehat\calC_L^{\otimes 2}| P_{L-1}\otimes P_{L-1}' \rrangle
      \llangle P_{L-1} \otimes P_{L-1}' |\widehat\calC_{L-1}^{\otimes 2} | P_{L-2}\otimes P'_{L-2} \rrangle 
       \dots \llangle P_1\otimes P_1' | \widehat{\calC}_1^{\otimes 2} | P_{0}\otimes P_0' \rrangle.
 \end{align}
 
 \medskip
 
We will prove the two statements separately. Consider two paths $\gamma,\gamma'$ with different supports. Then there exists an index $j$ such that $\mathrm{supp}(P_j) \neq  \mathrm{supp}(P_j')$. If $j \neq L$, we can use the fact that $\calC_{j+1}$ is sampled from a locally unbiased distribution and therefore by Lemma~\ref{lem:ortho1}
\begin{align}
    \mathbb{E}_{\calC_{j+1} \sim \calD_{j+1}} \left[\calC_{j+1}^{\otimes 2}(P_j\otimes P_j')\right] = 0\,.
\end{align}
Therefore, by linearity of the expectation and the trace we have:
\begin{align}
    \mathbb{E}_{\calC_{j+1} \sim \calD_{j+1}} \llangle P_{j+1}\otimes P_{j+1}' | \widehat\calC_{j+1}^{\otimes 2} | P_{j}\otimes P_{j}' \rrangle
    = & \mathbb{E}_{\calC_{j+1} \sim \calD_{j+1}}  \frac{1}{4^n}\Tr\left\{\calC_{j+1}^{\otimes 2}(P_j\otimes P_j') (P_{j+1}\otimes P_{j+1}')\right\}
    \\ = & \frac{1}{4^n}\Tr\left\{ \mathbb{E}_{\calC_{j+1} \sim \calD_{j+1}} [\calC_{j+1}^{\otimes 2}(P_j\otimes P_j')](P_{j+1}\otimes P_{j+1}')\right\} = 0,
\end{align}
which implies that $ \bbE_{\calC\sim\calD_{\mathrm{circ}}} [\Phi_\gamma(\calC)\Phi_{\gamma'}(\calC)] = 0$.
Similarly, if $j = L$, we can instead use the fact that $\calC_L^\dag$ is sampled from a locally unbiased distribution and therefore
\begin{align}
        \mathbb{E}_{\calC_L \sim \calD_L} \left[\calC_L^{\dag \otimes 2}(P_L\otimes P_L')\right] = 0. 
\end{align}
Hence, we can conclude that $ \bbE_{\calC\sim\calD_{\mathrm{circ}}} [\Phi_\gamma(\calC)\Phi_{\gamma'}(\calC)] = 0$.

\medskip

The proof of the second statement is analogous. 
 Consider two different paths $\gamma,\gamma'$. Then there exists an index $j$ such that $P_j \neq  P_j'$. If $j \neq L$, we can use the fact that $\calC_{j+1}$ is sampled from a Pauli invariant distribution and therefore by Lemma~\ref{lem:ortho1}
\begin{align}
    \mathbb{E}_{\calC_{j+1} \sim \calD_{j+1}} \left[\calC_{j+1}^{\otimes 2}(P_j\otimes P_j')\right] = 0\,.
\end{align}
Thus, employing the linearity of the expectation and the trace as above we can prove that $ \bbE_{\calC\sim\calD_{\mathrm{circ}}} [\Phi_\gamma(\calC)\Phi_{\gamma'}(\calC)] = 0$.
Similarly, if $j = L$, we can instead use the fact that $\calC_L^\dag$ is sampled from a Pauli invariant distribution and therefore
\begin{align}
        \mathbb{E}_{\calC_L \sim \calD_L} \left[\calC_L^{\dag \otimes 2}(P_L\otimes P_L')\right] = 0, 
\end{align}
which also allows us to conclude that $ \bbE_{\calC\sim\calD_{\mathrm{circ}}} [\Phi_\gamma(\calC)\Phi_{\gamma'}(\calC)] = 0$.
\end{proof}

\medskip

\subsubsection{Estimating second moments by sampling Pauli paths}

Let $\calD_{\mathrm{circ}}$ be a distribution over $L$-layered quantum channels $\calC_L \circ \calC_{L-1} \circ \dots \circ \calC_1$ such that all the channels $\calC_j$ are sampled from Pauli invariant distributions, which by Lemma\ \ref{lem:ortho2} implies that the Fourier coefficients of any two distinct Pauli paths are uncorrelated, i.e.
    \begin{align}
\gamma  \neq \gamma' \implies  \bbE_{\calC\sim\calD_{\mathrm{circ}}} \left[ \Phi_\gamma(\calC) \Phi_{\gamma'}(\calC)\right]
=0. \label{eq:ortho-sampling}
\end{align}
In this section we will demonstrate that several second moment quantities associated to the distribution $\calD_{\mathrm{circ}}$ can be efficiently estimated with a simple classical randomized algorithm.

Specifically, given an observable $O$, we will consider the following general expression: 
\begin{align}
    F(\calD_{\mathrm{circ}}) \coloneqq \sum_{\gamma =(P_0,P_1,\dots, P_L)\in\calP_n^{L+1}} \llangle O| P_L\rrangle^2 \bbE_{\calC\sim\calD_{\mathrm{circ}}} \left[\Phi_\gamma^2(\calC)\right] f(\gamma) \label{eq:second-moment},
\end{align}
where $f : \calP_n^{L+1} \rightarrow [0,1]$ is a function of the Pauli paths.
It is easy to see that many relevant quantities associated to $\calD_{\mathrm{circ}}$ can be written in this form:
\begin{itemize}
    \item \textbf{Variances of expectation values.} Given an observable $O$ and a state $\rho$, the associated variance can be written as
    \begin{align}
        \bbE_{\calC \sim \calD_{\mathrm{circ}}} \Tr[O\calC(\rho)]^2 =  
        &\bbE_{\calC \sim \calD_{\mathrm{circ}}} \left(\sum_{\gamma =(P_0,P_1,\dots, P_L)\in\calP_n^{L+1}} \Phi_\gamma(\calC)\llangle O| P_L\rrangle \Tr[P_0\rho] \right)^2 \\ = &\sum_{\gamma =(P_0,P_1,\dots, P_L)\in\calP_n^{L+1}}  \bbE_{\calC\sim\calD_{\mathrm{circ}}} \left[\Phi_\gamma^2(\calC)\right] \llangle O| P_L\rrangle^2 \Tr[P_0\rho]^2  ,
    \end{align}
where the second identity follows from the orthogonality relation in Eq.\ \ref{eq:ortho-sampling}. Setting $f_{\mathrm{var}}(\gamma) = \Tr[P_0\rho]^2  $, we find that the variance can be rewritten as in Eq.\ \ref{eq:second-moment}.
    \item \textbf{Mean squared errors.} Given an observable $O$, a state $\rho$, a subset of Pauli paths $\calA \subseteq \calP_{n}^{L+1}$ and its complement $\overline{\calA} = \calP_n^{L+1} \setminus \calA$, we consider the following mean squared error:
    \begin{align}
        \bbE_{\calC\sim\calD_{\mathrm{circ}}} \Tr[\left( \calC^\dag(O) - \calC_{\calA}^\dag(O)\right)\rho]^2 =  
        &\bbE_{\calC \sim \calD_{\mathrm{circ}}} \left(\sum_{\gamma =(P_0,P_1,\dots, P_L)\in\overline{\calA}} \Phi_\gamma(\calC)\llangle O| P_L\rrangle \Tr[P_0\rho] \right)^2 ,
        \\ = &\sum_{\gamma =(P_0,P_1,\dots, P_L)\in\overline{\calA}}  \bbE_{\calC\sim\calD_{\mathrm{circ}}} \left[\Phi_\gamma^2(\calC)\right] \llangle O| P_L\rrangle^2 \Tr[P_0\rho]^2  
        \\ = & \sum_{\gamma =(P_0,P_1,\dots, P_L)\in\calP_n^{L+1}}  \bbE_{\calC\sim\calD_{\mathrm{circ}}} \left[\Phi_\gamma^2(\calC)\right] f_{\mathrm{\calA}}(\gamma),
    \end{align}
    where we used again orthogonality relation of Pauli paths in Eq.\ \ref{eq:ortho-sampling} and we defined the function 
    \begin{align}
        f_{\mathrm{\calA}}(\gamma)= \begin{cases}
            \Tr[P_0\rho]^2 & \text{if $\gamma \not\in \calA$,}\\
            0 & \text{if $\gamma \in\calA$}.
        \end{cases}
    \end{align}
    Alternatively, one can evaluate the mean squared error with respect to the normalized Frobenius norm \\ $\norm{\calC^\dag(O) - \calC^\dag_\calA(O)}_{\mathrm{F}}$:
    \begin{align}
        \bbE_{\calC\sim\calD_{\mathrm{circ}}} \norm{\calC^\dag(O) - \calC^\dag_\calA(O)}_{\mathrm{F}}^2
        = &\frac{1}{4^n}  \bbE_{\calC\sim\calD_{\mathrm{circ}}} \Tr[\left(\sum_{\gamma =(P_0,P_1,\dots, P_L)\in\overline{\calA}} \Phi_\gamma(\calC)\llangle O| P_L\rrangle P_0  \right)^2] 
        \\ = &\sum_{\gamma =(P_0,P_1,\dots, P_L)\in\overline{\calA}}  \bbE_{\calC\sim\calD_{\mathrm{circ}}} [\Phi_\gamma^2(\calC)] \llangle O| P_L\rrangle^2 
        \\ = &\sum_{\gamma =(P_0,P_1,\dots, P_L)\in\calP_n^{L+1}}\bbE_{\calC\sim\calD_{\mathrm{circ}}} [\Phi_\gamma^2(\calC)] f_{\calA, {\mathrm{Frob}}} (\gamma),
    \end{align}
        where we used again orthogonality relation of Pauli paths in Eq.\ \ref{eq:ortho-sampling}  and we defined the function 
    \begin{align}
        f_{\mathrm{\calA},\mathrm{Frob}}(\gamma) = \begin{cases}
            1  & \text{if $\gamma \not\in \calA$,}\\
            0 & \text{if $\gamma \in\calA$}.
        \end{cases}
    \end{align}
\end{itemize}
We are now ready to state the main result of this section.

\begin{theorem}[Monte Carlo estimates]
\label{thm:num-estimate}
Let $\calD_{\mathrm{circ}}$ be a distribution over $L$-layered quantum channels $\calC_L \circ \calC_{L-1} \circ \dots \circ \calC_1$ such that all the channels $\calC_j$ are sampled from Pauli invariant distributions. 
Let $f : \calP_n^{L+1} \rightarrow [0,1]$ be a function of the Pauli paths computable in time $T$.
Given an observable $O$, let $F(\calD_{\mathrm{circ}})$ be defined as follows
\begin{align}
    F(\calD_{\mathrm{circ}}) \coloneqq \sum_{\gamma =(P_0,P_1,\dots, P_L)\in\calP_n^{L+1}} \llangle O| P_L\rrangle^2  \bbE_{\calC\sim\calD_{\mathrm{circ}}} \left[\Phi_\gamma^2(\calC)\right] f(\gamma).
\end{align}
Assume that we can sample $P\in\calP_n$ with probability proportional to $\llangle O|  P_L \rrangle ^2$ in time $T$. Moreover, assume that for all $j\in[L]$ and for all $P \in \calP_n$, we can also sample $Q\in\calP_n$ with probability proportional to  ${\bbE_{\calC_j \sim \calD_j}\llangle P| \widehat{\calC}_j| Q \rrangle ^2}$ in time $T$.
Then for any $\epsilon,\delta \in (0,1]$, there exists a classical randomized algorithm that runs in time
\begin{align}
    \calO(TL \epsilon^{-2} \log(1/\delta))
\end{align}
and outputs a value $\Lambda$ such that
\begin{align}
    \abs{\Lambda - F(\calD_{\mathrm{circ}})}  \leq \epsilon\norm{O}_\mathrm{F}^2,
\end{align}
with probability at least $1-\delta$.
\end{theorem}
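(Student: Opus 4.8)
The plan is to build an importance-sampling estimator for $F(\calD_{\mathrm{circ}})$ and control its deviation with Hoeffding's inequality. First I would use that the layers $\calC_1,\dots,\calC_L$ are sampled independently, so the second moment of each Fourier coefficient factorizes:
\begin{align}
    \bbE_{\calC\sim\calD_{\mathrm{circ}}}\left[\Phi_\gamma^2(\calC)\right] = \prod_{j=1}^{L} \bbE_{\calC_j\sim\calD_j}\left[\llangle P_j|\widehat{\calC}_j|P_{j-1}\rrangle^2\right] \eqqcolon \prod_{j=1}^{L} w_j(P_{j-1},P_j).
\end{align}
Hence $F(\calD_{\mathrm{circ}}) = \sum_{\gamma}\llangle O|P_L\rrangle^2\big(\prod_{j=1}^L w_j(P_{j-1},P_j)\big)f(\gamma)$ is a weighted sum over individual Pauli paths with nonnegative weights (there are no cross terms, consistently with the path orthogonality of Lemma~\ref{lem:ortho2}).

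Next I would set up the sampling. Draw $P_L$ with probability $\llangle O|P_L\rrangle^2/\norm{O}_\mathrm{F}^2$, which is possible by assumption, and then, for $j=L,L-1,\dots,1$, draw $P_{j-1}$ with probability $w_j(P_{j-1},P_j)/Z_j(P_j)$, where $Z_j(P)\coloneqq \sum_{Q\in\calP_n}w_j(Q,P) = \bbE_{\calC_j\sim\calD_j}\norm{\calC_j^\dag(P)}_\mathrm{F}^2$. This is exactly the conditional distribution provided by the hypothesis (with $P\leftrightarrow P_j$, $Q\leftrightarrow P_{j-1}$), and $Z_j(P_j)$ is recovered as the normalizing constant of the very same weights; it can in any case be computed in time $\calO(T)$ because the weights factorize over the disjoint local gates of layer $j$. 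The key point is that $Z_j(P)\in[0,1]$ for every Pauli $P$: since $\calC_j$ is a quantum channel, $\calC_j^\dag$ is positive and unital, so by the Russo--Dye theorem $\norm{\calC_j^\dag(P)}_\mathrm{F}^2 \leq \norm{\calC_j^\dag(P)}_\infty^2 \leq \norm{P}_\infty^2 = 1$.

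The sampled path $\gamma$ then has probability $q(\gamma) = \frac{\llangle O|P_L\rrangle^2}{\norm{O}_\mathrm{F}^2}\prod_{j=1}^L\frac{w_j(P_{j-1},P_j)}{Z_j(P_j)}$, and a direct substitution gives the unbiasedness relation
\begin{align}
    \bbE_{\gamma\sim q}\left[\norm{O}_\mathrm{F}^2\,f(\gamma)\prod_{j=1}^{L}Z_j(P_j)\right] = \sum_{\gamma}\llangle O|P_L\rrangle^2\left(\prod_{j=1}^{L}w_j(P_{j-1},P_j)\right)f(\gamma) = F(\calD_{\mathrm{circ}}).
\end{align}
Thus the single-shot quantity $X\coloneqq \norm{O}_\mathrm{F}^2\,f(\gamma)\prod_{j=1}^{L}Z_j(P_j)$ is an unbiased estimator of $F(\calD_{\mathrm{circ}})$, and because $f\in[0,1]$ and each $Z_j\in[0,1]$ it is supported on $[0,\norm{O}_\mathrm{F}^2]$. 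I would then output the empirical mean $\Lambda$ of $N$ i.i.d.\ copies of $X$; Hoeffding's inequality gives $\Pr[|\Lambda - F(\calD_{\mathrm{circ}})|>\epsilon\norm{O}_\mathrm{F}^2]\leq 2\exp(-2N\epsilon^2)$, so $N = \lceil \tfrac12\epsilon^{-2}\log(2/\delta)\rceil$ suffices. Each sample costs $\calO(LT)$ — one draw of $P_L$ in time $T$, $L$ conditional draws each in time $T$ (which also yield the $Z_j(P_j)$), and one evaluation of $f$ in time $T$ — for a total runtime $\calO(TL\epsilon^{-2}\log(1/\delta))$, as claimed.

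The only genuine subtlety I anticipate is the bookkeeping around the per-step normalizations $Z_j(P_j)$: one must simultaneously argue that they are bounded by $1$ (so the estimator has bounded range, which is what Hoeffding needs) and that they are available within the stated time budget (so the runtime bound holds). Both follow from the channel structure of the layers as sketched above; the unbiasedness itself is then a routine telescoping of the normalizing constants.
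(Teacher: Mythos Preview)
Your proposal is correct and follows essentially the same route as the paper: importance-sample Pauli paths by drawing $P_L$ proportionally to $\llangle O|P_L\rrangle^2$ and then each $P_{j-1}$ proportionally to $\bbE_{\calC_j}\llangle P_j|\widehat{\calC}_j|P_{j-1}\rrangle^2$, form the estimator $\norm{O}_\mathrm{F}^2\,f(\gamma)\prod_j Z_j(P_j)$, bound it in $[0,\norm{O}_\mathrm{F}^2]$ via Russo--Dye, and finish with Hoeffding. The paper's proof is the same algorithm with the same analysis (it writes the normalization as $K(\gamma)=\norm{O}_\mathrm{F}^2\prod_i\bbE_{\calC_i}\norm{\calC_i^\dag(P_i)}_\mathrm{F}^2$ and also invokes Russo--Dye for the range bound); your treatment is arguably slightly more explicit about how the $Z_j(P_j)$ are obtained within the time budget.
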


\begin{proof}
We can estimate $F(\mathcal{D}_{\mathrm{circ}})$ using the following randomized algorithm.

Let $M \coloneqq \lceil 2\log(2/\delta)/\epsilon^2 \rceil$. For $j = 1, 2, \dots, M$, repeat the following steps:
    \begin{enumerate}
    \item Sample $P_L$ with probability 
    \begin{align}
        \frac{\llangle O|  P_L \rrangle ^2}{\sum_{P\in\calP_n} \llangle O|  P \rrangle ^2} = \frac{\llangle O|  P_L \rrangle ^2}{\norm{O}_{\mathrm{F}}^2} 
    \end{align}
    \item For $i = L,\dots, 1$, given $P_i$, sample $P_{i-1}$ with probability
    \begin{align}
        \frac{\bbE_{\calC_i\sim\calD_i}\llangle P_i| \widehat{\calC}_i | P_{i-1} \rrangle ^2}{\sum_{P \in \calP_n} \bbE_{\calC_i\sim\calD_i}\llangle P_i| \widehat{\calC}_i | P \rrangle ^2} = \frac{\bbE_{\calC_i\sim\calD_i}\llangle P_i| \widehat{\calC}_i | P_{i-1} \rrangle ^2}{\bbE_{\calC_i\sim\calD_i}\norm{\calC_i^\dag(P_i)}^2_{\mathrm{F}}} 
    \end{align}
    \item Given $\gamma = (P_1, P_2, \dots, P_{L})$, output $ \lambda_j = K(\gamma) \cdot f(\gamma)$, where the normalization factor $K(\gamma)$ is defined as
    \begin{align}
        K(\gamma) = \norm{O}^2_{\mathrm{F}} \prod_{i=1}^L \left(\bbE_{\calC_i\sim\calD_i}\norm{\calC_i^\dag(P_i)}^2_{\mathrm{F}}\right).
    \end{align}
\end{enumerate}
Finally, we compute the empirical mean $\Lambda = \frac{1}{M}\sum_{j=1}^M \lambda_j$.

\smallskip

\noindent We notice that, at each iteration, a path $\gamma$ is sampled with probability
\begin{align}
    \Pr[\text{$\gamma = (P_0, P_1,\dots, P_L)$ is sampled}]  = \frac{\llangle O|  P_L \rrangle ^2}{\norm{O}_{\mathrm{F}}^2}
    \prod_{i=1}^L \left( \frac{\bbE_{\calC_i\sim\calD_i}\llangle P_i| \widehat{\calC}_i | P_{i-1} \rrangle ^2}{\bbE_{\calC_i\sim\calD_i}\norm{\calC_i^\dag(P_i)}^2_{\mathrm{F}}}\right).
\end{align}
Therefore procedure described above yields an unbiased estimator for the target value $F(\calD_{\mathrm{circ}})$, as the expected value of $\lambda_j$ is
\begin{align}
    \bbE [\lambda_j] = &\sum_{\gamma=(P_0,P_1,\dots, P_L)} \Pr[\text{$\gamma$ is sampled}] K(\gamma) f(\gamma)
    = {\llangle O|  P_L \rrangle ^2}
    \prod_{i=1}^L \left({\bbE_{\calC_i\sim\calD_i}\llangle P_i| \widehat{\calC}_i | P_{i-1} \rrangle ^2}\right) f(\gamma) 
    \\ = &\sum_{\gamma =(P_0,P_1,\dots, P_L)\in\calP_n^{L+1}} \llangle O| P_L\rrangle^2  \bbE_{\calC\sim\calD_{\mathrm{circ}}} \left[\Phi_\gamma^2(\calC)\right] f(\gamma) = F(\calD_{\mathrm{circ}}).
\end{align}
Moreover, we observe that 
\begin{align}
     \norm{\calC_i^\dag(P_i)}^2_{\mathrm{F}} \leq \norm{\calC_i^\dag(P_i)}^2_\infty \leq \norm{P_i}^2_\infty = 1 .
\end{align}
where the first inequality follows from the relation $2^n \norm{A}_F^2 = \norm{A}_2^2 \leq 2^n \norm{A}_\infty^2$ and the second inequality follows from the Russo-Dye theorem~\cite{bhatia2015positive} (which states \(\|\calC^{\dagger}(A)\|_\infty \le \|A\|_\infty\) for any quantum channel \(\calC\)), and in the final step we noted that the operator norm of any Pauli matrix is 1.
Thus, the random variable $\lambda_j$ can be bounded as follows
\begin{align}
    {\lambda_j} = K(\gamma) {f(\gamma)} \leq \norm{O}^2_{\mathrm{F}} \prod_{i=1}^L \left(\bbE_{\calC_i\sim\calD_i}\norm{\calC_i^\dag(P_i)}^2_{\mathrm{F}}\right) \leq \norm{O}^2_{\mathrm{F}},
\end{align}
where we used the fact that $f(\gamma)\leq 1$ and $\norm{\calC_i^\dag(P_i)}^2_{\mathrm{F}} \leq 1$. Moreover, $\lambda_j$ is always positive since both $K(\gamma) {f(\gamma)}$ are positive for all $\gamma$. Therefore, $\lambda_j$ lies in the interval $[0,1]$.

We can conclude the proof by invoking the Chernoff-Hoeffding bound, which guarantees the final outcome approximates the mean squared error with precision $\epsilon \norm{O}^2_{\mathrm{F}}$ with probability at least $1-\delta$.
\end{proof}

\section{Technical Lemmas for noisy random channels}
In this section we introduce several technical tools for studying the effect of noise on random circuits.
\subsection{Motivation: bounding the Frobenius norm under Heisenberg evolution}
\label{sec:frobenius}
While our work encompasses arbitrary local noise channels, most of the previous research focuses on the ``depolarizing class''.
A widely employed feature of the depolarizing class is that it contracts the Frobenius norm of any traceless observable.
In particular, let $\calN_p^{(\mathrm{depo})}$ be the single-qubit depolarizing channel with noise rate $p\in[0,1]$ and $O = \sum_{P\in\calP_n} a_P P$ be an observable with Frobenius norm $\norm{O}^2_\mathrm{F} =\sum_{P\in\calP_n} a_P^2$.
We have
\begin{align}
    \calN_p^{(\mathrm{depo})\dag\otimes n}(O)= \calN_p^{(\mathrm{depo})\otimes n}(O) = \sum_{P\in\calP_n} a_P (1-p)^{\abs{P}} P\,,
\end{align}
and therefore 
\begin{align}
    \norm{\calN_p^{(\mathrm{depo})\dag\otimes n}(O)}_\mathrm{F}^2 = \sum_{P\in\calP_n} a_P^2 (1-p)^{2\abs{P}}. 
\end{align}
Thus $\calN_p^{(\mathrm{depo})\dag\otimes n}$ strictly contracts the Frobenius norm if $O$ is traceless, as we have
\begin{align}
    \norm{\calN_p^{(\mathrm{depo})\dag\otimes n}(O)}^2_\mathrm{F} \leq (1-p)^2 \norm{O}^2_\mathrm{F}.
\end{align}
However, this is not true in general for other kinds of noise. For instance, observables supported on $\{I,Z\}^{\otimes n}$ are invariant under the dephasing channel. 
Moreover, non-unital noise can \emph{increase} the Frobenius norm under Heisenberg evolution. Let $\calA_q$ be the amplitude-damping channel with noise rate $q$. We have

\begin{align}
    \calA_q^{\dag \otimes 2}(ZZ+IZ + ZI) = (1-q)^2 ZZ+ (1-q^2)(IZ + ZI) + (2q + q^2)II \,.
\end{align}

Therefore, the corresponding Frobenius norm is
\begin{align}
  \norm{\calA_q^{\dag \otimes 2}(ZZ+IZ + ZI)}^2_\mathrm{F}  =q^2(2 + q)^2 + (1-q)^4 + 2(1-q^2)^2.
\end{align}
In particular, setting $q = 1$, we find that the squared Frobenius norm strictly increases:
\begin{align}
    \frac{\norm{\calA_q^{\dag \otimes 2}(ZZ+IZ + ZI)}^2_\mathrm{F}}{\norm{ZZ+IZ + ZI}^2_\mathrm{F}} = {3}.
\end{align}

\medskip

This preliminary observation suggests that techniques relying on norm truncation may only be applicable when the noise channel belongs to the depolarizing class. Despite this intuition, we will demonstrate that the adjoint of any local noise channel -- potentially non-unital or of a dephasing nature -- contracts the Frobenius norm of any traceless observable under averaging over random unitaries.
Our results generalize the bounds presented in previous works such as Refs.\ \cite{mele2024noise, quek2022exponentially}. While these previous results require the presence of local and global unitary 2-designs respectively, our approach encompasses a wider class of circuit ensembles.

\subsection{Average-case norm contraction}
We provide the following definition of contraction coefficients for single-qubit channels, which will serve as a useful technical tool for studying the average-case contraction of the Frobenius norm under local noise channels.
\begin{definition}[Contraction coefficients]
\label{def:coeff}
Given an observable $O = \sum_{P\in\calP_n}a_PP$ and a subset of qubits $A\subseteq[n]$, denote by $O_A$ the restriction to its Pauli terms being non-identity on $A$ and identity elsewhere:
\begin{align}
    O_A \coloneqq \sum_{\substack{P\in\calP_n\\\mathrm{supp}(P)=A}} a_P P\,.
\end{align}
Given a single-qubit quantum channel $\calN$, we define the contraction coefficient of $\calN$ as
\begin{align}
    \chi(\calN) \coloneqq  \max_{A\subseteq [n]} \max_{\substack{O\\ \norm{O_A}_\mathrm{F}\neq 0}}\left(\frac{\norm{\calN^{\emph\dag\otimes n}(O_A)}_\mathrm{F}}{\norm{O_A}_\mathrm{F}}\right)^{1/\abs{A}}.
\end{align}
Moreover, given a distribution $\calD$ over $\mathbb{U}(2)$, we define the \emph{mean squared} contraction coefficient of $\calN$ with respect to the $\calD$ as
\begin{align}
    \chi_\calD^2(\calN) \coloneqq   \max_{A\subseteq [n]} \max_{\substack{O\\ \norm{O_A}_\mathrm{F}\neq 0}}\left(\bbE_{W\sim\calD^{\otimes n}}\frac{\norm{\calN^{\emph\dag\otimes n}(W^\emph\dag O_A W)}^2_\mathrm{F}}{\norm{O_A}_\mathrm{F}^2}\right)^{1/\abs{A}}.
\end{align}
\end{definition}

We also remark that the contraction coefficient of a channel $\calN(\cdot) = U\calN' V(\cdot) V^\dag U^\dag $ can be expressed in terms of $\calN'$.

\begin{observation}
Given a noise channel expressed in normal form as $\calN(\cdot) = U\calN' V(\cdot) V^\dag U^\dag $, then $\calN$ and $\calN'$ have the same contraction coefficient $\chi(\calN)$. This follows from the following two facts: first, for any observable $O$ and subset of qubits $A$, we have that $(U^{\dag\otimes n} O U^{\otimes n})_A = U^{\dag\otimes n} O_A U^{\otimes n}$, and second, the Frobenius norm is unitarily invariant. Thus, we obtain that
\begin{align}
    \chi(\calN) = &\max_{A\subseteq [n]} \max_{\substack{O\\ \norm{O_A}_\mathrm{F}\neq 0}}\left(\frac{\norm{V^{\dag\otimes n}\calN'^{\dag\otimes n}(U^{\dag\otimes n}O_A U^{\otimes n})V^{\otimes n}}_\mathrm{F}}{\norm{O_A}_\mathrm{F}}\right)^{1/\abs{A}}
    \\= &\max_{A\subseteq [n]} \max_{\substack{O\\ \norm{O_A}_\mathrm{F}\neq 0}}\left(\frac{\norm{\calN'^{\dag\otimes n}(O_A)}_\mathrm{F}}{\norm{O_A}_\mathrm{F}}\right)^{1/\abs{A}}
    = \chi(\calN').
\end{align}
\end{observation}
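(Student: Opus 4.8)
The plan is to unwind the definition of the contraction coefficient in Definition~\ref{def:coeff} and observe that conjugating $\calN$ by the tensor-product single-qubit unitaries appearing in its normal form $\calN(\cdot)=U\calN'(V(\cdot)V^\dag)U^\dag$ only amounts to a relabeling of the operators over which the maximum is taken, leaving the ratio $\norm{\calN^{\dag\otimes n}(O_A)}_\mathrm{F}/\norm{O_A}_\mathrm{F}$ untouched.

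First I would record the two ingredients. Taking adjoints gives $\calN^\dag(\cdot)=V^\dag\calN'^\dag(U^\dag(\cdot)U)V$, hence $\calN^{\dag\otimes n}(X)=V^{\dag\otimes n}\calN'^{\dag\otimes n}(U^{\dag\otimes n}XU^{\otimes n})V^{\otimes n}$. Next, for any single-qubit unitary $W$ one has $WIW^\dag=I$ while $WPW^\dag$ is traceless Hermitian (a real combination of $\{X,Y,Z\}$) for $P\in\{X,Y,Z\}$; tensoring over sites, conjugation by $U^{\otimes n}$ sends a Pauli $P$ with $\mathrm{supp}(P)=A$ to a linear combination of Paulis all supported exactly on $A$. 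Therefore the linear map $O\mapsto U^{\dag\otimes n}OU^{\otimes n}$ respects the canonical splitting $O=\sum_{A\subseteq[n]}O_A$ into support-homogeneous parts, i.e.\ $(U^{\dag\otimes n}OU^{\otimes n})_A=U^{\dag\otimes n}O_AU^{\otimes n}$.

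With these in hand, fix $A\subseteq[n]$ and an observable $O$ with $\norm{O_A}_\mathrm{F}\neq0$. Using unitary invariance of the normalized Frobenius norm twice — once to strip the outer conjugation $V^{\dag\otimes n}(\cdot)V^{\otimes n}$ from $\calN^{\dag\otimes n}(O_A)$, and once on $O_A$ itself — together with the support identity, one finds $\norm{\calN^{\dag\otimes n}(O_A)}_\mathrm{F}=\norm{\calN'^{\dag\otimes n}(O'_A)}_\mathrm{F}$ and $\norm{O_A}_\mathrm{F}=\norm{O'_A}_\mathrm{F}$, where $O'\coloneqq U^{\dag\otimes n}OU^{\otimes n}$. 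Thus the defining ratio for $\calN$ at $O$ equals that for $\calN'$ at $O'$. Since $O\mapsto O'$ is a bijection of the Hermitian operators that preserves the constraint $\norm{O_A}_\mathrm{F}\neq0$, taking the maximum over $O$, then over $A$, and the $1/\abs{A}$ root gives $\chi(\calN)=\chi(\calN')$.

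The only delicate point — and the one place the argument could break — is the support-preservation identity $(U^{\dag\otimes n}OU^{\otimes n})_A=U^{\dag\otimes n}O_AU^{\otimes n}$: it genuinely uses that $U$ (and likewise $V$) is a product of \emph{single-qubit} unitaries, as guaranteed by the normal form, since a generic entangling $n$-qubit unitary would mix Pauli terms across different supports. Everything else is a direct substitution. The same computation, now carrying the twirl $W\sim\calD^{\otimes n}$ inside the maximization over $O$, also yields $\chi_\calD^2(\calN)=\chi_\calD^2(\calN')$.
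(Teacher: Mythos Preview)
Your argument is correct and is exactly the paper's: use the support-preservation identity $(U^{\dag\otimes n}OU^{\otimes n})_A=U^{\dag\otimes n}O_AU^{\otimes n}$ together with unitary invariance of $\norm{\cdot}_\mathrm{F}$, then relabel the maximizer via the bijection $O\mapsto U^{\dag\otimes n}OU^{\otimes n}$.

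One caveat on your closing remark: the analogous identity $\chi_\calD^2(\calN)=\chi_\calD^2(\calN')$ does \emph{not} follow by the same substitution. After stripping $V$ you are left with $\calN'^{\dag\otimes n}\bigl((WU^{\otimes n})^\dag O_A (WU^{\otimes n})\bigr)$, and since $W$ sits between $U^{\otimes n}$ and $O_A$ you cannot absorb $U^{\otimes n}$ into the maximization over $O$; what you actually get is $\chi_\calD^2(\calN)=\chi_{\calD_U}^2(\calN')$ where $\calD_U$ is the law of $WU$, which need not coincide with $\calD$. The paper does not claim this equality and instead handles the interaction of $U$ with the twirl via the monotonicity property of approximate scramblers in Lemma~\ref{le:monotone}.
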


Applying iteratively Lemma\ \ref{lem:constraint}, we can prove that the contraction coefficient $\chi(\calN)$ is strictly smaller than 1 for noise channels within the depolarizing class or the non-unital class.

\begin{lemma}[General upper bound on the contraction coefficients]
\label{lem:contr-general}
Let $\calN (\cdot) = U \calN'(V(\cdot)V^\dag)U^\dag$ be a single-qubit channel with normal form parameters $\bold{D}$ and $\bold{t}$. We have

\begin{align}
    \chi^2(\calN) \leq {\Upsilon(\bold{D},\bold{t})} \coloneqq     \max_{\substack{
    \norm{\boldsymbol{a}}_2^2 = 1
    }} \sum_{Q\in\{X,Y,Z\}} a_Q^2 D_Q^2 + \left(\sum_{Q\in\{X,Y,Z\}} a_Q t_Q\right)^2   \,.
\end{align}
In particular, by Lemma~\ref{lem:constraint}, this implies that the contraction coefficient of any single-qubit channel $\calN$ is at most 1, and it is strictly smaller than 1 if $\calN$ satisfies $\norm{\bold{D}}_\infty^2 \in (0,1)$ or $\norm{\bold{t}}_2^2 \in (0,1)$. 
\end{lemma}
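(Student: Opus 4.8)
The plan is to reduce everything to the normal form and then induct on the size of the Pauli support. By the observation immediately preceding this Lemma, $\chi(\calN)=\chi(\calN')$, so it suffices to bound $\chi^2(\calN')$. Reading the transfer matrix of $\calN'$ off Lemma~\ref{le:normal}, the single-qubit Heisenberg map acts as $\calN'^{\dag}(I)=I$ (trace preservation) and $\calN'^{\dag}(P)=D_P\,P+t_P\,I$ for $P\in\{X,Y,Z\}$. Hence on a homogeneous operator $O_A$ of support exactly $A$, the only effect of $\calN'^{\dag\otimes n}$ is, on each qubit $i\in A$, to rescale its Pauli by $D_{(\cdot)}$ and to produce an identity component of weight $t_{(\cdot)}$, while on qubits outside $A$ it does nothing.

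First I would handle the base case $|A|=1$. Writing $O_A=r_XX+r_YY+r_ZZ$ on the single qubit of $A$ and setting $\boldsymbol a=\boldsymbol r/\norm{\boldsymbol r}_2$, one gets $\calN'^{\dag\otimes n}(O_A)=\big(\sum_P r_Pt_P\big)I+\sum_P r_PD_P\,P$, so that $\norm{\calN'^{\dag\otimes n}(O_A)}_{\mathrm F}^2/\norm{O_A}_{\mathrm F}^2=\sum_P a_P^2D_P^2+\big(\sum_P a_Pt_P\big)^2\le\Upsilon(\bold D,\bold t)$ directly from the definition of $\Upsilon$.

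For the inductive step I would prove: for every $A$ with $|A|=m\ge 1$ and every $O$ with $\norm{O_A}_{\mathrm F}\ne 0$, $\norm{\calN'^{\dag\otimes n}(O_A)}_{\mathrm F}^2\le\Upsilon(\bold D,\bold t)^{m}\norm{O_A}_{\mathrm F}^2$. Fix $i\in A$ and decompose $O_A=\sum_{Q\in\{X,Y,Z\}}Q_{(i)}\otimes O^{(Q)}$, where each nonzero $O^{(Q)}$ is homogeneous of support exactly $A\setminus\{i\}$ and $\norm{O_A}_{\mathrm F}^2=\sum_Q\norm{O^{(Q)}}_{\mathrm F}^2$. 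Applying the map and using $\calN'^{\dag}(Q)=D_QQ+t_QI$,
\[
\calN'^{\dag\otimes n}(O_A)=\sum_Q D_Q\,Q_{(i)}\otimes\widetilde O^{(Q)}\;+\;I_{(i)}\otimes\sum_Q t_Q\,\widetilde O^{(Q)},\qquad \widetilde O^{(Q)}:=\calN'^{\dag\otimes(n-1)}(O^{(Q)}),
\]
and the four terms labelled by $I,X,Y,Z$ on qubit $i$ are mutually Frobenius-orthogonal, so
\[
\norm{\calN'^{\dag\otimes n}(O_A)}_{\mathrm F}^2=\sum_Q D_Q^2\,\norm{\widetilde O^{(Q)}}_{\mathrm F}^2+\Big\|\textstyle\sum_Q t_Q\,\widetilde O^{(Q)}\Big\|_{\mathrm F}^2.
\]
Crucially, $\sum_Q t_Q O^{(Q)}$ is itself homogeneous of support exactly $A\setminus\{i\}$ (a linear combination of Paulis all of that support), so the induction hypothesis applies both to each $O^{(Q)}$ and to $\sum_Q t_Q O^{(Q)}$, giving $\norm{\widetilde O^{(Q)}}_{\mathrm F}^2\le\Upsilon^{m-1}\norm{O^{(Q)}}_{\mathrm F}^2$ and $\|\sum_Q t_Q\widetilde O^{(Q)}\|_{\mathrm F}^2\le\Upsilon^{m-1}\|\sum_Q t_Q O^{(Q)}\|_{\mathrm F}^2$. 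Introducing the real symmetric positive-semidefinite Gram matrix $G_{QQ'}=\bbrakket{O^{(Q)}}{O^{(Q')}}$ and $\widehat D=\mathrm{diag}(D_X,D_Y,D_Z)$, the right-hand side is bounded by $\Upsilon^{m-1}\big(\Tr[\widehat D G\widehat D]+\bold t^{\mathsf T}G\,\bold t\big)$, while $\norm{O_A}_{\mathrm F}^2=\Tr[G]$. Finally, diagonalizing $G=\sum_k\lambda_k w_kw_k^{\mathsf T}$ with $\lambda_k\ge 0$ and $\{w_k\}$ orthonormal,
\[
\Tr[\widehat D G\widehat D]+\bold t^{\mathsf T}G\,\bold t=\sum_k\lambda_k\Big(\sum_Q D_Q^2(w_k)_Q^2+\big(\textstyle\sum_Q t_Q(w_k)_Q\big)^2\Big)\le\Upsilon(\bold D,\bold t)\sum_k\lambda_k=\Upsilon(\bold D,\bold t)\Tr[G],
\]
since each bracket is $\le\Upsilon(\bold D,\bold t)$ by $\norm{w_k}_2=1$ and the definition of $\Upsilon$. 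This closes the induction, giving $\chi^2(\calN)=\chi^2(\calN')\le\Upsilon(\bold D,\bold t)$; the remaining assertions ($\Upsilon\le 1$, and strict when $\norm{\bold D}_\infty^2\in(0,1)$ or $\norm{\bold t}_2^2\in(0,1)$) follow verbatim from Lemma~\ref{lem:constraint}.

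The main obstacle is the inductive step: one must recognize that the identity-producing term $\|\sum_Q t_Q\widetilde O^{(Q)}\|_{\mathrm F}^2$ cannot be controlled term by term but has to be fed back into the induction hypothesis as the single homogeneous operator $\sum_Q t_Q O^{(Q)}$, and then that the residual inequality $\Tr[\widehat D G\widehat D]+\bold t^{\mathsf T}G\,\bold t\le\Upsilon\Tr[G]$ is, after spectral decomposition of the positive-semidefinite Gram matrix, precisely the unit-norm quadratic optimization defining $\Upsilon(\bold D,\bold t)$. Everything else is routine bookkeeping with Pauli orthogonality and the transfer matrix of $\calN'$.
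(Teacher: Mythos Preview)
Your proof is correct. Both you and the paper reduce to $\calN'$ and contract one qubit of the support at a time, but the organizations differ. The paper isolates a single \emph{one-step claim}: if $H$ is non-identity on a fixed qubit (and arbitrary on the rest), then applying $\calN'^{\dag}$ on that qubit alone contracts $\norm{\cdot}_{\mathrm F}^2$ by $\Upsilon$. Because the ``rest'' is expanded directly in the orthonormal Pauli basis, there are no cross terms and the claim reduces immediately to the definition of $\Upsilon$; the lemma then follows by peeling off the qubits of $A$ one by one, always applying the claim to a qubit not yet touched by the noise. Your route instead applies the full $(m{-}1)$-qubit noise first via the induction hypothesis, which forces you to control the coupling between $O^{(X)},O^{(Y)},O^{(Z)}$ via the Gram matrix $G$ and its spectral decomposition. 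The paper's ordering avoids this extra step; your Gram-matrix reduction is slightly heavier machinery but has the appealing feature of making the quadratic optimization defining $\Upsilon(\bold D,\bold t)$ reappear explicitly at every stage of the induction.
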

\begin{proof}
As a preliminary step, we prove the following claim.
\begin{claim}
\label{claim:contr}
Let $H = \sum_{P\in\{X,Y,Z\}}\sum_{Q \in \{I,X,Y,Z\}^{\otimes (k-1)}} b_{P\otimes Q} P\otimes Q $ be an observable. 
We have
\begin{align}
    \norm{\calN'^{\dag}\otimes \mathcal{I}_{k-1}(H)}_\mathrm{F}^2 \leq c \norm{H}_\mathrm{F}^2,
\end{align}
where $\mathcal{I}_{k-1}(\cdot)$ is the identity channel acting on $k-1$ qubits and $c\coloneqq \Upsilon(\bold{D},\bold{t})$.
\end{claim}
\begin{proof}[Proof of the Claim]
We start by writing the expression of $\calN'^{\dag}\otimes \mathcal{I}_{k-1}(H)$
\begin{align}
    \calN'^{\dag}\otimes \mathcal{I}_{k-1}(H)=
   \sum_{Q \in \{I,X,Y,Z\}^{\otimes (k-1)}}\left( \sum_{P\in\{X,Y,Z\}} D_P b_{P\otimes Q} P\otimes Q
    + \left(\sum_{P\in\{X,Y,Z\}} b_{P\otimes Q}  t_P\right) I\otimes Q\right).
\end{align}
Therefore the normalized Frobenius norm of $\calN'^{\dag}\otimes \mathcal{I}_{k-1}(H)$ satisfies
\begin{align}
    \norm{\calN'^{\dag}\otimes \mathcal{I}_{k-1}(H)}^2_\mathrm{F}
    &=  \sum_{Q \in \{I,X,Y,Z\}^{\otimes (k-1)}} \left(\sum_{P\in\{X,Y,Z\}} D_P^2 b^2_{P\otimes Q} 
    + \left(\sum_{P\in\{X,Y,Z\}} b_{P\otimes Q}  t_P\right)^2\right)
    \\& =  \sum_{Q \in \{I,X,Y,Z\}^{\otimes (k-1)}} \left(\sum_{P\in\{X,Y,Z\}} D_P^2 \frac{b^2_{P\otimes Q}}{\norm{\boldsymbol{b}_Q}^2_2}
    + \left(\sum_{P\in\{X,Y,Z\}}\frac{b_{P\otimes Q}}{\norm{\boldsymbol{b}_Q}_2} t_P\right)^2\right) {\norm{\boldsymbol{b}_Q}^2_2}
    \\ &\leq \Upsilon(\bold{D},\bold{t}) \sum_{Q \in \{I,X,Y,Z\}^{\otimes (k-1)}} {\norm{\boldsymbol{b}_Q}^2_2}
     \coloneqq c \, \norm{H}_\mathrm{F}^2,
\end{align}
where we denoted ${\norm{\boldsymbol{b}_Q}^2_2}\coloneqq\sum_{P\in\{X,Y,Z\}} b^2_{P\otimes Q} $ and the only inequality follows from Lemma~\ref{lem:constraint}. This concludes the proof.
\end{proof}
Given a subset of qubits $A\subseteq [n]$ with size $\abs{A} = k$, let $O_A = \sum_{\substack{P \in \calP_n \\ \mathrm{supp}(P) = A}} a_P P$ be a linear combination of Pauli operators supported on $A$. Given $j \in A$, there exist some suitable coefficients $b_{P\otimes Q}$ such that
    \begin{align}
    \bigotimes_{\substack{i \in A \setminus \{j\} }} \calN_{\{i\}}'^\dag(O_A) 
    =  \sum_{P\in\{X,Y,Z\}}\sum_{Q \in \{I,X,Y,Z\}^{\otimes (k-1)}} b_{P\otimes Q} P\otimes Q \otimes I^{\otimes (n-k)}
\end{align}
where the channel $\calN_{\{i\}}'^\dag$ act on the $i$-th qubit and we have omitted the identity channel acting on the remaining $n +1 - k$ qubits in order to simplify the notation. Hence the above observable satisfies the assumptions of Claim\ \ref{claim:contr}.
We can thus prove the desired result by applying iteratively Claim~\ref{claim:contr}. 
\begin{align}
    \norm{\calN'^{\dag\otimes n}(O_A)}_\mathrm{F}^2 
    = &\norm{\bigotimes_{\substack{i \in A}} \calN_{\{i\}}'^\dag (O_A)}_\mathrm{F}^2 
     = \norm{  \calN_{\{j_1\}}'^\dag \left(\bigotimes_{\substack{i \in A \setminus \{j_1\}}} \calN_{\{i\}}'^\dag (O_A)\right)}_\mathrm{F}^2 
    \\\leq  c &\norm{ \bigotimes_{\substack{i \in A \setminus \{j_1\}}} \calN_{\{i\}}'^\dag (O_A)}_\mathrm{F}^2 
     =   c\norm{  \calN_{\{j_2\}}'^\dag \left(\bigotimes_{\substack{i \in A \setminus \{j_1, j_2\}}} \calN_{\{i\}}'^\dag (O_A)\right)}_\mathrm{F}^2 
    \\  \leq  c^2&\norm{ \bigotimes_{\substack{i \in A \setminus \{j_1, j_2\}}} \calN_{\{i\}}'^\dag (O_A)}_\mathrm{F}^2 
    \leq \dots \leq c^k \norm{O_A}^2_\mathrm{F},
\end{align}
where we have again omitted the identity channels to simplify the notation. Since $c = \Upsilon(\bold{D}, \bold{t})$, the statement of the Lemma follows.
\end{proof}

Since $\Upsilon(\bold{D}, \bold{t}) < 1$ only when the channel $\mathcal{N}$ is either depolarizing-like or non-unital, a different approach is required to analyze the contraction properties of dephasing-like noise. In particular, we will demonstrate that such noise still induces norm contraction when the initial Pauli observable is scrambled -- i.e. it spreads into at least two distinct Pauli components -- under the action of a unitary transformation. This idea is captured by the notion of \emph{approximate scrambling}, as defined in Definition~\ref{def:mixing}. We formalize this insight in the following lemma.

\begin{lemma}[Contraction coefficients under approximately locally scrambling evolution]
\label{lem:norm-contr-scr}
Let $\calN$ be a single-qubit channel with normal form parameters $\bold{D}$ and $\bold{t}$ and let $\calD$ be an $\eta$-approximate scrambler over $\mathbb{U}(2)$ for some $\eta \in [0,1]$.
\begin{enumerate}
    \item If $\eta = 0$ (i.e., if $\calD$ is a single-qubit unitary $2$-design), we have
    \begin{align}
        \chi_\calD^2(\calN) = \frac{\norm{\bold{D}}_2^2 + \norm{\bold{t}}_2^2}{3} .
    \end{align}
    \item If $\calN$ belongs to the dephasing class, we have
    \begin{align}
        \chi_\calD^2(\calN) \leq \eta + \left({1-\eta}\right)\frac{\norm{\bold{D}}_2^2}{3}.
    \end{align}
\end{enumerate}
\end{lemma}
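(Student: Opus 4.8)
The plan is to translate the evaluation of $\chi_\calD^2(\calN)$ into a $3\times3$ real‑matrix problem and then solve that problem in the two regimes. The first step is to reduce to the normal form: writing $\calN(\cdot)=U\calN'(V(\cdot)V^\dag)U^\dag$ as in Lemma~\ref{le:normal} and using unitary invariance of the Frobenius norm gives $\norm{\calN^{\dag\otimes n}(W^\dag O_A W)}_{\mathrm{F}}=\norm{{\calN'}^{\dag\otimes n}(\widetilde W^\dag O_A\widetilde W)}_{\mathrm{F}}$ with $\widetilde W_i:=W_iU$. Since a unitary $2$-design is invariant under right multiplication, and by Lemma~\ref{le:monotone} so is the class of $\eta$-approximate scramblers, the distribution of $\widetilde W_i$ is again of the required type; after a change of variables it therefore suffices to prove both bounds for $\calN=\calN'$.

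Next I would recast the averaged squared Frobenius norm through the single‑qubit ``doubled'' map $\calM:=\calN'\circ{\calN'}^{\dag}$: since the matrix form of ${\calN'}^{\dag}$ is the adjoint of that of $\calN'$, one has $\norm{{\calN'}^{\dag\otimes k}(A)}_{\mathrm{F}}^2=\bbra{A}\widehat{\calM}^{\otimes k}\kket{A}$. Because ${\calN'}^{\dag}(I)=I$ and the layer $W=\bigotimes_j W_j$ acts qubit‑wise, only the $k=|A|$ qubits in $\mathrm{supp}(O_A)$ are relevant, and the restriction $H$ of $O_A$ to those qubits carries a non‑identity single‑qubit Pauli on each of them, so $\kket{H}$ lies in the span of $\{X,Y,Z\}^{\otimes k}$. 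A direct vectorized computation — the $\{X,Y,Z\}$-block of $\widehat{\calM}$ equals $G:=\bold{t}\bold{t}^{T}+\mathrm{diag}(D_X^2,D_Y^2,D_Z^2)$, and the $\{X,Y,Z\}$-block of the ``conjugate by $W_j$'' superoperator is the Bloch rotation $R_{W_j}\in\mathrm{SO}(3)$ (fixed by $W_jPW_j^\dag=\sum_{P'}(R_{W_j})_{P',P}P'$) — then yields
\begin{align}
    \bbE_{W\sim\calD^{\otimes n}}\norm{{\calN'}^{\dag\otimes n}(W^\dag O_A W)}_{\mathrm{F}}^2=\bbra{H}\Big(\bbE_{W\sim\calD}\!\big[\,R_W\,G\,R_W^{T}\,\big]\Big)^{\otimes k}\kket{H}.
\end{align}
As $G$, hence $G':=\bbE_{W\sim\calD}[R_W G R_W^{T}]$, is symmetric positive semidefinite, the quadratic form is bounded by $\lambda_{\max}(G')^{k}\norm{H}_{\mathrm{F}}^2$, and since $\norm{H}_{\mathrm{F}}=\norm{O_A}_{\mathrm{F}}$ this gives $\chi_\calD^2(\calN)\leq\lambda_{\max}(G')$, with equality already at $|A|=1$ upon choosing $H$ aligned with the top eigenvector of $G'$.

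For part~1, $\eta=0$ means $\calD$ is a $2$-design, and since the entries of $R_W G R_W^{T}$ are polynomials of degree at most $2$ in those of $W$ and of $W^*$, the $\calD$-average coincides with the Haar average. The pushforward of the Haar measure on $\mathbb{U}(2)$ under $W\mapsto R_W$ is the Haar measure on $\mathrm{SO}(3)$, and by Schur's lemma (the defining representation of $\mathrm{SO}(3)$ being real‑irreducible) $\bbE_{R}[R G R^{T}]=\frac{1}{3}\,\mathrm{Tr}(G)\,I_3$ for every symmetric $G$. Since $\mathrm{Tr}(G)=\norm{\bold{D}}_2^2+\norm{\bold{t}}_2^2$, this yields $\chi_\calD^2(\calN)=\frac{\norm{\bold{D}}_2^2+\norm{\bold{t}}_2^2}{3}$.

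For part~2, $\calN$ in the dephasing class means $\bold{t}=0$ and exactly one $|D_P|=1$; absorbing a coordinate permutation into $U,V$ we may take $|D_Z|=1$, and then complete positivity of $\calN'$ (the condition $|D_X\pm D_Y|\leq|1\pm D_Z|$) forces $D_X^2=D_Y^2=:D^2\in[0,1]$. Hence $G=D^2 I_3+(1-D^2)\,e_Z e_Z^{T}$ (with $e_Z\in\mathbb{R}^3$ the standard basis vector for the $Z$-axis), so $G'=D^2 I_3+(1-D^2)\,M_Z$ where $M_Z:=\bbE_{W\sim\calD}[(R_W e_Z)(R_W e_Z)^{T}]$ and $\lambda_{\max}(G')=D^2+(1-D^2)\lambda_{\max}(M_Z)$. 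To bound $\lambda_{\max}(M_Z)=\max_{\norm{v}_2=1}\bbE_W[(v^{T}R_W e_Z)^2]$ I would write $v^{T}R_W e_Z=\llangle\sum_{P'}v_{P'}P'\,|\,WZW^\dag\rrangle$, expand the square, and use the orthogonality axiom~\eqref{eq:apx-ls1} to kill the cross terms $\bbE_W[\llangle P'|WZW^\dag\rrangle\llangle P''|WZW^\dag\rrangle]=\llangle\bbE_W[W^{\dag\otimes2}(P'\otimes P'')W^{\otimes2}]\,|\,Z\otimes Z\rrangle$ for $P'\neq P''$; what survives is $\sum_{P'}v_{P'}^2\,\bbE_W\llangle P'|WZW^\dag\rrangle^2\leq\max_{P'}\bbE_W\llangle P'|WZW^\dag\rrangle^2\leq\frac{1+2\eta}{3}$ by the approximate Pauli‑mixing axiom~\eqref{eq:apx-ls2}. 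Thus $\lambda_{\max}(G')\leq D^2+(1-D^2)\frac{1+2\eta}{3}$, which rearranges in one line to $\eta+(1-\eta)\frac{1+2D^2}{3}=\eta+(1-\eta)\frac{\norm{\bold{D}}_2^2}{3}$. The main obstacle is the normal‑form‑to‑$3\times3$ translation in the second step — the bookkeeping on the doubled Hilbert space and the verification that, after projecting onto non‑identity Paulis, the single‑qubit transfer block is exactly $R_W G R_W^{T}$; once that is in place, part~1 is a routine $\mathrm{SO}(3)$ twirl and part~2 a short argument combining complete positivity with the orthogonality and mixing axioms.
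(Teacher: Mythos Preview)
Your proof is correct and takes a somewhat different technical route from the paper. The paper works coordinatewise: for a single non-identity Pauli $P$ it expands $V^\dag W^\dag P WV=\sum_Q a_Q Q$, uses Lemma~\ref{le:monotone} to control $\max_Q \bbE_W[a_Q^2]$, computes $\bbE_W\norm{\calN^\dag(W^\dag P W)}_{\mathrm F}^2=\bbE_W\big[\sum_Q a_Q^2 D_Q^2+(\sum_Q a_Q t_Q)^2\big]$, and then tensorizes to general $O_A$ by invoking orthogonality of Pauli terms with distinct supports. You instead translate the whole question into the Bloch picture: the single-qubit transfer block is identified with $R_W G R_W^{T}$ for $G=\bold{t}\bold{t}^{T}+\mathrm{diag}(D_X^2,D_Y^2,D_Z^2)$, so that $\chi_\calD^2(\calN)=\lambda_{\max}\big(\bbE_W[R_W G R_W^{T}]\big)$, and the tensorization becomes automatic via the eigenvalue bound.

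This buys you a cleaner part~1 --- the $\mathrm{SO}(3)$ Schur twirl gives $\frac{1}{3}\Tr(G)I_3$ in one stroke, whereas the paper has to separately handle the diagonal $\sum_Q a_Q^2 D_Q^2$ and the cross-term $(\sum_Q a_Q t_Q)^2$. Your part~2 is also slightly more complete: you explicitly derive $D_X^2=D_Y^2$ from the complete-positivity constraint $|D_X\pm D_Y|\leq|1\pm D_Z|$, which the paper simply asserts ``without loss of generality''. Conversely, the paper's approach is more elementary and self-contained, requiring no detour through the $\mathbb{U}(2)\to\mathrm{SO}(3)$ covering or representation theory.
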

\begin{proof}
Let $\calN(\cdot) = U\calN'(V(\cdot)V^\dag)U^\dag $ be an arbitrary single-qubit channel. 
   Given $P\in\{X,Y,Z\}$, let $V^\dag W^\dag P W V = \sum_{Q\in \{X,Y,Z \}} a_Q Q$.
By the definition of  $\eta$-approximate scrambler and Lemma~\ref{le:monotone}, we have that
\begin{align}
    \max_{Q\in \{X,Y,Z \}}\bbE_{W\sim\calD^{\otimes n}} \, a_Q^2 =  \max_{Q\in \{X,Y,Z \}} \bbE_{W\sim\calD^{\otimes n}} \llangle P | WU \otimes W^* U^*| Q\rrangle^2 \leq  \frac{1+2\eta}{3},
\end{align}
where for $\eta = 0$ the inequality is in fact an equality.
We obtain that
\begin{align}
    &\bbE_{W\sim\calD^{\otimes n}}\norm{\calN^\dag(W^\dag P W)}^2_{\mathrm{F}} 
 = \bbE_{W\sim\calD^{\otimes n}}\sum_{Q\in\{X,Y,Z\}} a_Q^2 D_Q^2 + \left(\sum_{Q\in\{X,Y,Z\}} a_Q t_Q\right)^2 . 
\end{align}

\noindent\underline{Case 1 : $\eta = 0$, $\calN$ arbitrary.} We have
\begin{align}
    &\bbE_{W\sim\calD^{\otimes n}}\norm{\calN^\dag(W^\dag P W)}^2_{\mathrm{F}} 
    = \frac{1}{3} \left(\norm{\bold{D}}_2^2+ \norm{\bold{t}}_2^2 \right). 
\end{align}
Moreover, given $A\subseteq [n]$ a $O = \underset{\substack{P\in\calP_n \\ \mathrm{supp}(P) = A}}{\sum} b_P P$, we have that 
\begin{align}
    &\bbE_{W\sim\calD^{\otimes n}}\norm{\calN^\dag(W^\dag O W)}^2_{\mathrm{F}} = \underset{\substack{P\in\calP_n \\ \mathrm{supp}(P) = A}}{\sum} b_P^2 \left\{\frac{\norm{\bold{D}}_2^2+ \norm{\bold{t}}_2^2}{3}\right\}^{\abs{A}} = \left\{\frac{\norm{\bold{D}}_2^2+ \norm{\bold{t}}_2^2}{3}\right\}^{\abs{A}} \norm{O}^2_{\mathrm{F}}.
\end{align}

\smallskip

\noindent \underline{Case 2: $\eta \neq 0$, $\mathcal{N}$ within the dephasing class.}
Assume without loss of generality that $D_X = D_Y =  D < 1$ and $D_Z = 1$.
We need to upper bound the following quantity:
\begin{align}
    &\bbE_{W\sim\calD^{\otimes n}}\norm{\calN^\dag(W^\dag P W)}^2_{\mathrm{F}} 
 = \bbE_{W\sim\calD^{\otimes n}}\sum_{Q\in\{X,Y,Z\}} a_Q^2 D_Q^2 =  
 \bbE_{W\sim\calD^{\otimes n}} \left(a_X^2 + a_Y^2\right) D + a_Z^2,
\end{align}
under the constraint that $ \max_{Q\in \{X,Y,Z\}}\bbE_{W\sim\calD} a_Q^2 \leq \frac{1}{3}(1+2\eta)$. When $\bbE_{W\sim\calD}  a_Z^2 = \frac{1}{3}(1+2\eta)$ the expression is maximized and takes value
\begin{align}
\left(1- \frac{1}{3}(1+2\eta)\right) D + \frac{1}{3}(1+2\eta)= \eta + \left({1-\eta}\right)\frac{\norm{\bold{D}}_2^2}{3}.
\end{align}
Moreover, given $A\subseteq [n]$ a $O = \underset{\substack{P\in\calP_n \\ \mathrm{supp}(P) = A}}{\sum} b_P P$, we have that 
\begin{align}
    &\bbE_{W\sim\calD^{\otimes n}}\norm{\calN^\dag(W^\dag O W)}^2_{\mathrm{F}} = \underset{\substack{P\in\calP_n \\ \mathrm{supp}(P) = A}}{\sum} b_P^2 \left\{\eta + \left({1-\eta}\right)\frac{\norm{\bold{D}}_2^2}{3}\right\}^{\abs{A}} = \left\{\eta + \left({1-\eta}\right)\frac{\norm{\bold{D}}_2^2}{3}\right\}^{\abs{A}} \norm{O}^2_{\mathrm{F}}.
\end{align}
\end{proof}

The contraction coefficients proposed in Def.~\ref{def:coeff} can be related to the average-case contraction of the Frobenius norm.

\begin{lemma}[Average norm contraction]
Let $\calN$ be a single-qubit channel and let $\calD$ be a 1-design over $\mathbb{U}(2)$. For all observables $O=\sum_{P\in\calP_n}a_PP$, we have
\begin{align}
    \bbE_{W\sim\calD^{\otimes n}}\norm{\calN^{\dag\otimes n}(W^{\dag}OW)}_\mathrm{F}^2 \leq \sum_{P\in\calP_n} \left\{ \chi^2_\calD(\calN)\right\}^{\abs{P}}a_P^2.
\end{align}
In particular, we also have
\begin{align}
    \bbE_{W\sim\calD^{\otimes n}} \, \norm{\calN^{\emph\dag\otimes n}(W^{\emph\dag}OW)}_\mathrm{F}^2 - 4^{-n}{\Tr[O]^2}
     \leq \left\{ \chi^2_\calD(\calN)\right\}\left(\norm{O}_\mathrm{F}^2 - 4^{-n}{\Tr[O]^2}\right)\,.
\end{align}
\end{lemma}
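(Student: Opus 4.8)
The plan is to decompose the observable into its support-homogeneous pieces, apply the orthogonality guaranteed by Lemma~\ref{lem:ortho1} for $1$-designs, and then bound each piece by the definition of the mean squared contraction coefficient $\chi^2_\calD(\calN)$. Concretely, write $O = \sum_{A\subseteq[n]} O_A$ with $O_A = \sum_{\mathrm{supp}(P)=A} a_P P$ as in Definition~\ref{def:coeff}. First I would observe that, since $\calD$ is a $1$-design, the distribution over the single-qubit conjugation maps $W(\cdot)W^\dag$ composed with $\calN^{\dag\otimes n}$ is locally unbiased in the sense of Definition~\ref{def:ls-supops}; hence by Eq.~\eqref{eq:ortho-from-ls} of Lemma~\ref{lem:ortho1}, the cross terms between $O_A$ and $O_{A'}$ with $A\neq A'$ vanish in expectation. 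This gives
\begin{align}
    \bbE_{W\sim\calD^{\otimes n}}\norm{\calN^{\dag\otimes n}(W^{\dag}OW)}_\mathrm{F}^2 = \sum_{A\subseteq[n]} \bbE_{W\sim\calD^{\otimes n}}\norm{\calN^{\dag\otimes n}(W^{\dag}O_A W)}_\mathrm{F}^2,
\end{align}
where I have used that the normalized Frobenius norm of a vectorized operator is $\bbrakket{\cdot}{\cdot}$ and that the expectation of the tensor square decomposes over supports.

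Next I would apply the definition of $\chi^2_\calD(\calN)$ directly: by Definition~\ref{def:coeff}, for each $A$ we have $\bbE_{W\sim\calD^{\otimes n}}\norm{\calN^{\dag\otimes n}(W^{\dag}O_A W)}_\mathrm{F}^2 \leq \{\chi^2_\calD(\calN)\}^{\abs{A}}\norm{O_A}_\mathrm{F}^2$. Since $\norm{O_A}_\mathrm{F}^2 = \sum_{\mathrm{supp}(P)=A} a_P^2$ and every Pauli $P$ has $\abs{P}=\abs{\mathrm{supp}(P)}$, summing over $A$ yields $\sum_{P\in\calP_n}\{\chi^2_\calD(\calN)\}^{\abs{P}}a_P^2$, which is the first claimed inequality.

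For the second inequality, I would isolate the identity Pauli: the $A=\varnothing$ term contributes exactly $a_I^2 = 4^{-n}\Tr[O]^2$ and is unchanged by the channel (since $\calN^\dag$ is trace-preserving's adjoint, i.e. unital on $I$, and $W^\dag I W = I$), contributing $4^{-n}\Tr[O]^2$ to the left-hand side as well. Subtracting it from both the bound and the identity, every remaining term has $\abs{P}\geq 1$, so $\{\chi^2_\calD(\calN)\}^{\abs{P}} \leq \chi^2_\calD(\calN)$ (using $\chi^2_\calD(\calN)\leq 1$, which follows from Lemma~\ref{lem:contr-general} together with Lemma~\ref{lem:norm-contr-scr}), giving
\begin{align}
    \sum_{\substack{P\in\calP_n\\ P\neq I}} \{\chi^2_\calD(\calN)\}^{\abs{P}}a_P^2 \leq \chi^2_\calD(\calN)\sum_{\substack{P\in\calP_n\\ P\neq I}}a_P^2 = \chi^2_\calD(\calN)\left(\norm{O}_\mathrm{F}^2 - 4^{-n}\Tr[O]^2\right).
\end{align}
The main obstacle I anticipate is the bookkeeping around the orthogonality step: one must verify carefully that the composition $\calN^{\dag\otimes n}\circ (W(\cdot)W^\dag)$ with $W\sim\calD^{\otimes n}$ genuinely satisfies the locally unbiased condition at the level of second moments, so that Lemma~\ref{lem:ortho1} applies verbatim; once that is in place the rest is a routine application of Definition~\ref{def:coeff} and the bound $\chi^2_\calD(\calN)\leq 1$.
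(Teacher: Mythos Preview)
Your proposal is correct and follows essentially the same route as the paper: decompose $O$ into support-homogeneous pieces $O_A$, use the $1$-design property (via Lemma~\ref{lem:ortho1}) to kill cross terms between different supports, and then invoke Definition~\ref{def:coeff} termwise. One minor remark: the bound $\chi^2_\calD(\calN)\leq 1$ that you need for the second inequality follows most directly from the pointwise inequality $\chi^2_\calD(\calN)\leq \chi^2(\calN)$ (since $W^\dag O_A W$ still has support exactly $A$ and the Frobenius norm is unitarily invariant) together with Lemma~\ref{lem:contr-general}, rather than from Lemma~\ref{lem:norm-contr-scr}.
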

\begin{proof}
    In general, we can write
\begin{align}
    O = \sum_{w=1}^n \, \sum_{\substack{A \subseteq [n]\\\abs{A}=w}}\, \sum_{\substack{P\in\calP_n\\ \mathrm{supp}(P)=A}}
    a_P P.
\end{align}
We have
\begin{align}
    \bbE_{U\sim\calD} (U^\dag OU)^{\otimes 2} = \bbE_{U\sim\calD} \sum_{w=1}^n \, \sum_{\substack{A \subseteq [n]\\\abs{A}=w}} \left(\sum_{\mathrm{P}\in A}a_P U^\dag PU\right)^{\otimes 2}\,.
\end{align}

Fix a subset $A$ of size $k$,
\begin{align}
    \norm{\calN^{\otimes n\dag}\left(\sum_{P\in A}a_P U^{\dag} P U\right)}_\mathrm{F}^2
    = \norm{\calN^{\otimes k\dag}\left(\sum_{P\in\{X,Y,Z\}^{\otimes k}}b_P P \right)}_\mathrm{F}^2 \leq \left\{ \chi^2_\calD(\calN)\right\}^k \norm{O}_\mathrm{F}^2,
\end{align}
for some suitable coefficients $b_P$.
Putting all together
\begin{align}
     &\bbE_{U\sim\calD}  \norm{\calN^{\dag\otimes n}(U^\dag O U)}^2_\mathrm{F}
     =  \sum_{w=1}^n \, \sum_{\substack{A \subseteq [n]\\\abs{A}=w}}\, \sum_{\substack{P\in\calP_n\\ \mathrm{supp}(P)=A}}
     \bbE_{U\sim\calD} \norm{\calN^{\otimes n\dag}\left(\sum_{P\in A}a_P U^{\dag} P U\right)}_\mathrm{F}^2
     \\ \leq  &\sum_{w=1}^n \, \sum_{\substack{A \subseteq [n]\\\abs{A}=w}}\, \sum_{\substack{P\in\calP_n\\ \mathrm{supp}(P)=A}}  \left\{ \chi^2_\calD(\calN)\right\}^{w} a_P^2 = \sum_{P\in\calP_n} \left\{ \chi^2_\calD(\calN)\right\}^{\abs{P}} a_P^2.
\end{align}

\end{proof}

We conclude this section by explicitly providing the values of $\chi(\calN)$ and $\chi_\calD(\calN)$ for amplitude damping and dephasing noise.

\begin{example}[Contraction coefficients]
\label{obs:coeff-noise}
Let $\calN^{(\mathrm{amp})}_\gamma$ be the amplitude damping channel with noise strength $\gamma$, characterized by the normal form parameters $\bold{D} = (\sqrt{1-\gamma}, \sqrt{1-\gamma}, 1-\gamma)$  and $\bold{t}=(0,0,\gamma)$.

Then we obtain
\begin{align}
    \chi^2\left(\calN^{(\mathrm{amp})}_\gamma\right) \leq &\max_{\substack{
    \norm{\boldsymbol{a}}_2^2 = 1
    }} \sum_{Q\in\{X,Y,Z\}} a_Q^2 D_Q^2 + \left(\sum_{Q\in\{X,Y,Z\}} a_Q t_Q\right)^2 
   \\ \leq & \norm{\bold{D}}^2_\infty + \norm{\bold{t}}_2^2 = 1 - \gamma + \gamma^2,
\end{align}
where the first step follows from  Lemma~\ref{lem:contr-general}  and the second step follows from 
Hölder's inequality.

\smallskip

Furthermore, we also consider the dephasing channel $\calN^{(\mathrm{deph})}_p$ with noise strength $p$, characterized by the normal form parameters $\bold{D} = (1-2p, 1-2p, 1)$  and $\bold{t}=(0,0,0)$. Let $\calD$ be the distribution over $\mathbb{U}(2)$ obtained by performing two Pauli rotations along orthogonal directions with angles sampled independently and uniformly from $[0,2\pi)$.

By Example~\ref{ex:scr-pauli-rot}, we know that the $\calD$ is an $\eta$-approximate scrambler with $\eta = 1/4$. Therefore, Lemma~\ref{lem:norm-contr-scr} yields the following mean squared contraction coefficient:
\begin{align}
    \chi^2_\calD\left(\calN^{(\mathrm{deph})}_p\right) = \frac{1 + (1-2p)^2}{2}.
\end{align}
\end{example}

\subsection{Effective depolarizing rate of noisy random channels}
As previously observed in Ref.~\cite{schuster2024polynomial}, every quantum channel can be re-expressed as a sequential composition of the depolarizing channel and a suitable linear map, which is in general non-physical.
In particular, given a value $p\in[0,1]$ , we decompose the noisy channel $\calN$ as
\begin{align}
    \calN =   \calN_p^{(\mathrm{depo})} \circ \tilde\calN_p, \label{eq:depol-decomp}
\end{align}
where $\calN_p^{(\mathrm{depo})}$ is the depolarizing channel with noise rate $p$, i.e., $\calN_p^{(\mathrm{depo})}(\cdot) = (1-p)(\cdot) + p \frac{I}{2} \Tr[\cdot]$, and $\tilde\calN_p$ is the a linear map given by
\begin{align}
    &\tilde\calN_p(I) =  \calN(I)/({1-p}) - I p/(1-p)\,,
    \\&\tilde\calN_p(X) =\calN(X)/ (1-p),
    \\&\tilde\calN_p(Y) = \calN(Y)/ (1-p),
    \\&\tilde\calN_p(Z) = \calN(Z)/ (1-p).
\end{align}
Moreover, the adjoint channel $\tilde\calN_p^\dag$ is given by
\begin{align}
    &\tilde\calN_p^\dag(I) = I
    \\&\tilde\calN_p^\dag(X) =\calN^\dag(X)/ (1-p),
    \\&\tilde\calN_p^\dag(Y) = \calN^\dag(Y)/ (1-p),
    \\&\tilde\calN_p^\dag(Z) = \calN^\dag(Z)/ (1-p).
\end{align}
In particular, in Ref.~\cite{schuster2024polynomial} the above decomposition is applied to a specific class of non-unital noise, which models a spontaneous emission in a random direction with rate $\gamma_s$. Intuitively, such randomization is performed implicitly in random circuits, via the random gates before
and after each noise channel. 

Crucially, the authors of Ref.~\cite{schuster2024polynomial} observed that the associated non-physical map, $\tilde{\calN}_p^\dag$, does not increase the Frobenius norm of any operator on average over random noise channel directions, provided that \( p \leq \gamma_s / 2 \).  

\medskip  

We extend this result by demonstrating that a similar phenomenon holds for \emph{any} noise channel \(\calN\) when averaged over a tensor product of single-qubit gates sampled from unitary 1-designs. Specifically, we demonstrate that the Frobenius norm does not increase on average over \(\calD\), as long as value of \(p\) in Eq.\ \ref{eq:depol-decomp} is appropriately chosen according to the contraction coefficient \(\chi_\calD(\calN)\).

\begin{lemma}
\label{lem:norm-not-incr}
Let $\calD$ be a 1-design over $\mathbb{U}(2)$ and let $p\coloneqq {1-\chi_\calD(\calN)}$. 
For all observables $O$ we have
 \begin{align}
     \bbE_{V\sim\calD^{\otimes n}} \norm{\tilde\calN^{{\dag}\otimes n }_p (V^{\dag} OV)}_\mathrm{F}^2 \leq \norm{O}_\mathrm{F}^2,
 \end{align}
 that is, the linear map $\tilde\calN^{\emph\dag\otimes n}_p$ does not increase the Frobenius norm in expectation over a randomly sampled $V$.
\end{lemma}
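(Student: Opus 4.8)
The plan is to reduce this statement about the (non-physical) map $\tilde\calN_p^\dag$ to the average-case contraction bound for the genuine adjoint channel $\calN^\dag$ that is already packaged into the mean-squared contraction coefficient $\chi_\calD^2(\calN)$ of Definition~\ref{def:coeff}.

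First I would rewrite $\tilde\calN_p^\dag$ through $\calN^\dag$. Taking adjoints in the decomposition $\calN = \calN_p^{(\mathrm{depo})}\circ\tilde\calN_p$ and using that the depolarizing channel is self-adjoint gives $\calN^\dag = \tilde\calN_p^\dag\circ\calN_p^{(\mathrm{depo})}$, hence $\tilde\calN_p^\dag = \calN^\dag\circ\Delta$, where $\Delta \coloneqq (\calN_p^{(\mathrm{depo})})^{-1}$ is the linear inverse, diagonal in the Pauli basis with $\Delta(P) = (1-p)^{-|P|}P$. (If $\chi_\calD(\calN)=0$, i.e.\ $p=1$, then $\calN$ is the completely depolarizing channel and one may take $\tilde\calN_1 = \mathcal{I}$, for which the claim is immediate; so I would assume $p<1$.) I would then decompose $O = \sum_{A\subseteq[n]}O_A$ into the components $O_A$ supported exactly on $A$, so that $\norm{O}_\mathrm{F}^2 = \sum_A\norm{O_A}_\mathrm{F}^2$ by Pauli-orthogonality. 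Since $V = \bigotimes_i V_i$ is a tensor product of single-qubit unitaries, conjugation by $V$ maps each traceless single-qubit factor to a traceless single-qubit operator and hence preserves the support of every Pauli term; thus $V^\dag O_A V$ is still supported exactly on $A$ and $\Delta^{\otimes n}(V^\dag O_A V) = (1-p)^{-|A|}V^\dag O_A V$. Combining these observations,
\begin{align}
    \tilde\calN_p^{\dag\otimes n}(V^\dag O V) = \sum_{A\subseteq[n]}(1-p)^{-|A|}\,\calN^{\dag\otimes n}(V^\dag O_A V).
\end{align}

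The crux is to show that the cross terms between different supports average to zero:
\begin{align}
    A\neq A'\quad\Longrightarrow\quad \bbE_{V\sim\calD^{\otimes n}}\bbrakket{\calN^{\dag\otimes n}(V^\dag O_A V)}{\calN^{\dag\otimes n}(V^\dag O_{A'}V)} = 0.
\end{align}
To establish this I would fix an index $j$ in the symmetric difference of $A$ and $A'$, say $j\in A\setminus A'$. Since $j\notin A'$ and $\calN^\dag(I)=I$, the operator $\calN^{\dag\otimes n}(V^\dag O_{A'}V)$ acts as the identity on qubit $j$; expanding $O_A$ according to its $j$-th Pauli factor $S\in\{X,Y,Z\}$, the inner product factorizes over qubit $j$ into scalars of the form $\bbrakket{\calN^\dag(V_j^\dag S V_j)}{I} = \bbrakket{V_j^\dag S V_j}{\calN(I)}$ times an inner product on the remaining $n-1$ qubits. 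Averaging over $V_j$ and using that $\calD$ is a unitary $1$-design together with $\Tr[S]=0$ yields $\bbE_{V_j}[V_j^\dag S V_j] = 0$, so each scalar --- and hence the whole cross term --- vanishes. Consequently $\bbE_V\norm{\tilde\calN_p^{\dag\otimes n}(V^\dag O V)}_\mathrm{F}^2 = \sum_A (1-p)^{-2|A|}\,\bbE_V\norm{\calN^{\dag\otimes n}(V^\dag O_A V)}_\mathrm{F}^2$.

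Finally I would invoke Definition~\ref{def:coeff}, which gives $\bbE_{V\sim\calD^{\otimes n}}\norm{\calN^{\dag\otimes n}(V^\dag O_A V)}_\mathrm{F}^2\leq\chi_\calD^2(\calN)^{|A|}\norm{O_A}_\mathrm{F}^2$ for every $A$; since $p$ was chosen so that $1-p = \chi_\calD(\calN)$, the factors $(1-p)^{-2|A|}$ and $\chi_\calD^2(\calN)^{|A|}$ cancel, leaving $\bbE_V\norm{\tilde\calN_p^{\dag\otimes n}(V^\dag O V)}_\mathrm{F}^2\leq\sum_A\norm{O_A}_\mathrm{F}^2 = \norm{O}_\mathrm{F}^2$. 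I expect the main obstacle to be the cross-term step: because $\calN^\dag$ does not preserve tracelessness, the images of the different-support components are not literally orthogonal at the operator level, so one has to carry out the qubit-wise factorization carefully and see that it is precisely the $1$-design average on the ``extra'' qubit in $A\triangle A'$ that produces the cancellation.
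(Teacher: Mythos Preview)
Your proof is correct and follows essentially the same route as the paper's: decompose $O=\sum_A O_A$ by Pauli support, use $\tilde\calN_p^{\dag\otimes n}(V^\dag O_A V)=(1-p)^{-|A|}\calN^{\dag\otimes n}(V^\dag O_A V)$, show cross terms between different supports vanish in expectation, and then cancel the $(1-p)^{-2|A|}$ factor against the $\chi_\calD^2(\calN)^{|A|}$ bound from Definition~\ref{def:coeff}. The only difference is packaging: the paper dispatches the cross-term vanishing by invoking Lemma~\ref{lem:ortho1} (the orthogonality property of locally unbiased maps), whereas you spell out the qubit-wise factorization and the $1$-design average on the distinguishing qubit $j\in A\setminus A'$ by hand. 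Your explicit treatment is arguably cleaner here, since the map $\tilde\calN_p^{\dag\otimes n}(V^\dag(\cdot)V)$ with $V\sim\calD^{\otimes n}$ is not literally in the form of Definition~\ref{def:ls-supops} for a generic $1$-design $\calD$, and your direct argument sidesteps that technicality.
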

\begin{proof}
We denote $ O   = \sum_{P\in\calP_n} a_PP$. We have
\begin{align}
    \bbE_{V\sim\calD^{\otimes n}} \norm{\tilde\calN^{{\dag}\otimes n }_p (V^\dag O V)}_\mathrm{F}^2
    = &\bbE_{V\sim\calD^{\otimes n}} \sum_{k=1}^n \, \sum_{\substack{A\subseteq [n]\\ \abs{A}=k}} \,  \norm{ \sum_{\substack{P\in\calP_n\\ \mathrm{supp}(P)=A}} a_P \tilde\calN^{{\dag}\otimes n }_p(V^\dag P V)}_\mathrm{F}^2
    \\ =&\bbE_{V\sim\calD^{\otimes n}} \sum_{k=1}^n \, \sum_{\substack{A\subseteq [n]\\ \abs{A}=k}} \,  \frac{1}{(1-p)^{2k}}\norm{ \sum_{\substack{P\in\calP_n\\ \mathrm{supp}(P)=A}} a_P \calN^{{\dag}\otimes n }_p(V^\dag P V)}_\mathrm{F}^2
    \\ \leq &\bbE_{V\sim\calD^{\otimes n}} \sum_{k=1}^n \, \sum_{\substack{A\subseteq [n]\\ \abs{A}=k}} \,  \norm{ \sum_{\substack{P\in\calP_n\\ \mathrm{supp}(P)=A}} a_P \calN^{{\dag}\otimes n }_p(V^\dag P V)}_\mathrm{F}^2
    \\ = &\norm{V^\dag O V}_{\mathrm{F}}^2 = \norm{ O }_{\mathrm{F}}^2,
\end{align}    
where we used the orthogonality property of Pauli terms with different supports (Lemma\ \ref{lem:ortho1}) and the inequality follows from the definition of mean squared contraction coefficient (Definition~\ref{def:coeff}).
\end{proof}

\section{Noisy circuit model}

We introduce our noisy circuit model.

\begin{definition}[Noisy Circuit Model]
\label{def:circuitmod}
A noisy $L$-layered quantum circuit \(\calC\) consists of \(n\)-qubit unitary layers interleaved with local (single-qubit) noise, concluding with a final layer of single-qubit gates.
Formally, we express the circuit as:
\begin{align}
\calC \coloneqq \mathcal{V}^{\mathrm{single}} \circ \mathcal{N}^{\otimes n} \circ \mathcal{U}_{L} \circ \cdots \circ \mathcal{N}^{\otimes n} \circ \mathcal{U}_{1} 
\label{eq:randcirc_main}
\end{align}
Let $\calD_{\mathrm{single}}$ a 1-design over $\mathbb{U}(2)$. We make the following assumptions:
\begin{enumerate}
    \item \(\mathcal{N}\) is a single-qubit quantum channel.
    \item \(\mathcal{V}^{\mathrm{single}} \coloneqq V(\cdot)V^{\dagger}\), with \(V \coloneqq \bigotimes_{i=1}^{n} V_i\), represents a layer of single-qubit gates. We assume that $V$ is sampled from a distribution $\calD_{L+1}.$
    \item For \( i = 1, 2, \dots, L \), \(\mathcal{U}_i = U_i(\cdot)U_i^\dag\) represents the quantum channel associated with the \(i\)-th unitary circuit layer. We assume that \({U}_i\) is sampled from a distribution \(\mathcal{D}_i\) and that it is composed of non-overlapping gates, each acting on \(\mathcal{O}(1)\)-qubits.
    \item For all $i =1,2,\dots, L+1$, the distribution $\calD_i$ satisfies the following
    \begin{align}
        \bbE_{U\sim \calD_i} \left[U\otimes U^*\right] = \bbE_{U\sim\calD_i'} \, \bbE_{W\sim\calD_{\mathrm{single}}^{\otimes n}}\left[U W\otimes U^*W^* \right] 
        \label{eq:d-single}
    \end{align}
\end{enumerate}

We denote by \(\calD_{\mathrm{circ}}\) the overall distribution of the noisy circuits described above.

\end{definition}

It is worth noting that the final unitary layer in the circuit is not essential. Moreover, if the circuit were to end with an additional layer of local noise \(\mathcal{N}^{\otimes n}\), this noise could be absorbed into the measured observable, effectively replacing \(O\) with \(\mathcal{N}^{\dagger \otimes n}(O)\).

Moreover, our results remain valid if we replace Eq.~\eqref{eq:d-single} with the following more general expression:  
\begin{align}
    \mathbb{E}_{U \sim \mathcal{D}_i} \left[ U \otimes U^* \right] = \mathbb{E}_{U \sim \mathcal{D}_i'} \, \mathbb{E}_{W \sim \bigotimes_{j=1}^n \mathcal{D}_{\mathrm{single}, i}^{(j)}} \left[ U W \otimes U^* W^* \right],
\end{align}
where \(\mathcal{D}_{\mathrm{single}, i}^{(j)}\) represents 1-designs over \(\mathbb{U}(2)\) for all \(i \in [L+1]\) and \(j \in [n]\). For simplicity and to avoid overly complex notation, we assume a single-qubit distribution \(\mathcal{D}_{\mathrm{single}}\) across all qubits and circuit layers.

Similarly, our results can be easily extended to circuits influenced by noise channels that vary across qubits. Specifically, this involves replacing \(\mathcal{N}^{\otimes n}\) with the more general expression \(\bigotimes_{j=1}^n \mathcal{N}^{(j)}\).

Lemma\ \ref{lem:norm-not-incr} hints to the fact that all local noise channels behave similarly to the depolarizing noise on typical instances of random circuits. Thus, given a distribution over noisy circuits such as the one in Definition~\ref{def:circuitmod}, it is natural to assign to it an \emph{effective depolarizing rate}.

\begin{definition}[Effective depolarizing rate of noisy random circuits]
    Let $\calD_{\mathrm{circ}}$ be the circuit distribution introduced in Definition\ \ref{def:circuitmod}. We define the associated effective depolarizing rate  of $\calD_{\mathrm{circ}}$ as
\begin{align}
    p \coloneqq 1-  {\chi_{\calD_{\mathrm{single}}}(\calN)}.
\end{align}
\end{definition}

\section{Classical simulation of noisy random circuits}
We will provide some additional notation to facilitate the proof of our main result.
For convenience, we also rewrite the noisy circuit as 
\begin{align}
\calC \coloneqq &\mathcal{V}^{\mathrm{single}} \circ \mathcal{N}^{\otimes n} \circ \mathcal{U}_{L} \circ \dots \circ \mathcal{N}^{\otimes n} \circ \mathcal{U}_{1} 
\\=
&\mathcal{V}^{\mathrm{single}} \circ \calN_p^{(\mathrm{depo})\otimes n} \circ \tilde\calN^{\otimes n} \circ \mathcal{U}_{L} \circ \cdots \circ \calN_p^{(\mathrm{depo})\otimes n}\circ  \tilde\calN^{\otimes n} \circ \mathcal{U}_{1} 
\\ =
& \calC_L \circ \calC_{L-1} \circ \dots \circ \calC_1,
\end{align}
where we the layers $\calC_j$ are given by
\begin{align}
    \calC_j\coloneqq \begin{cases}
    \mathcal{V}^{\mathrm{single}} \circ \calN_p^{(\mathrm{depo})\otimes n} \circ \tilde\calN^{\otimes n} \circ \mathcal{U}_{L}  & \text{if $j=L$},
    \\ \calN_p^{(\mathrm{depo})\otimes n} \circ \tilde\calN^{\otimes n} \circ \mathcal{U}_{j}  & \text{if $j\neq L$}.
    \end{cases}
\end{align}

We also consider the (non-physical) linear map obtained by ``taking out'' the depolarizing channels:
\begin{align}
\tilde\calC\coloneqq &
\mathcal{V}^{\mathrm{single}} \circ \tilde\calN^{\otimes n}_p \circ \mathcal{U}_{L} \circ \cdots  \circ  \tilde\calN^{\otimes n}_p \circ \mathcal{U}_{1} \label{def:non-phys}
\\ =
& \tilde\calC_L \circ \tilde\calC_{L-1} \circ \dots \circ \tilde\calC_1,
\end{align}
where we defined the layers $\tilde\calC_j$ as
\begin{align}
    \tilde\calC_j\coloneqq \begin{cases}
    \mathcal{V}^{\mathrm{single}} \circ \tilde\calN^{\otimes n}_p \circ \mathcal{U}_{L}  & \text{if $j=L$},
    \\ \tilde\calN^{\otimes n} \circ \mathcal{U}_{j}  & \text{if $j\neq L$}.
    \end{cases}
\end{align}
Given an observable $O$ and an initial state $\rho$, we aim to approximate the expectation value 
\begin{align}
    \Tr[O \calC(\rho)] = \Tr[\calC^\dag(O)\rho].
\end{align}
To this end, we fix a positive value $k$ and we approximate the Heisenberg evolved observable $\calC^\dag(O)$ by computing only the Pauli paths with total weight smaller than $k$:
\begin{align}
    O_{\calC}^{(k)} \coloneqq \sum_{\substack{\gamma = (P_0, P_1, \dots, P_L) \in \calP_n^{L+1}:\\\abs{\gamma}<k}} \, \llangle O | P_L \rrangle \Phi_\gamma(\calC) P_0,
\end{align}
where we recall that the Fourier coefficients $\Phi_\gamma(\calC)$ are defined as
\begin{align}
    \Phi_\gamma(\calC) = \llangle P_L| \widehat\calC_L | P_{L-1}\rrangle
    \llangle P_{L-1}| \widehat\calC_{L-1} | P_{L-2}\rrangle \dots \llangle P_1| \widehat\calC_1 | P_{0}\rrangle.
\end{align}
We also consider the observable $\tilde\calC^\dag(O)$ obtained by Heisenberg-evolving $O$ with the linear map $\tilde\calC$:
\begin{align}
    \tilde\calC^\dag(O) = \sum_{\substack{\gamma = (P_0, P_1, \dots, P_L) \in \calP_n^{L+1}}} \, \llangle O | P_L \rrangle \Phi_\gamma(\tilde\calC) P_0,
\end{align}
where the Fourier coefficients $ \Phi_\gamma(\tilde\calC)$ are defined as
\begin{align}
    \Phi_\gamma(\tilde\calC) = \llangle P_L| \widehat{\tilde\calC}_L | P_{L-1}\rrangle
    \llangle P_{L-1}| \widehat{\tilde\calC}_{L-1} | P_{L-2}\rrangle \dots \llangle P_1| \widehat{\tilde\calC}_1 | P_{0}\rrangle.
\end{align}
\subsection{Average contraction implies efficient classical simulation}
We start by proving that the Fourier coefficients of $\calC$ and $\tilde\calC$ are connected by a proportionality factor, which is exponential in the path weight.
\begin{lemma}[Proportionality of Fourier coefficients]
\label{lem:path-supp-dep}
Let $\gamma \in \calP_n^{L+1}$ be a Pauli path. Then, the associated Fourier coefficients $\Phi_{\gamma}(\calC)$ and $\Phi_{\gamma}(\tilde\calC)$ satisfy the  identity
\begin{align}
    \Phi_{\gamma}(\calC) = (1-p)^{\abs{\gamma}} \Phi_{\gamma}(\tilde\calC).
\end{align}

\end{lemma}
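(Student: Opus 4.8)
The plan is to prove the identity by unpacking the definitions of both Fourier coefficients layer-by-layer and showing that each layer contributes exactly a factor $(1-p)^{|P_j|}$, which then multiply to give $(1-p)^{|\gamma|}$. Recall that $\Phi_\gamma(\calC) = \prod_{j=1}^L \llangle P_j | \widehat{\calC}_j | P_{j-1}\rrangle$ and $\Phi_\gamma(\tilde\calC) = \prod_{j=1}^L \llangle P_j | \widehat{\tilde\calC}_j | P_{j-1}\rrangle$, where the layers differ only in that $\calC_j$ contains an extra depolarizing channel $\calN_p^{(\mathrm{depo})\otimes n}$ relative to $\tilde\calC_j$ (for $j\neq L$, $\calC_j = \calN_p^{(\mathrm{depo})\otimes n}\circ\tilde\calN^{\otimes n}\circ\calU_j$ while $\tilde\calC_j = \tilde\calN^{\otimes n}\circ\calU_j$; the case $j=L$ is analogous with $\mathcal{V}^{\mathrm{single}}$ prepended, which is unitary and does not affect the argument). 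So the whole claim reduces to a single-layer statement.

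First I would establish the single-layer claim: for any layer and any Paulis $P, Q \in \calP_n$,
\begin{align}
    \llangle P | \widehat{\calC}_j | Q \rrangle = (1-p)^{\abs{P}} \llangle P | \widehat{\tilde\calC}_j | Q \rrangle.
\end{align}
Using $\calC_j = \calN_p^{(\mathrm{depo})\otimes n}\circ \tilde\calC_j$ (absorbing the unitary pieces into $\tilde\calC_j$) and the fact that the adjoint of the matrix form composes, we have $\llangle P | \widehat{\calC}_j | Q\rrangle = \llangle \calN_p^{(\mathrm{depo})\dag\otimes n}(P) | \widehat{\tilde\calC}_j | Q\rrangle$. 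Now invoke the key property of depolarizing noise already recorded in the excerpt, namely $\calN_p^{(\mathrm{depo})\dag\otimes n}(P) = (1-p)^{\abs{P}} P$. Pulling the scalar out of the bra gives exactly $(1-p)^{\abs{P}}\llangle P | \widehat{\tilde\calC}_j | Q\rrangle$. One must be slightly careful about where the depolarizing channels sit relative to $\mathcal{V}^{\mathrm{single}}$ in the $j=L$ layer, but since $\mathcal{V}^{\mathrm{single}}$ is to the left of $\calN_p^{(\mathrm{depo})\otimes n}$, one can equivalently think of its adjoint acting on $O$ and then the depolarizing adjoint acting on $P_L$; either way the bookkeeping yields the same factor $(1-p)^{\abs{P_L}}$ on the $L$-th transition amplitude.

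Finally I would multiply the $L$ single-layer identities together. Writing $\gamma = (P_0, P_1, \ldots, P_L)$ and applying the single-layer claim with $(P,Q) = (P_j, P_{j-1})$ for each $j\in[L]$,
\begin{align}
    \Phi_\gamma(\calC) = \prod_{j=1}^L \llangle P_j | \widehat{\calC}_j | P_{j-1}\rrangle = \prod_{j=1}^L (1-p)^{\abs{P_j}} \llangle P_j | \widehat{\tilde\calC}_j | P_{j-1}\rrangle = (1-p)^{\sum_{j=1}^L \abs{P_j}} \Phi_\gamma(\tilde\calC),
\end{align}
and $\sum_{j=1}^L \abs{P_j} = \abs{\gamma}$ by definition of path weight. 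I do not expect a genuine obstacle here — the proof is essentially a definitional unwinding — the only mild subtlety is keeping track of adjoints and the placement of the unitary layers when passing between $\calC_j$ and $\tilde\calC_j$, and confirming that the exponent picks up $\abs{P_j}$ rather than $\abs{P_{j-1}}$ (it is the output Pauli of layer $j$ that the depolarizing adjoint acts on, since the depolarizing channel is the outermost component of $\calC_j$).
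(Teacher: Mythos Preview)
Your proposal is correct and follows essentially the same approach as the paper: both establish the single-layer identity $\llangle P_j|\widehat{\calC}_j|P_{j-1}\rrangle=(1-p)^{|P_j|}\llangle P_j|\widehat{\tilde\calC}_j|P_{j-1}\rrangle$ by pulling the depolarizing adjoint onto $P_j$, and then multiply over $j\in[L]$. The paper simply asserts $\calC_j=\calN_p^{(\mathrm{depo})\otimes n}\circ\tilde\calC_j$ for all $j$ without comment; your explicit remark about the $j=L$ layer (where $\mathcal{V}^{\mathrm{single}}$ sits outside the depolarizing channel but commutes with it since single-qubit depolarizing commutes with single-qubit unitaries) is a welcome clarification the paper glosses over.
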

\begin{proof}
    The lemma follows from the fact that $\calC_j = \calN_p^{(\mathrm{depo})\otimes n}\circ \tilde\calC_j$ for all $j\in [L]$. Therefore,
    \begin{align}
        \llangle P_j| \widehat{\calC}_j | P_{j-1}\rrangle =
         \llangle \calN_p^{(\mathrm{depo})\otimes n}(P_j)| \widehat{\tilde\calC}_j | P_{j-1}\rrangle  =(1-p)^{\abs{P_j}} \llangle P_j| \widehat{\tilde\calC}_j | P_{j-1}\rrangle. 
    \end{align}
    Iterating over al $j\in[L]$ we obtain the desired result:
    \begin{align}
        \Phi_{\gamma}(\calC) = &\llangle P_L| \widehat\calC_L | P_{L-1}\rrangle
    \llangle P_{L-1}| \widehat\calC_{L-1} | P_{L-2}\rrangle \dots \llangle P_1| \widehat\calC_1 | P_{0}\rrangle \\=&
     (1-p)^{\abs{P_L}} \llangle P_L| \widehat{\tilde\calC}_L | P_{L-1}\rrangle
    (1-p)^{\abs{P_{L-1}}} \llangle P_{L-1}| \widehat{\tilde\calC}_{L-1} | P_{L-2}\rrangle \dots (1-p)^{\abs{P_1}} \llangle P_1| \widehat{\tilde\calC}_1 | P_{0}\rrangle
    \\ =&
    (1-p)^{\abs{P_L}+\abs{P_{L-1}} + \dots + \abs{P_1}} \Phi_{\gamma}(\tilde\calC)
        \coloneqq (1-p)^{\abs{\gamma}} \Phi_{\gamma}(\tilde\calC).
    \end{align}
\end{proof}
Leveraging the definition of effective depolarizing rate of random noisy circuits, we show that, on average over the choice of random unitaries, the linear map $\tilde{\calC}$ does not increase the Frobenius norm under Heisenberg evolution.
\begin{corollary}[Non-increasing Frobenius norm]
\label{cor:norm-not-incr}
It holds that
 \begin{align}
     \bbE_{\calC\sim\calD_{\mathrm{circ}}}\norm{\tilde\calC^\dag(O)}^2_{\mathrm{F}}\leq \norm{O}^2_{\mathrm{F}}.
 \end{align}
\end{corollary}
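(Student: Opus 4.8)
The plan is to unfold the definition of $\tilde\calC$ as a composition of layers $\tilde\calC_j = \tilde\calN_p^{\otimes n}\circ\calU_j$ (with the final layer also carrying $\calV^{\mathrm{single}}$), and to propagate the observable backwards one layer at a time, controlling the Frobenius norm in expectation at each step. The key technical input is Lemma~\ref{lem:norm-not-incr}, which states that for a $1$-design $\calD$ and $p = 1-\chi_\calD(\calN)$, the non-physical map $\tilde\calN_p^{\dag\otimes n}$ does not increase the Frobenius norm on average over a single-qubit twirl $V\sim\calD^{\otimes n}$. The other essential ingredient is the structural assumption in Definition~\ref{def:circuitmod}, Eq.~\eqref{eq:d-single}, that each layer distribution $\calD_i$ factors (at the level of the second moment) as $\calU_i$ followed by a layer of $1$-design single-qubit gates $W$.

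First I would set up a backwards induction. Write $\tilde\calC^\dag(O) = \tilde\calC_1^\dag\circ\tilde\calC_2^\dag\circ\cdots\circ\tilde\calC_L^\dag(O)$, and let $O_L := \tilde\calC_L^\dag(O)$, $O_{j} := \tilde\calC_j^\dag(O_{j+1})$, so $O_0 = \tilde\calC^\dag(O)$ (with indices arranged so that $O_1 = \tilde\calC^\dag(O)$ after processing all $L$ layers; I will fix the bookkeeping in the writeup). The claim reduces to showing $\bbE\,\norm{O_{j}}_{\mathrm F}^2 \le \bbE\,\norm{O_{j+1}}_{\mathrm F}^2$ for each $j$, where the expectation is over the layer $\calC_j$ together with all subsequent layers. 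Conditioning on the later layers, it suffices to show $\bbE_{\calC_j}\norm{\tilde\calC_j^\dag(Q)}_{\mathrm F}^2 \le \norm{Q}_{\mathrm F}^2$ for a \emph{fixed} operator $Q$ (namely $Q = O_{j+1}$, which is independent of $\calC_j$ since the layers are sampled independently). Now $\tilde\calC_j^\dag(Q) = \calU_j^\dag(\tilde\calN_p^{\dag\otimes n}(Q))$ for the bulk layers: the adjoint unitary layer $\calU_j^\dag$ is norm-preserving (unitary conjugation preserves the Frobenius norm), so this is harmless, but it must be handled \emph{after} exploiting the randomness. The right order is: use Eq.~\eqref{eq:d-single} to replace $\calU_j$ by $\calU_j'\circ W$ with $W$ a product of $1$-design single-qubit gates, so that in the Heisenberg picture $\bbE_{\calU_j}\norm{\tilde\calC_j^\dag(Q)}_{\mathrm F}^2 = \bbE_{\calU_j'}\bbE_{W}\norm{W^\dag\,\tilde\calN_p^{\dag\otimes n}(Q)\,W \cdots}_{\mathrm F}^2$; the outer unitaries $\calU_j'^\dag$ (and $\calV^{\mathrm{single}\dag}$ for $j=L$) drop out by unitary invariance of $\norm{\cdot}_{\mathrm F}$, leaving exactly $\bbE_{W\sim\calD_{\mathrm{single}}^{\otimes n}}\norm{\tilde\calN_p^{\dag\otimes n}(W^\dag Q W)}_{\mathrm F}^2$, which is $\le \norm{Q}_{\mathrm F}^2$ by Lemma~\ref{lem:norm-not-incr} since $p$ is precisely the effective depolarizing rate $1-\chi_{\calD_{\mathrm{single}}}(\calN)$. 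Chaining these $L$ inequalities (plus the base step for the observable itself, with $O_{L+1}:=O$ treated by the same argument applied to $\tilde\calC_L$, noting $\norm{\calV^{\mathrm{single}\dag}(\cdot)}_{\mathrm F}$ invariance) yields $\bbE\,\norm{\tilde\calC^\dag(O)}_{\mathrm F}^2 \le \norm{O}_{\mathrm F}^2$.

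The main subtlety — and the step I would be most careful about — is the interchange of expectations and the independence structure: because the layers are sampled independently, one can peel off the innermost (last-applied, i.e.\ $\calC_1$ in Schrödinger order / first in Heisenberg order) layer, condition on everything else, apply the one-layer bound to the fixed intermediate operator, and then take the outer expectation using the tower property. One has to be careful that the ``fixed $Q$'' in Lemma~\ref{lem:norm-not-incr} is genuinely independent of the single-qubit twirl $W$ of that layer; this holds because $W$ comes from the decomposition of $\calD_j$ in Eq.~\eqref{eq:d-single} and the distribution $\calD_j$ is independent of all other $\calD_i$. A secondary point worth a sentence is that the second-moment identity Eq.~\eqref{eq:d-single} is exactly what is needed: $\norm{\cdot}_{\mathrm F}^2$ is a second-moment quantity in the vectorized picture, so matching $\bbE[U\otimes U^*]$ suffices and we never need a genuine $1$-design sampling of the full layer. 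I would also remark that the corollary is immediate once one has the one-layer estimate, so the bulk of the writeup is just the careful induction; I expect no genuine obstacle, only bookkeeping of the layer indices and the placement of $\calV^{\mathrm{single}}$.
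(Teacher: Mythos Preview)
Your proposal is correct and follows essentially the same approach as the paper: establish the one-layer contraction estimate $\bbE_W\norm{\tilde\calN_p^{\dag\otimes n}(W^\dag Q W)}_{\mathrm F}^2\le\norm{Q}_{\mathrm F}^2$ via Lemma~\ref{lem:norm-not-incr}, use Eq.~\eqref{eq:d-single} to insert the random single-qubit layers, strip off the outer unitaries by Frobenius-norm invariance, and iterate over the $L$ layers using independence. The paper writes the iteration as a single chain after globally inserting all $\calV_j^{\mathrm{single}\dag}$'s, whereas you phrase it as a backward induction with the tower property; these are the same argument with only cosmetic differences in bookkeeping.
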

\begin{proof}
We start by making the following preliminary observation:
\begin{align}
\bbE_{V\sim\calD_{\mathrm{single} }^{\otimes n}} \norm{\calU_j\circ \tilde\calN_p^{\otimes n}(V^\dag OV)}^2_{\mathrm{F}}  
    = \bbE_{V\sim\calD_{\mathrm{single} }^{\otimes n}} \norm{\tilde\calN_p^{\otimes n}(V^\dag OV)}^2_{\mathrm{F}} \leq \norm{O}^2_{\mathrm{F}}, \label{eq:contr-layer}
\end{align}
where the first step follows from the unitarily invariance of Schatten norms, and the second step follows applying Lemma\ \ref{lem:norm-not-incr} and noting that the effective depolarizing rate is defined as $p\coloneqq 1 - \chi_{\calD_{\mathrm{single}}}(\calN)$.
Employing Eq.\ \ref{def:non-phys}, we can write
\begin{align}
&\bbE_{\calC \sim \calD_{\mathrm{circ}}} \norm{\tilde\calC^\dag(O)}^2_{\mathrm{F}} =
    \bbE_{\calC \sim \calD_{\mathrm{circ}}} \norm{\tilde\calC_{1}^\dag\circ \dots \circ \tilde\calC_{L-1}^\dag\circ \tilde\calC_L^\dag(O)}^2_{\mathrm{F}}
   \\=
    &\bbE_{U_1 \sim \calD_1,\dots, U_L \sim \calD_L, V \sim \calD_{\mathrm{circ}}^{\otimes n}}
    \norm{\mathcal{U}_{1}^\dag\circ \tilde\calN_p^{\dag\otimes n} \circ \cdots \circ  \tilde\calN_p^{\dag\otimes n} \circ \mathcal{U}_{L}^\dag \circ \tilde\calN_p^{\dag\otimes n} \circ \mathcal{V}^{\mathrm{single}\dag}(O)}^2_{\mathrm{F}},
\end{align}
where we denoted $\calU_j \coloneqq U_j(\cdot)U_j^\dag$ and $\mathcal{V}^{\mathrm{single}} \coloneqq V(\cdot)V^\dag$.
Using Eq.\ \ref{eq:d-single}, we can insert independently sampled layers of random single-qubit gates within the above expression:
\begin{align}
    &\bbE_{U_1 \sim \calD_1,\dots, U_L \sim \calD_L, V \sim \calD_{\mathrm{circ}}^{\otimes n}}
    \norm{\mathcal{U}_{1}^\dag\circ \tilde\calN_p^{\dag\otimes n} \circ \cdots \circ  \tilde\calN_p^{\dag\otimes n} \circ \mathcal{U}_{L}^\dag \circ \tilde\calN_p^{\dag\otimes n} \circ \mathcal{V}^{\mathrm{single}\dag}(O)}^2_{\mathrm{F}}
    \\ = &\bbE_{U_1 \sim \calD'_1,\dots, U_L \sim \calD'_L} \mathbb{E}_{V_1, V_2, \dots, V_L \sim \calD_{\mathrm{circ}}^{\otimes n}}
    \norm{\mathcal{U}_{1}^\dag\circ \tilde\calN_p^{\dag\otimes n} \circ  \mathcal{V}_1^{\mathrm{single}\dag} \circ \cdots \circ  \tilde\calN_p^{\dag\otimes n} \circ \mathcal{V}_{L-1}^{\mathrm{single}\dag} \circ \mathcal{U}_{L}^\dag \circ \tilde\calN_p^{\dag\otimes n} \circ \mathcal{V}_L^{\mathrm{single}\dag}(O)}^2_{\mathrm{F}} \\\leq &\norm{O}^2_{\mathrm{F}},
\end{align}
where the final inequality follows by applying iteratively Eq.\ \ref{eq:contr-layer}.
\end{proof}
The core result follows by combining Lemma\ \ref{lem:path-supp-dep} and Corollary\ \ref{cor:norm-not-incr} with the orthogonality properties of different Pauli paths.

\begin{theorem}[Mean squared error]
\label{thm:mse-core}
Let $\calD_{\mathrm{circ}}$ be an $L$-layered locally unbiased distribution over noisy circuits, and let $p$ be the effective depolarizing rate of $\calD_{\mathrm{circ}}$.
Let $O_{\calC}^{(k)} \coloneqq \sum_{\substack{\gamma :\abs{\gamma}<k}} \, \llangle O | P_L \rrangle \Phi_\gamma(\calC) P_0$ be the observable obtained by computing the approximate evolution of $O$ under $\calC^\dag$ restricted to the Pauli paths of weight less than $k$.
We have
\begin{align}
   \bbE_{\calC\sim\calD_{\mathrm{circ}}} \norm{O_{\calC}^{(k)}  - \calC^\emph\dag(O)}_{\mathrm{F}}^2 \leq (1-p)^{2k} 
   \norm{O}_{\mathrm{F}}^2.
\end{align}

Moreover, for any initial state $\rho$ we have
\begin{align}
   \bbE_{\calC\sim\calD_{\mathrm{circ}}} \Tr[\left(O_{\calC}^{(k)} - \calC^\emph\dag(O)\right)\rho]^2
   \leq (1-p)^{2k} \norm{O}_{\mathrm{F}}^2.
\end{align}
\end{theorem}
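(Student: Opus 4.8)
\textbf{Proof plan for Theorem~\ref{thm:mse-core}.}
The plan is to expand both quantities in the Pauli-path basis, use the orthogonality of Pauli paths under locally unbiased layers (Lemma~\ref{lem:ortho1}, part 1, and Lemma~\ref{lem:ortho2}, part 1), pull out the explicit $(1-p)^{2|\gamma|}$ factors relating $\Phi_\gamma(\calC)$ to $\Phi_\gamma(\tilde\calC)$ via Lemma~\ref{lem:path-supp-dep}, and finally bound the remaining sum of $\mathbb{E}[\Phi_\gamma(\tilde\calC)^2]$-weighted terms by $\norm{O}_\mathrm{F}^2$ using Corollary~\ref{cor:norm-not-incr}.

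First I would write $O_{\calC}^{(k)} - \calC^\dag(O) = \sum_{\gamma : |\gamma| \geq k} \llangle O | P_L \rrangle \Phi_\gamma(\calC) P_0$, so that
\begin{align}
    \norm{O_{\calC}^{(k)} - \calC^\dag(O)}_\mathrm{F}^2 = \frac{1}{4^n}\Tr\left[\left(\sum_{|\gamma| \geq k} \llangle O | P_L \rrangle \Phi_\gamma(\calC) P_0\right)^{\!\otimes 2}\mathbb{F}\right].
\end{align}
Taking the expectation over $\calC \sim \calD_{\mathrm{circ}}$ and expanding the square over pairs of paths $\gamma, \gamma'$, the cross terms involve $\mathbb{E}[\Phi_\gamma(\calC)\Phi_{\gamma'}(\calC)]\llangle P_0 | P_0' \rrangle$. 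Since $P_0, P_0'$ are Pauli operators, $\llangle P_0 | P_0' \rrangle = \delta_{P_0 P_0'}$ already kills most terms, but more importantly by Lemma~\ref{lem:ortho2} part~1 (applicable since $\calD_{\mathrm{circ}}$ is locally unbiased and the last layer's adjoint is too), $\mathbb{E}[\Phi_\gamma(\calC)\Phi_{\gamma'}(\calC)] = 0$ whenever $\mathrm{supp}(\gamma) \neq \mathrm{supp}(\gamma')$. One must be slightly careful here: paths with the same support but $P_0 \neq P_0'$ survive the orthogonality lemma but are killed by $\delta_{P_0 P_0'}$; combining both facts reduces the sum to the diagonal $\gamma = \gamma'$ (or at least to terms contributing nonnegatively, which suffices for an upper bound). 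This yields
\begin{align}
    \bbE_{\calC\sim\calD_{\mathrm{circ}}} \norm{O_{\calC}^{(k)} - \calC^\dag(O)}_\mathrm{F}^2 = \sum_{|\gamma| \geq k} \llangle O | P_L \rrangle^2 \, \bbE_{\calC\sim\calD_{\mathrm{circ}}}\left[\Phi_\gamma(\calC)^2\right].
\end{align}
Then I substitute $\Phi_\gamma(\calC) = (1-p)^{|\gamma|}\Phi_\gamma(\tilde\calC)$ from Lemma~\ref{lem:path-supp-dep}, extract $(1-p)^{2|\gamma|} \leq (1-p)^{2k}$ for all paths in the sum, and bound the residual $\sum_\gamma \llangle O | P_L \rrangle^2 \mathbb{E}[\Phi_\gamma(\tilde\calC)^2]$ by extending the sum to all paths and recognizing it (via the same orthogonality expansion run backwards) as $\bbE_{\calC\sim\calD_{\mathrm{circ}}}\norm{\tilde\calC^\dag(O)}_\mathrm{F}^2 \leq \norm{O}_\mathrm{F}^2$ by Corollary~\ref{cor:norm-not-incr}. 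The second inequality of the theorem follows identically, replacing the Frobenius inner product $\llangle P_0 | P_0' \rrangle$ with $\Tr[P_0\rho]\Tr[P_0'\rho]$ and using the orthogonality of Fourier coefficients with different \emph{supports} together with the fact that, after the $\mathbb{E}[\Phi_\gamma\Phi_{\gamma'}]=0$ reduction, the surviving terms with $\mathrm{supp}(\gamma)=\mathrm{supp}(\gamma')$ assemble into a perfect square $\big(\sum_{\mathrm{supp}(P_0)=A}\llangle O|P_L\rrangle\Phi_\gamma(\calC)\Tr[P_0\rho]\big)^2 \geq 0$, so dropping them only increases the bound; alternatively one invokes Lemma~\ref{lem:schuster}-style reasoning, but the cleanest route is the Frobenius bound plus $\abs{\Tr[A\rho]}\leq\norm{A}_\mathrm{F}$-type control. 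Actually the simplest argument for the second part: the orthogonality of Pauli paths with different supports gives directly $\bbE \Tr[(O_{\calC}^{(k)}-\calC^\dag(O))\rho]^2 = \sum_{A\subseteq[n]}\bbE(\sum_{\mathrm{supp}(\gamma)\,\text{ends at paths with}\,|\gamma|\geq k,\,\mathrm{supp}(P_0)=A}\dots)^2$, each inner term bounded using the same $(1-p)^{2k}$ extraction and $\norm{\cdot}_\mathrm{F}$ bound.

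The main obstacle is handling the orthogonality bookkeeping precisely: Lemma~\ref{lem:ortho2} part~1 only guarantees vanishing of $\mathbb{E}[\Phi_\gamma\Phi_{\gamma'}]$ for paths of \emph{different support}, not for all distinct paths (that would need Pauli-invariance, which we are not assuming here). So the cross terms with matching support but distinct intermediate Paulis do not automatically vanish; the resolution is that for the Frobenius-norm statement the relevant weight is $\llangle P_0 | P_0'\rrangle = \delta_{P_0 P_0'}$, and for the expectation-value statement the surviving same-support terms organize into a sum of squares (hence nonnegative), so in both cases one only ever drops nonnegative contributions or has them killed by a Kronecker delta — never needs them to vanish. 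Getting this "only-drop-nonnegative-terms" step stated cleanly, rather than claiming a false exact orthogonality, is the one place where care is required; everything else is a routine combination of the already-established lemmas.
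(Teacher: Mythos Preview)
Your overall strategy---expand in Pauli paths, use Lemma~\ref{lem:ortho2} part~1 to kill cross terms with different supports, extract $(1-p)^{2|\gamma|}$ via Lemma~\ref{lem:path-supp-dep}, and bound the remainder by Corollary~\ref{cor:norm-not-incr}---is exactly the paper's approach. But two steps in your execution do not go through as written.

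\textbf{The diagonal reduction is false.} Your displayed equality
\[
\bbE_{\calC}\norm{O_{\calC}^{(k)}-\calC^\dag(O)}_\mathrm{F}^2 \;=\; \sum_{|\gamma|\geq k}\llangle O|P_L\rrangle^2\,\bbE_{\calC}\bigl[\Phi_\gamma(\calC)^2\bigr]
\]
would hold under Pauli-invariance (Lemma~\ref{lem:ortho2} part~2), but not under mere local unbiasedness. Your argument ``same support $+$ $\delta_{P_0P_0'}$ $\Rightarrow$ $\gamma=\gamma'$'' is wrong: two paths can share the entire support vector $(\mathrm{supp}(P_0),\dots,\mathrm{supp}(P_L))$ \emph{and} share $P_0$ while still differing in $P_1,\dots,P_L$ (each $P_j$ ranges over $3^{|\mathrm{supp}(P_j)|}$ Paulis for fixed support). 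Those cross terms survive both filters and have no definite sign individually. The paper never attempts a diagonal reduction. Instead, Claim~\ref{claim:mse-weight} groups paths by \emph{weight} $w$ (which is a function of the support, so Lemma~\ref{lem:ortho2} part~1 suffices), writes the contribution of each weight class as $\tfrac{1}{2^n}\bbE\Tr\bigl[\mathbb{F}(\cdot)^{\otimes 2}\bigr]=\bbE\norm{\cdot}_\mathrm{F}^2\geq 0$, pulls out the common factor $(1-p)^{2w}\leq(1-p)^{2k}$, and only then uses nonnegativity to add back the $w<k$ classes and reassemble $\bbE\norm{\tilde\calC^\dag(O)}_\mathrm{F}^2$. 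Your ``sum-of-squares'' hedge in the obstacle paragraph is the right idea and applies equally to the Frobenius case (each support-class contributes a Frobenius norm, hence $\geq 0$); the $\delta_{P_0P_0'}$ route is a dead end.

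\textbf{The second statement needs the 1-design invariance of the first layer.} Your suggestion of ``$|\Tr[A\rho]|\leq\norm{A}_\mathrm{F}$-type control'' fails for a fixed $\rho$ (take $A=Z_1+Z_2$, $\rho=\ketbra{00}{00}$: the left side is $2$, the right side is $\sqrt{2}$). The paper's argument is short: since $\calU_1\sim\calD_1$ is invariant up to second moment under right-multiplication by $W\sim\calD_{\mathrm{single}}^{\otimes n}$, and since the path weight $|\gamma|=\sum_{j\geq 1}|P_j|$ does not involve $P_0$, one has $O^{(k)}_{\calC\circ\calW}=W^\dag O^{(k)}_\calC W$ and hence
\[
\bbE_\calC\Tr\bigl[(O^{(k)}_\calC-\calC^\dag(O))\rho\bigr]^2 \;=\; \bbE_{\calC,W}\Tr\bigl[(O^{(k)}_\calC-\calC^\dag(O))\,W\rho W^\dag\bigr]^2.
\]
Now $W\rho W^\dag$ is drawn from a unitary 1-design, i.e.\ a low-average ensemble with purity~$1$, and Lemma~\ref{lem:schuster} bounds the right-hand side by $\bbE_\calC\norm{O^{(k)}_\calC-\calC^\dag(O)}_\mathrm{F}^2$, reducing part two to part one.
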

\begin{proof}
We start by proving the following technical claim, which implies that the contribution of paths with different weight to the mean squared error can be evaluated separately.
\begin{claim}
\label{claim:mse-weight}
For all subsets of Pauli paths $\calS \subseteq \calP_n^{L+1}$, we have
\begin{align}
    \bbE_{\calC\sim\calD_{\mathrm{circ}}} \left(\sum_{\substack{\gamma \in \calS}} \llangle O | P_{L+1}\rrangle  \Phi_\gamma(\tilde\calC) P_0 \right)^{\otimes 2}
    = \bbE_{\calC\sim\calD_{\mathrm{circ}}} \sum_{w=1}^n \left(\sum_{\substack{\gamma \in \calS:\\ \abs{\gamma}= w}} \llangle O | P_{L+1}\rrangle  \Phi_\gamma(\tilde\calC) P_0 \right)^{\otimes 2}\,.
\end{align}

\end{claim}

\begin{proof}[Proof of the Claim]
We can prove the result by exploiting the fact that paths with different support are orthogonal, i.e., their Fourier coefficients are uncorrelated. As such,
\begin{align}
        &\bbE_{\calC\sim\calD_{\mathrm{circ}}} \left(\sum_{\substack{\gamma \in \calS}} \llangle O | P_{L+1}\rrangle  \Phi_\gamma(\tilde\calC) P_0 \right)^{\otimes 2}
    = \bbE_{\calC\sim\calD_{\mathrm{circ}}}  \left(\sum_{w=1}^n\sum_{\substack{\gamma \in \calS:\\ \abs{\gamma}= w}} \llangle O | P_{L+1}\rrangle  \Phi_\gamma(\tilde\calC) P_0 \right)^{\otimes 2}
    \\ = & \sum_{w,w'=1}^n \sum_{\gamma: \abs{\gamma}= w} \sum_{\gamma': \abs{\gamma'}= w'} \llangle O | P_{L+1}\rrangle \llangle O | P'_{L+1}\rrangle \underbrace{\bbE_{\calC\sim\calD_{\mathrm{circ}}} \Phi_\gamma(\tilde\calC)  \Phi_{\gamma'}(\tilde\calC)}_{=0 \text{ if } w\neq w'} P_0 \otimes P_0'
    \\ =  &\sum_{w=1}^n \sum_{\gamma: \abs{\gamma}= w} \sum_{\gamma': \abs{\gamma'}= w} \llangle O | P_{L+1}\rrangle \llangle O | P'_{L+1}\rrangle {\bbE_{\calC\sim\calD_{\mathrm{circ}}} \Phi_\gamma(\tilde\calC)  \Phi_{\gamma'}(\tilde\calC)} P_0 \otimes P_0'
    \\ =&\bbE_{\calC\sim\calD_{\mathrm{circ}}} \sum_{w=1}^n \left(\sum_{\substack{\gamma \in \calS:\\ \abs{\gamma}= w}} \llangle O | P_{L+1}\rrangle  \Phi_\gamma(\tilde\calC) P_0 \right)^{\otimes 2},
\end{align}
where in the third step we invoked Lemma~\ref{lem:ortho2}.

\end{proof}

The desired results can be proven by evaluating separately the contribution of paths with different weight and using the fact that the Fourier coefficients of $\calC$ and those of $\tilde\calC$ coincide, up to a proportionality factor which is exponential in the path weight.
As for the first statement, we have
\begin{align}
   &\bbE_{\calC\sim\calD_{\mathrm{circ}}} \norm{O_{\calC}^{(k)}  - \calC^\dag(O)}_{\mathrm{F}}^2
   = \bbE_{\calC \sim\calD_{\mathrm{circ}} } \norm{\sum_{\gamma: \abs{\gamma}\geq k} \llangle O | P_{L+1}\rrangle  \Phi_\gamma(\calC) P_0}_{\mathrm{F}}^2
   \\=&\bbE_{\calC \sim\calD_{\mathrm{circ}}} \norm{\sum_{\gamma: \abs{\gamma}\geq k} (1-p)^{\abs{\gamma}}\llangle O | P_{L+1}\rrangle  \Phi_\gamma(\tilde\calC) P_0}_{\mathrm{F}}^2
   \\=&\frac{1}{2^n}\bbE_{\calC \sim\calD_{\mathrm{circ}} }  \Tr[\mathbb{F}\left(\sum_{\gamma: \abs{\gamma}\geq k} (1-p)^{\abs{\gamma}}\llangle O | P_{L+1}\rrangle  \Phi_\gamma(\tilde\calC) P_\gamma\right)^{\otimes 2}]
   \\=&\frac{1}{2^n} \sum_{w\geq k}(1-p)^{2w}\bbE_{\calC \sim\calD_{\mathrm{circ}} } \Tr[\mathbb{F}\left(\sum_{\gamma: \abs{\gamma} = w} \llangle O | P_{L+1}\rrangle  \Phi_\gamma(\tilde\calC) P_\gamma\right)^{\otimes 2}]
   \\\leq &(1-p)^{2k } \frac{1}{2^n} \sum_{w \geq k}\bbE_{\calC \sim\calD_{\mathrm{circ}} } \Tr[\mathbb{F}\left(\sum_{\gamma: \abs{\gamma} = w} \llangle O | P_{L+1}\rrangle  \Phi_\gamma(\tilde\calC) P_\gamma\right)^{\otimes 2}]
   \\\leq &(1-p)^{2k } \frac{1}{2^n} \sum_{w \geq 0}\bbE_{\calC \sim\calD_{\mathrm{circ}} } \Tr[\mathbb{F}\left(\sum_{\gamma: \abs{\gamma} = w} \llangle O | P_{L+1}\rrangle  \Phi_\gamma(\tilde\calC) P_\gamma\right)^{\otimes 2}]
   \\= & (1-p)^{2k} \bbE_{\calC \sim\calD_{\mathrm{circ}} } \norm{\sum_{\gamma: \abs{\gamma}\in \calP_n^{L+2}} \llangle O | P_{L+1}\rrangle  \Phi_\gamma(\tilde\calC) P_\gamma}_{\mathrm{F}}^2
   = (1-p)^{2k} \bbE_{\calC \sim\calD_{\mathrm{circ}} }  \norm{\tilde\calC^\dag(O)}_{\mathrm{F}}^2 \leq (1-p)^{2k} \norm{O}_{\mathrm{F}}^2.
\end{align}
In the second step, we applied Lemma~\ref{lem:path-supp-dep}, while in the fourth step, we invoked Claim~\ref{claim:mse-weight}. The first inequality follows from the observation that the sum is taken over indices \( w \geq k \), implying \( (1-p)^w \leq (1-p)^k \). The second inequality holds because all the additional terms included in the sum are positive. Finally, the last inequality is a direct consequence of Corollary~\ref{cor:norm-not-incr}.

\medskip

The second statement can be shown as follows
\begin{align}
   &\bbE_{\calC\sim\calD_{\mathrm{circ}}} \Tr[\left(O_{\calC}^{(k)} -\calC^\dag(O)\right)\rho] ^2 = 
   \bbE_{\calC\sim\calD_{\mathrm{circ}} , V\sim \mathcal{D}^{\otimes n}_{\mathrm{single}}}\Tr[\left(O_{\calC}^{(k)} -\calC^\dag(O)\right) (V\rho V^\dag)] ^2 
    \\= &\bbE_{\calC\sim\calD_{\mathrm{circ}}}  \norm{O_{\calC}^{(k)} -\calC^\dag(O)}_\mathrm{F}^2 \leq (1-p)^{2k}\norm{O}_\mathrm{F}^2,
\end{align}
where the inequality follows from Lemma~\ref{lem:schuster} and from the fact that $\calD^{\otimes n}_{\mathrm{single}}$ is a 1-design, which implies that $V\rho V^\dag$ is sampled from a low-average ensemble with purity 1.
\end{proof}

It remains to show that the weight-truncated Pauli path summation can be evaluated efficiently. This has been already proven in Ref.~\cite{aharonov2022polynomial} for circuits interspersed by depolarizing noise. 
While for unital noise the same proof can be adapted straightforwardly, the upper bound found in Ref.~\cite{aharonov2022polynomial} does not hold anymore for the case of non-unital noise, due to the additional Pauli paths created by the noisy layers.
Here, we provide a modified argument which holds under any local noise, provided that the observable $O$ is a linear combination of polynomially many Pauli operators.

We start by providing the definition of \emph{legal} Pauli path, which are the Fourier coefficients that bear non-zero contributions.

\begin{definition}
We say that a Pauli path $\gamma = (P_0, P_1,\dots, P_{L+1})$ is \emph{legal} if its Fourier coefficient gives a non-zero contribution, i.e., if it satisfies $\llangle O | P_{L+1}\rrangle\bbE_{\calC\sim\calD_{\mathrm{circ}}}\left[\Phi_\gamma^2(\calC)\right] \neq 0$.
\end{definition}
If the noise channel $\calN$ is unital, then the Fourier coefficients of all the legal Pauli paths can be efficiently enumerated.
In particular, the proof of the unital case coincides with that of Lemma 8 in Ref.\ \cite{aharonov2022polynomial} for circuits with 2-qubit gates interleaved by local depolarizing noise. To see this, it suffices to notice that any unital single-qubit noise channel can be rewritten as $\calN = U \calN'(V(\cdot) V^\dag)U^\dag$, where 
\begin{align}
    P\neq Q \implies \llangle P | \widehat{\calN'}| Q  \rrangle = 0, 
\end{align}
for $P,Q\in\calP_1$, i.e., the Pauli Transfer Matrix of $\calN'$ is diagonal. Therefore, $\calN'$ does not create additional Pauli paths, and thus it is covered by same proof of Ref.~\cite{aharonov2022polynomial}.

\begin{lemma}[Runtime for the unital case, adapted from Lemma 8 in Ref.\ \cite{aharonov2022polynomial}]
\label{lem:unital-runtime}
If the noise is unital, then the number of legal Pauli paths with weight at most $k$ is at most $n^{k/L} \cdot \exp(\calO(k))$. Furthermore, there is an efficient algorithm to enumerate the legal paths in time $n^{k/L} \cdot \exp(\calO(k))$. 
\end{lemma}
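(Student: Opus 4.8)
The plan is to reduce the statement to Lemma~8 of Ref.~\cite{aharonov2022polynomial}, which proves exactly this count for circuits built from $\mathcal{O}(1)$-local non-overlapping unitary layers interleaved with local \emph{depolarizing} noise. The only property of depolarizing noise that enters their argument is that it is \emph{diagonal} in the Pauli basis: it merely rescales each Pauli $P$ by $(1-p)^{|P|}$ without mixing it with other Paulis, so it never splits a Pauli into several branches and hence creates no new Pauli paths. To put a general unital channel into this shape, I would invoke the normal form $\calN(\cdot)=U\calN'(V(\cdot)V^\dag)U^\dag$ of Lemma~\ref{le:normal}: when $\calN$ is unital one has $\bold{t}=0$ and the Pauli transfer matrix of $\calN'$ is diagonal, with entries $1,D_X,D_Y,D_Z$. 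The single-qubit unitaries $U$ and $V^\dag$ can then be absorbed into the two adjacent unitary layers (and, at the ends of the circuit, into $\mathcal{V}^{\mathrm{single}}$, into $O$, and into $\rho$). Because $U$ and $V$ act on a single qubit, each modified layer still consists of non-overlapping gates on $\mathcal{O}(1)$ qubits, so after this purely cosmetic relabelling every noise channel is diagonal in the Pauli basis and we are exactly in the setting of Ref.~\cite{aharonov2022polynomial}.

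Next I would reconstruct the combinatorial bookkeeping underlying the count. With all noise diagonal, a path $\gamma=(P_0,\dots,P_L)$ has a non-zero Fourier coefficient only if each transition $\llangle P_j|\widehat{\mathcal{U}}_j|P_{j-1}\rrangle$ through a unitary layer is non-zero; for a layer of non-overlapping $\mathcal{O}(1)$-local gates this forces, gate by gate, $P_j$ to be non-identity on a gate's qubits precisely when $P_{j-1}$ is (conjugation by a unitary fixes the identity), and $P_j$ to agree with $P_{j-1}$ off the layer's support. Calling a gate \emph{active at layer $j$} when $P_j$ (equivalently $P_{j-1}$) is non-identity on it, disjointness of the gates in a layer gives at most $|P_j|$ active gates at layer $j$, so the total number of (layer, active gate) incidences along a path of weight at most $k$ is at most $\sum_j|P_j|=|\gamma|\le k$. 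In addition $P_j=I$ forces $P_{j-1}=I$, so a legal path is the identity outside a contiguous window of at most $k$ layers abutting the observable.

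With these facts the count proceeds as in Ref.~\cite{aharonov2022polynomial}. The active gates form a subgraph of the graph whose vertices are the circuit's gates and whose edges join gates of adjacent layers sharing a qubit, a graph of degree $2^{\mathcal{O}(1)}$; every connected component of this subgraph is anchored, through the ``identity is absorbing'' property, at the gates of layer $L$ meeting $\mathrm{supp}(O)$, and a component reaching $t$ layers deep consumes at least $t$ of the weight budget $k$ while costing only a factor $n\cdot 2^{\mathcal{O}(t)}$ to specify (a root position plus bounded-degree growth). Multiplying over components, using that their weight contributions sum to at most $k$ and that a component spanning all $L$ layers buys a factor $n$ against $L$ units of weight, and adding a further $2^{\mathcal{O}(k)}$ for the Pauli content on the at-most-$k$ active gates, bounds the number of legal paths by $n^{k/L}\exp(\mathcal{O}(k))$. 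The enumeration algorithm is then the natural depth-first backward propagation of $O$: at each layer one branches over the admissible pre-images on the active gates and prunes a partial path as soon as its accumulated weight reaches $k$; every node of this search tree is a prefix of a legal path, so the running time is, up to a $\mathrm{poly}(n)$ cost per node, the number of legal paths, i.e.\ $n^{k/L}\exp(\mathcal{O}(k))$.

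The only genuinely delicate step is the last one: extracting the sharp exponent $k/L$ rather than a crude $n^{\mathcal{O}(k)}$. This is precisely the content of Lemma~8 of Ref.~\cite{aharonov2022polynomial}, and the single new observation needed to import it here is that a unital channel in normal form has a diagonal Pauli transfer matrix and therefore does not branch; non-unital noise does create additional paths and genuinely requires the separate treatment given in the remainder of this section.
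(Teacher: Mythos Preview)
Your proposal is correct and matches the paper's approach exactly: the paper's entire argument is the observation that a unital channel in normal form has $\bold{t}=0$ and hence a diagonal Pauli transfer matrix for $\calN'$, so after absorbing the single-qubit unitaries $U,V$ into the adjacent layers one is literally in the setting of Lemma~8 of Ref.~\cite{aharonov2022polynomial}. You go further than the paper by sketching the combinatorics behind that lemma, whereas the paper simply defers to the reference after noting the diagonal-PTM reduction.
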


However, the result discussed above does not hold if the channel $\calN$ is non-unital. The following observation provides an easy counter-example.

\begin{observation}
Let $\calC$ be a quantum circuit interspersed by amplitude damping noise and let $O$ be a $k$-local Hamiltonian such as
\begin{align}
    O = \sum_{\substack{P\in\{I,Z\}^{\otimes n} : \abs{P}<k}}P.
\end{align} 
Assume for the sake of simplicity that the circuit ends with a layer of Pauli gates. 
Then there are at least  $\Omega(n^{k-1})$ legal Pauli paths.
\end{observation}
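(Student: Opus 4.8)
The plan is to exhibit $\Omega(n^{k-1})$ distinct legal Pauli paths by exploiting the fact that the $\bold{t}$-component of the amplitude-damping channel converts $Z$'s into $I$'s, so that many different terminal Pauli operators $P_{L+1}$ collapse onto the same lower-weight Pauli operator after one noisy layer. Concretely, I would first observe that the final Pauli-gate layer $\mathcal{V}^{\mathrm{single}}$ acts diagonally on Pauli operators up to signs, hence maps each $P \in \{I,Z\}^{\otimes n}$ to $\pm P$, so it suffices to build legal paths whose last Pauli $P_{L+1}$ ranges over the $\Omega(n^{k-1})$ elements of $\{I,Z\}^{\otimes n}$ with weight exactly $k-1$ that appear in $O$ (there are $\binom{n}{k-1} = \Omega(n^{k-1})$ of these, and each has $\llangle O|P_{L+1}\rrangle \neq 0$ by the definition of $O$).

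Next I would trace a single Pauli path backwards for each such terminal operator. Recall that the amplitude-damping channel in normal form satisfies $\llangle Z|\widehat{\calN^{\dag}}|Z\rrangle = 1-\gamma \neq 0$ and, crucially, $\llangle Z|\widehat{\calN^{\dag}}|I\rrangle$-type transitions: more precisely $\calN^{\dag}(Z) = (1-\gamma)Z + \gamma I$, so $\llangle I|\widehat{\calN^{\dag}}|Z\rrangle = \gamma \neq 0$. Thus for a fixed weight-$(k-1)$ string $P_{L+1}\in\{I,Z\}^{\otimes n}$, I can choose at each noisy layer to either keep each $Z$ as a $Z$ or to send it to $I$; in particular, choosing to send every $Z$ to $I$ at the very first noisy layer gives a path $\gamma = (I^{\otimes n}, I^{\otimes n}, \dots, I^{\otimes n}, P_{L+1})$ (with $P_0 = I^{\otimes n}$) whose intermediate weight is zero, so $\abs{\gamma} = 0 < k$, and whose Fourier coefficient is a product of nonzero transition amplitudes — hence $\bbE_{\calC\sim\calD_{\mathrm{circ}}}[\Phi_\gamma^2(\calC)] \neq 0$ and $\llangle O|P_{L+1}\rrangle \neq 0$, so $\gamma$ is legal. (One must check that the unitary layers $\mathcal{U}_j$, being quantum channels, cannot make this coefficient vanish in expectation; since $P_j = I^{\otimes n}$ is fixed by any unital layer and $\calU_j(I)=I$ always, the relevant transition amplitudes through the $\mathcal{U}_j$'s are exactly $1$, and only the noise layers and the final Pauli layer matter.)

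Finally I would argue these $\Omega(n^{k-1})$ paths are pairwise distinct: two paths built from distinct terminal strings $P_{L+1} \neq P'_{L+1}$ differ in their last component, hence are distinct as elements of $\calP_n^{L+2}$. This yields the claimed lower bound $\Omega(n^{k-1})$ on the number of legal Pauli paths, which — contrasted with the $n^{k/L}\cdot\exp(\calO(k))$ bound of Lemma~\ref{lem:unital-runtime} for the unital case — shows that no analogous polynomial-in-$n$ enumeration guarantee can hold for non-unital noise when $O$ has superpolynomially many (or even just $\binom{n}{k-1}$-many) Pauli terms of weight near $k$. The main subtlety is purely bookkeeping: one must phrase the construction so that each chosen path genuinely has $\abs{\gamma} < k$ (achieved by killing all the $Z$'s immediately at the first noise layer, keeping all intermediate weights zero) and genuinely has a nonvanishing expected squared Fourier coefficient (achieved because every transition amplitude in the product is one of $1$, $1-\gamma$, or $\gamma$, all nonzero, and the expectation over circuit randomness only involves the random single-qubit gates which act trivially on $I^{\otimes n}$).
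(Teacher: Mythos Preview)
Your argument is correct and is exactly the natural justification of this Observation, which the paper states without proof as an ``easy counter-example.'' The only bookkeeping slip is the claim that $|\gamma|=0$: under the paper's convention the terminal Pauli contributes to the path weight (see the Definition of path weight, $|\gamma|=\sum_{i=1}^L |P_i|$), so your path $(I^{\otimes n},\dots,I^{\otimes n},P_L)$ with $|P_L|=k-1$ actually has $|\gamma|=k-1$, not $0$. This is immaterial --- you still have $|\gamma|<k$, and the Observation as stated does not even restrict to bounded-weight paths --- but it is worth keeping the conventions aligned with the paper.
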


Crucially, projectors over computational basis states, e.g., $\ketbra{0^n}{0^n} = 2^{-n} \sum_{P\in\{I,Z\}^{\otimes n}}P$ constitute another possible counter-example.
These observations motivate the adoption of a different strategy for circuits with non-unital noise, which we formalize in the following Lemma.  

\begin{lemma}
[Runtime for arbitrary noise]
\label{lem:runtime-non-u}
Assume that each unitary layer $U_j$ consists of non-overlapping 1 and 2-qubit gates. 
For any noise, if the observable $O$ contains $M$ different Pauli terms, then the number of legal Pauli paths with weight at most $k$ is at most $M \exp(\calO(k))$. Furthermore, there is an efficient algorithm to enumerate the legal paths in time $M n \exp(\calO(k))$.
\end{lemma}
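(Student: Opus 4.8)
The plan is to enumerate the legal paths by a depth-first \emph{backward} traversal of the circuit, seeded by the Pauli terms of $O$, and to show that the size of this traversal is governed solely by the path-weight budget $k$. Write $\calC = \calC_{L+1}\circ\calC_L\circ\dots\circ\calC_1$ with $\calC_{L+1} = \calV^{\mathrm{single}}$ and $\calC_j = \calN^{\otimes n}\circ\calU_j$ for $j\le L$. A legal path $\gamma = (P_0,\dots,P_{L+1})$ needs $\llangle O\,|\,P_{L+1}\rrangle\ne 0$, so $P_{L+1}$ is one of the $M$ Pauli terms of $O$, and it needs $\llangle P_j\,|\,\widehat\calC_j\,|\,P_{j-1}\rrangle\ne 0$ at every layer, i.e.\ $P_{j-1}$ occurs in the Pauli expansion of $\calC_j^{\dag}(P_j)$. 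So the idea is to fix one term $P^{\star}=P_{L+1}$, bound the number of legal paths ending at $P^{\star}$ by $\exp(\calO(k))$, and multiply by $M$.

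Two structural facts drive the bound. \emph{(i) Suffix structure.} Each $\calC_j$ is trace preserving, so $\calC_j^{\dag}$ is unital and $\calC_j^{\dag}(I^{\otimes n}) = I^{\otimes n}$; hence $P_j = I^{\otimes n}$ forces $P_{j-1}=I^{\otimes n}$, and in every legal path the non-identity Paulis form a final block $\{m,m+1,\dots,L+1\}$. Thus a legal path is determined by $m$ together with the chain $(P_m,\dots,P_{L+1})$ of non-identity Paulis (plus the all-identity path if $P^{\star}=I$). \emph{(ii) Bounded branching}, uniform in the realized gates. Since $\calN^{\dag}$ is a single-qubit linear map, $\calN^{\dag}(Q)$ is a combination of at most $4$ Paulis for any single-qubit $Q$, and conjugation by a $1$- or $2$-qubit gate of $\calU_j$ sends a Pauli supported inside the gate to a combination of at most $15$ Paulis supported inside the gate. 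Because $\calN^{\dag\otimes n}(P_j)$ stays supported inside $\mathrm{supp}(P_j)$ and the gates of $\calU_j$ are non-overlapping, at most $\abs{P_j}$ of them act non-trivially on it, so $\calC_j^{\dag}(P_j) = \calU_j^{\dag}\!\left(\calN^{\dag\otimes n}(P_j)\right)$ is a sum of at most $4^{\abs{P_j}}15^{\abs{P_j}}\le 60^{\abs{P_j}}$ Paulis, each supported on the $\le 2\abs{P_j}$ qubits covered by those gates (and at most $3^{\abs{P_{L+1}}}$ Paulis for the single-qubit layer $\calC_{L+1}$).

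Given (i) and (ii), the enumeration is a DFS: from the current Pauli $P_j$ compute $\calC_j^{\dag}(P_j)$ and recurse into each term $P_{j-1}$ that keeps the accumulated path weight below $k$; if a branch reaches $P_{j-1}=I^{\otimes n}$, record the all-identity completion and stop. A legal path ending at $P^{\star}$ is encoded self-delimitingly by the list of branch indices chosen along the way (at the step from $P_j$ the index lies in $\{1,\dots,60^{\abs{P_j}}\}$, readable once $\abs{P_j}$ is known) together with one ``stop/continue'' bit per layer; since $\abs{\gamma}<k$ forces $\sum_j\abs{P_j}<k$ and hence at most $k$ non-identity layers, this code has length $\sum_j\calO(\abs{P_j})+\calO(k)=\calO(k)$, so there are $2^{\calO(k)}=\exp(\calO(k))$ such paths, giving $M\exp(\calO(k))$ in total. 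For the runtime, each node of the DFS requires locating the $\le\abs{P_j}$ active gates of $\calU_j$ (via a qubit-to-gate lookup table, $\calO(n)$ in the worst case) and listing the $\le 60^{\abs{P_j}}\le\exp(\calO(k))$ output Paulis, i.e.\ $\calO(n)+\exp(\calO(k))$ per node, while the DFS visits $\exp(\calO(k))$ nodes per seed term and all-identity completions are stored in compressed form. Summing over the $M$ seeds gives $M\,n\,\exp(\calO(k))$.

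The delicate point — and the precise reason the unital argument of Ref.~\cite{aharonov2022polynomial} does not transfer — is that non-unital noise injects identity components (through the normal-form parameters $\bold{t}$), so a non-identity Pauli can shed an identity term and a legal path can carry an all-identity prefix of arbitrary length; for observables with many low-weight terms this genuinely produces $\Omega(n^{k-1})$ legal paths, as in the Observation above, which is why the bound must carry the factor $M$. The resolution has two ingredients: (a) charge this extra freedom to the single choice of the block start $m$, which is free once the non-identity chain is fixed, so it only multiplies the count by $\calO(1)$ per chain; and (b) keep the per-layer branching exponential in $\abs{P_j}$ \emph{only}, never in $n$ or $L$, so that the global weight budget $k$ caps the product $\prod_j 60^{\abs{P_j}}$. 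Ingredient (b) is exactly where the hypothesis of non-overlapping $1$- and $2$-qubit gates enters: it guarantees both that at most $\abs{P_j}$ gates touch the support and that each contributes a dimension-independent number of Pauli terms. This is the main obstacle; everything else is bookkeeping.
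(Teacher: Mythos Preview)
Your proof is correct and rests on the same two structural facts the paper uses: unitality of each $\calC_j^{\dag}$ forces the non-identity Paulis into a suffix of length at most $k$, and the light-cone of non-overlapping $\calO(1)$-qubit gates bounds the per-layer branching by $c^{|P_j|}$, so the total count is $M\cdot c^{\sum_j|P_j|}\le M\exp(\calO(k))$. The packaging differs slightly: the paper runs a BFS and counts in three stages (choose the weight profile $(\ell_j)_j$ via stars-and-bars, then the supports via the light cone, then the $X/Y/Z$ labels), whereas you run a DFS and bound the count by a self-delimiting code of length $\calO(k)$; both are valid and yield the same bound, with your encoding argument arguably a bit cleaner since it avoids the separate weight-profile step.
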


\begin{proof}

We will enumerate all the legal Pauli paths of the form $\gamma = (P_0, P_1,\dots, P_{L})$ with weight smaller than $k$ with an iterative algorithm.
First, we notice that these paths contain at most $k$ non-identity terms otherwise their path weights would exceed $k$ (we recall that  $P_0$ does not contribute to the Pauli weight).
Moreover, such non-identity terms need to be consecutive, since the adjoint of any quantum channel is unital.
Thus, if $L>k$, then the legal paths $\gamma$ with weight smaller than $k$ it holds that
\begin{align}
    P_0 = P_1 = \dots = P_{L-k+1} = I^{\otimes n}.
\end{align}
Therefore, we can assume without loss of generality that $L\leq k$. 

\smallskip

The enumeration algorithm corresponds to a Breadth-First Search (BFS) of the sub-tree constituted by the legal paths with weight less than $k$. Specifically, it works by iterative forming some \emph{partial} Pauli paths, which are vectors of the form
\begin{align}
    (\star, \star,\dots, \star, P_j, P_{j+1},\dots, P_{L}),
\end{align}
where the symbol $\star$ denotes an empty entry.
\begin{enumerate}
    \item Let $\mathbb{S}_t$ be a set of partial Pauli paths at time $t$. We set $t = L$ at the beginning of the algorithm, and we decrease the counter by one at each iteration until $0$. 
    \item We initialize the set $\mathbb{S}_{L}$ as follows
    \begin{align}
        \mathbb{S}_{L} \leftarrow \left\{(\star, \star,\dots, \star, P_{L}) \} : \llangle P_{L} | O \rrangle \neq 0 \text{ and } \abs{P_{L}} < k \right\}
    \end{align}
    \item At iteration $0 < j < L$, for all partial paths $(\star, \star,\dots, \star, P_{j+1},\dots, P_{L}) \in \mathbb{S}_{j+1}$ we compute the Heisenberg evolved observable $\calC_j^\dag(P_{j+1})$ and we add the following paths to the set $\mathbb{S}_j$
    \begin{align}
        \mathbb{S}_j \leftarrow \left\{(\star, \dots,  \star,P_{j}, P_{j+1}, \dots, P_{L}) \} : \llangle P_{j+1} | \widehat{\calC_j} |P_j \rrangle \neq 0 \text{ and } \abs{P_j} + \abs{P_{j+1}} +\dots \abs{P_{L}} < k \right\}.
    \end{align}
    \item At iteration $j = 0$, for all partial paths $(\star, P_2,\dots, P_{L}) \in \mathbb{S}_{1}$ we compute the Heisenberg evolved observable $\calC_1^\dag(P_{1})$ and we add the following paths to the set $\mathbb{S}_0$
    \begin{align}
        \mathbb{S}_0 \leftarrow \left\{(P_0, P_1, \dots, P_{L}) \} : \llangle P_{1} | \widehat{\calC_1} |P_0 \rrangle \neq 0 \right\}.
    \end{align}
\end{enumerate}

For any fixed index $j$, we can upper bound the number of partial Pauli paths $(\star, \dots,\star, P_j,\dots, P_L)$ contained in $\mathbb{S}_j$ with the following counting argument.

\begin{enumerate}
    \item We assign to the Pauli operators $P_j, P_{j+1},\dots, P_L$ their Pauli weights $\ell_j, \ell_{j+1},\dots, \ell_L$.
    The number of possible choices is at most
    \begin{align}
         &\sum_{w=0}^{k} \left\{\text{ number of solutions to the equation } \sum_{i=j}^L  \abs{\ell_{i}} =w  \right\}
         \\= &\sum_{w=0}^{k}  \binom{w+L-j}{L-j} \leq (k+1) 2^{k+L} \in \exp(\calO(k)).
    \end{align}
    where in the first step we used the fact that the number of solutions to the equation $\sum_{i=1}^t x_i = m$ under the constraint $x_i \leq 0$ equals $\binom{m+t-1}{t-1} \leq 2^{m+t-1}$, and the last step follows because we can assume $L\leq k$ without loss of generality.

    \item We choose the support (positions of identities and non-identities) for each Pauli operators $P_j, P_{j+1},\dots, P_L$. 
    We recall that $P_L$ satisfies $\llangle P_{L} | O \rrangle \neq 0$ and $\abs{P_L}<k$, i.e., $P_L$ is in the Pauli decomposition $O$ and has Pauli weight less than $k$.
     Thus, $P_{L}$ can take $\calO(\min \{ n^k, M\})$ different values. 
    For any given value of $P_{i+1}$, we can upper bound the number of possible choices of $P_i$ by light-cone argument. As $U_{i+1}$ contains only non-overlapping gates, each acting on $c\in\calO(1)$ qubits, and $P_{i+1}$ has weight $\ell_{i+1}$, then $P_i$ is supported on a set of $c^{\ell_{i+1}}$  qubits. Those qubits can take values $\{I,X,Y,Z\}$, which yields $\left(c^{\ell_{i+1}}\right)^4 = c^{4\ell_{i+1}}$ different choices.
    Overall, the number of possible choices for fixed values of $\ell_j, \ell_{j+1},\dots, \ell_{L}$ is at most
    \begin{align}
        \prod_{i=j}^L c^{4\ell_{i+1}}= c^{4\sum_{i=j}^L \ell_{i+1}} \leq c^{4k} \in \exp(\calO(k)).
    \end{align}
    We emphasize that this argument relies solely on the fact that a layer is comprised of non-overlapping $\calO(1)$-qubit gates, and does not require the circuit to be geometrically local.
    \item For each fixed choice of the supports, we assign a value a value among $X,Y,Z$ to each non-identity single-qubit operator. The number of possible choices is at most  
    \begin{align}
        \prod_{i=j}^L 3^{\ell_{i+1}} = 3^{\sum_{i=j}^L \ell_{i+1}} \leq 3^{k} \in \exp(\calO(k)).
    \end{align}
\end{enumerate}
Thus $\mathbb{S}_j$ contains at most $M\,\mathrm{exp}(\calO(k))$ elements. As $\calC_j$ contains only non-overlapping $\calO(1)$-qubit channels,  all the required  transition amplitudes $\llangle P_{j+1} | \widehat{\calC_j} |P_j \rrangle$ for $\abs{P_{j+1}} \leq k$ can be computed in time $\exp(\calO(k))$.
Moreover, the algorithm needs to store in memory each $n$-qubit Pauli operator, leading to an additional factor $\calO(n)$.
Putting all together, we find that the runtime is at most
\begin{align}
    M\, n\,  \exp(\calO(k)),
\end{align}

\end{proof}

Combining the above results, we obtain the following upper bound on the time complexity.

\begin{theorem}[Time complexity]
Let $\calC$ be a noisy quantum circuit sampled from the distribution $\calD_{\mathrm{circ}}$ with effective depolarizing rate $p \in \Theta(1)$
\begin{itemize}
    \item If the measured observable $O$ is expressed as a linear combination of $M$ Pauli operators, then there exists a classical algorithm running in time $M\cdot\mathrm{poly}(n, 1/\epsilon, 1/\delta)$ that approximates $\Tr[O\calC(\rho)]$ within additive error $\epsilon\norm{O}_\mathrm{F}$ with probability at least $1-\delta$ over the choice of $\calC$.
    \item If the noise channel $\calN$ is unital and the depth $L$ is at least logarithmic in the number of qubits $n$, then there exists a classical algorithm running in time $\mathrm{poly}(n)$ that approximates $\Tr[O\calC(\rho)]$ within additive error $\epsilon\norm{O}_\mathrm{F}$ with probability at least $1-\delta$ over the choice of $\calC$,  for arbitrary $\epsilon,\delta$ decaying inverse-polynomially in $n$.
\end{itemize}  
\end{theorem}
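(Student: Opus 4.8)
The plan is to combine the mean-squared-error bound of Theorem~\ref{thm:mse-core} with the path-enumeration guarantees of Lemmas~\ref{lem:runtime-non-u} and~\ref{lem:unital-runtime}; the only genuine work is to balance the path-weight cutoff $k$ against the target precision $\epsilon$, the failure probability $\delta$, and, in the second bullet, the depth $L$.

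First I would fix the cutoff. Since the effective depolarizing rate satisfies $p = \Theta(1)$, so that $\log(1/(1-p)) = \Theta(1)$, choosing
\[
 k \coloneqq \left\lceil \frac{\log(1/(\delta\epsilon^2))}{2\log(1/(1-p))}\right\rceil = \calO\!\left(\log(1/\epsilon) + \log(1/\delta)\right)
\]
guarantees $(1-p)^{2k} \leq \delta\epsilon^2$. Theorem~\ref{thm:mse-core} then yields $\bbE_{\calC\sim\calD_{\mathrm{circ}}}\Tr[(O_{\calC}^{(k)} - \calC^\dag(O))\rho]^2 \leq \delta\epsilon^2\norm{O}_\mathrm{F}^2$, and applying Markov's inequality to the non-negative random variable $\Tr[(O_{\calC}^{(k)} - \calC^\dag(O))\rho]^2$ shows that with probability at least $1-\delta$ over the draw of $\calC$ one has $\abs{\Tr[O_{\calC}^{(k)}\rho] - \Tr[O\calC(\rho)]} \leq \epsilon\norm{O}_\mathrm{F}$, using $\Tr[O\calC(\rho)] = \Tr[\calC^\dag(O)\rho]$. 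It thus remains to bound the cost of classically evaluating $\Tr[O_{\calC}^{(k)}\rho] = \sum_{\gamma = (P_0,\dots,P_L):\,\abs{\gamma}<k}\llangle O|P_L\rrangle\,\Phi_\gamma(\calC)\,\Tr[P_0\rho]$.

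For the first bullet I would invoke Lemma~\ref{lem:runtime-non-u}: the Pauli paths of weight below $k$ that contribute a nonzero term number at most $M\exp(\calO(k))$ and can be enumerated in time $Mn\exp(\calO(k))$. Each contributing path costs $\poly(n)$ additional time, since $\llangle O|P_L\rrangle$ and $\Tr[P_0\rho]$ come from the efficiently computable Pauli coefficients of $O$ and $\rho$, while $\Phi_\gamma(\calC)$ is a product of $L \leq \poly(n)$ transition amplitudes of $\calO(1)$-local channels, each evaluable in constant time. Since $k = \calO(\log(1/\epsilon)+\log(1/\delta))$ we have $\exp(\calO(k)) = \poly(1/\epsilon,1/\delta)$, so the total runtime is $M\cdot\poly(n,1/\epsilon,1/\delta)$; combined with the accuracy statement above this proves the first claim. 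For the second bullet I would instead use Lemma~\ref{lem:unital-runtime}, available because $\calN$ is unital: the number of contributing paths of weight below $k$ is at most $n^{k/L}\exp(\calO(k))$, enumerable in time $n^{k/L}\exp(\calO(k))$, a bound now independent of $M$. When $\epsilon$ and $\delta$ decay inverse-polynomially in $n$ we get $k = \calO(\log n)$, and the hypothesis $L \geq \log n$ forces $k/L = \calO(1)$, hence $n^{k/L} = \poly(n)$ and $\exp(\calO(k)) = \poly(n)$; as each contributing path again yields a $\poly(n)$-time computable term, the overall runtime is $\poly(n)$.

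All the heavy machinery — the average-case Frobenius-norm contraction underlying Theorem~\ref{thm:mse-core}, and the light-cone/counting arguments underlying Lemmas~\ref{lem:runtime-non-u} and~\ref{lem:unital-runtime} — is already in place, so I do not expect a serious obstacle. The single point requiring care is the passage from the second-moment bound to a high-probability statement over the circuit ensemble, together with checking that the resulting dependence of $k$ on $\epsilon$ and $\delta$ keeps $\exp(\calO(k))$ polynomial in $1/\epsilon$ and $1/\delta$ (respectively in $n$); both are immediate once one uses that $p$ is a fixed constant bounded away from $0$ and $1$.
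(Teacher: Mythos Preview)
Your proposal is correct and follows essentially the same approach as the paper: invoke Theorem~\ref{thm:mse-core}, convert the second-moment bound to a high-probability guarantee via Markov's inequality with the same choice of cutoff $k$, and then appeal to Lemma~\ref{lem:runtime-non-u} for the first bullet and Lemma~\ref{lem:unital-runtime} for the second. If anything, you are slightly more explicit than the paper in verifying that $k/L=\calO(1)$ under the hypotheses of the unital case.
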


\begin{proof}
By Theorem~\ref{thm:mse-core}, the approximate Heisenberg evolved observable $O_\calC^{(k)}$ satisfies the following:
 \begin{align}
     {\mathbb{E}_{\calC \sim \calD_{\mathrm{circ}}} \Tr[\left(O^{(k)}_\calC - \calC^\dag(O)\right)\rho]^2} \leq (1-p)^{2k} \norm{O}^2_{\mathrm{F}}.
 \end{align}
Thus, Markov's inequality yields
 \begin{align}
    &\Pr_{\calC\sim\calD_{\mathrm{circ}}}\left\{\left\lvert\Tr[\left(O^{(k)}_\calC - \calC^\dag(O)\right)\rho]\right\rvert \geq \epsilon\norm{O}_{\mathrm{F}}\right\}
    =
     \Pr_{\calC\sim\calD_{\mathrm{circ}}}\left\{\Tr[\left(O^{(k)}_\calC - \calC^\dag(O)\right)\rho]^2 \geq \epsilon^2\norm{O}^2_{\mathrm{F}}\right\}
     \\\leq &\frac{{\mathbb{E}_{\calC \sim \calD_{\mathrm{circ}}} \Tr[\left(O^{(k)}_\calC - \calC^\dag(O)\right)\rho]^2} }{\norm{O}^2_{\mathrm{F}}\epsilon^2}\leq  \frac{(1-p)^{2k} }{\epsilon^2}
 \end{align}
Choosing $k = \lceil \log\left(\delta^{1/2} \epsilon\right)/\log(1-p)\rceil  \in \calO(\log\left(\delta^{-1}\epsilon^{-1})\right)$, we obtain the desired failure probability $\delta$, i.e.
 \begin{align}
    &\Pr_{\calC\sim\calD_{\mathrm{circ}}}\left\{\left\lvert\Tr[\left(O^{(k)}_\calC - \calC^\dag(O)\right)\rho]\right\rvert \geq \epsilon\right\} \leq \delta.
 \end{align}
 By Lemma\ \ref{lem:runtime-non-u}, the runtime of the Pauli Propagation algorithm with path-weight cutoff $k \in \calO(\log\left(\delta^{-1}\epsilon^{-1})\right)$ is
 \begin{align}
     Mn \exp\left(\calO(k)\right) = M n \mathrm{poly}(\epsilon^{-1}, \delta^{-1}).
 \end{align}
 Moreover, assuming that $L \in \Omega(\log(n))$ and the noise channel $\calN$ is unital, then by Lemma\ \ref{lem:unital-runtime}  we can achieve $\epsilon \norm{O}_\mathrm{F}$ precision and with probability $\delta$ in polynomial time, for arbitrary $\epsilon,\delta$ decaying inverse-polynomially in $n$.
\end{proof}

\section{Effective depth of noisy random circuits beyond local $2$-designs}
We conclude our theoretical investigation by extending the main result of Ref.~\cite{mele2024noise} to the broad classes of circuits studied in this work. Specifically, we demonstrate that, on average over the choice of random unitaries, the Frobenius norm of the traceless components of an observable undergoes exponential suppression with circuit depth under noisy Heisenberg evolution.

\begin{theorem}
\label{thm:ed}
For all observables $O$ it holds that
    \begin{align}
        \bbE_{\calC\sim\calD_{\mathrm{circ}}}\left(\norm{\calC^\dag(O)}_\mathrm{F}^2 - 4^{-n}\Tr[\calC^\dag(O)]\right) \leq (1-p)^{2L} \norm{O}_\mathrm{F}^2.
    \end{align}
    Moreover, given two quantum states $\rho,\sigma$, we have
    \begin{align}
        \bbE_{\calC\sim\calD_{\mathrm{circ}}} \Tr[O\calC(\rho-\sigma)]^2 \leq 4(1-p)^{2L} \norm{O}_\mathrm{F}^2.
    \end{align}
\end{theorem}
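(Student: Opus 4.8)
The plan is to track, layer by layer, the quantity $\mathsf{N}(A)\coloneqq\norm{A}_\mathrm{F}^2-4^{-n}\Tr[A]^2$ — the squared normalized Frobenius norm of the traceless part of $A$ — and to show that each noise layer contracts it by a factor $(1-p)^2$ once the surrounding unitary layers have been symmetrized by single-qubit twirls. Two facts drive this. First, $\mathsf{N}$ is invariant under conjugation by any unitary, since both $\norm{\cdot}_\mathrm{F}$ and $\Tr[\cdot]$ are. Second, combining the ``average norm contraction'' lemma with Jensen's inequality gives, for every observable $A$, that $\bbE_{W\sim\calD_{\mathrm{single}}^{\otimes n}}\mathsf{N}\bigl(\calN^{\dag\otimes n}(W^\dag A W)\bigr)\le(1-p)^2\,\mathsf{N}(A)$: here I would use that $\bbE_W[W^\dag A W]$ equals the identity component of $A$ (because $\calD_{\mathrm{single}}$ is a $1$-design) and that $\calN^\dag$ is unital, whence $\bbE_W\Tr[\calN^{\dag\otimes n}(W^\dag A W)]=\Tr[A]$ and therefore $\bbE_W\Tr[\calN^{\dag\otimes n}(W^\dag A W)]^2\ge\Tr[A]^2$ by Jensen; this upgrades the lemma's bound on $\bbE_W\norm{\cdot}_\mathrm{F}^2-4^{-n}\Tr[A]^2$ to the stated bound on $\bbE_W\mathsf{N}(\cdot)$. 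Recall $1-p=\chi_{\calD_{\mathrm{single}}}(\calN)<1$ by the definition of the effective depolarizing rate.

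For the first inequality I would follow the proof of Corollary~\ref{cor:norm-not-incr} almost verbatim. Expanding $\calC^\dag(O)=\calU_1^\dag\circ\calN^{\dag\otimes n}\circ\calU_2^\dag\circ\cdots\circ\calU_L^\dag\circ\calN^{\dag\otimes n}\circ\calV^{\mathrm{single}\dag}(O)$ and invoking the invariance Eq.~\eqref{eq:d-single} of $\calD_2,\dots,\calD_{L+1}$, I would insert — at the level of second moments, which is all that $\bbE_\calC\mathsf{N}(\calC^\dag(O))$ depends on — an independent single-qubit twirl $\WC_j^\dag$ with $W_j\sim\calD_{\mathrm{single}}^{\otimes n}$ immediately before each of the $L$ noise layers in the Heisenberg picture. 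The evolved observable then becomes an alternating composition of conjugations by the $\calU_j^\dag$'s (harmless for $\mathsf{N}$) and of the $L$ blocks $\calN^{\dag\otimes n}\circ\WC_j^\dag$, each contracting $\mathsf{N}$ by $(1-p)^2$ in expectation by the bound above, applied conditionally. Peeling off the noise layers one at a time from the $O$-side, via iterated expectations, yields $\bbE_{\calC\sim\calD_{\mathrm{circ}}}\mathsf{N}(\calC^\dag(O))\le(1-p)^{2L}\mathsf{N}(O)\le(1-p)^{2L}\norm{O}_\mathrm{F}^2$, which is the first claim once one notes $\mathsf{N}(\calC^\dag(O))=\norm{\calC^\dag(O)}_\mathrm{F}^2-4^{-n}\Tr[\calC^\dag(O)]^2$.

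For the second inequality I would reduce to the first. Since $\Tr[\rho-\sigma]=0$, writing $\calC^\dag(O)=4^{-n}\Tr[\calC^\dag(O)]\,I^{\otimes n}+B$ with $B$ traceless (so $\norm{B}_\mathrm{F}^2=\mathsf{N}(\calC^\dag(O))$) gives $\Tr[O\calC(\rho-\sigma)]=\Tr[B(\rho-\sigma)]$. Using the invariance of $\calD_1$, I would split off the first layer as $\calC=\calC'\circ\WC_1$ with $W_1\sim\calD_{\mathrm{single}}^{\otimes n}$, where $\calC'$ is obtained from $\calC$ by replacing $\calU_1$ with a layer $\calU_1'$; then $B=W_1^\dag B'W_1$ with $B'$ the traceless part of $(\calC')^\dag(O)$, so that $\Tr[B(\rho-\sigma)]=\Tr[B'\,W_1\rho W_1^\dag]-\Tr[B'\,W_1\sigma W_1^\dag]$. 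Applying $(x-y)^2\le 2x^2+2y^2$ and then Lemma~\ref{lem:schuster} conditionally on $\calC'$ — the ensemble $\{W_1\rho W_1^\dag\}$ has purity $1$ since $\bbE_{W_1}[W_1\rho W_1^\dag]=I^{\otimes n}/2^n$ for a $1$-design, and likewise for $\sigma$ — bounds $\bbE_\calC\Tr[O\calC(\rho-\sigma)]^2$ by $4\,\bbE_{\calC'}\norm{B'}_\mathrm{F}^2=4\,\bbE_{\calC'}\mathsf{N}\bigl((\calC')^\dag(O)\bigr)$, and the first part of the theorem applied to $\calC'$ — which still has $L$ noise layers and whose layers $\calU_2,\dots,\calU_L,\calV^{\mathrm{single}}$ still satisfy Eq.~\eqref{eq:d-single} (the proof of the first part only uses the invariance of $\calD_2,\dots,\calD_{L+1}$, not that of $\calD_1$) — bounds this by $4(1-p)^{2L}\norm{O}_\mathrm{F}^2$.

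The main obstacle I anticipate is precisely the per-layer contraction statement: non-unital noise does not preserve the trace of the Heisenberg-evolved observable, so the average norm-contraction lemma cannot simply be iterated in its stated form, and the short Jensen argument is what converts it into a contraction of the traceless-part norm $\mathsf{N}$ that telescopes cleanly across the $L$ layers. Everything else is bookkeeping: justifying the twirl insertions via the independence of the layers and Eq.~\eqref{eq:d-single} (legitimate because $\mathsf{N}(\calC^\dag(O))$ is a quadratic form in the Pauli transfer matrix of $\calC^\dag$), and organizing the conditional expectations — with the one conceptual caveat that the cancellation in the second part genuinely requires the traceless combination $\rho-\sigma$, since for a single state an undamped term proportional to $4^{-n}\Tr[O]^2$ would survive.
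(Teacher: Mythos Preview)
Your proof is correct and follows the same high-level strategy as the paper: a per-layer contraction for the first inequality, then a reduction to the first inequality for the second one via $(x-y)^2\le 2x^2+2y^2$ and Lemma~\ref{lem:schuster} (using the invariance of $\calD_1$ to get a purity-$1$ state ensemble).

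The one genuine difference is in how you establish the per-layer contraction. The paper decomposes each noisy layer as $\calC_j = \calN_p^{(\mathrm{depo})\otimes n}\circ\tilde\calC_j$, uses orthogonality (Lemma~\ref{lem:ortho1}) to split the observable by Pauli weight so that the depolarizing factor pulls out as $(1-p)^{2w}$, and then invokes Lemma~\ref{lem:norm-not-incr} to show that $\tilde\calC_j^\dag$ does not increase the Frobenius norm on average. You instead apply the average-norm-contraction lemma directly to $\calN^{\dag\otimes n}\circ\calW^\dag$ and add the Jensen step $\bbE_W\Tr[\calN^{\dag\otimes n}(W^\dag A W)]^2\ge\Tr[A]^2$ (valid because $\bbE_W[W^\dag A W]=2^{-n}\Tr[A]I$ and $\calN^\dag$ is unital) to turn that bound into a contraction of the traceless-part norm $\mathsf{N}$. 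This is a clean alternative: it bypasses the depolarizing decomposition and the weight-by-weight orthogonality bookkeeping, at the price of the extra Jensen observation. Your care in noting that Part~1 only uses the invariance of $\calD_2,\dots,\calD_{L+1}$ — so that it applies to $\calC'$ after peeling off $\calW_1$ in Part~2 — is exactly the right justification and mirrors what the paper does implicitly.
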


\begin{proof}
We will exploit the fact that $\calC  = \bigcirc_{j=1}^L \calC_j = \bigcirc_{j=1}^L \calN_p^{(\mathrm{depo})\otimes n}\circ \tilde\calC_j $.
Therefore, for all observables $O=\sum_{P\in\calP_n} a_P P$ we have 
\begin{align}
\bbE_{\calC_j}\left(\norm{\calC_j^\dag(O)}_\mathrm{F}^2 - 4^{-n}\Tr[\calC_j^\dag(O)]\right)
= &\bbE_{\calC_j} \sum_{w=1}^n\norm{ \tilde\calC^\dag_j\circ\calN_p^{(\mathrm{depo})\otimes n}\left(\sum_{P:\abs{P}=w} a_P P\right) }^2_\mathrm{F}
\\ =&\bbE_{\calC_j} \sum_{w=1}^n (1-p)^{2w} \norm{ \tilde\calC^\dag_j\left(\sum_{P:\abs{P}=w} a_P P\right) }^2_\mathrm{F}
\\ \leq & (1-p)^{2}\left(\norm{O}_\mathrm{F}^2 - 4^{-n}\Tr[O]\right),
\end{align}
where the inequality follows from Lemma~\ref{lem:norm-not-incr}. Iterating over all $j$ we obtain the desired result.

The we can prove the second part of the Theorem as follows:
\begin{align}
    &\bbE_{\calC\sim\calD_{\mathrm{circ}}} \Tr[O\calC(\rho-\sigma)]^2 
    = \bbE_{\calC\sim\calD_{\mathrm{circ}}} \Tr\left[\left(O - \Tr[O]\frac{I^{\otimes n}}{2^n}\right)\calC(\rho-\sigma)\right]^2 
    \\&\leq 2 \bbE_{\calC\sim\calD_{\mathrm{circ}}} \Tr\left[\left(O - \Tr[O]\frac{I^{\otimes n}}{2^n}\right)\calC(\rho)\right]^2 
    + 2 \bbE_{\calC\sim\calD_{\mathrm{circ}}} \Tr\left[\left(O - \Tr[O]\frac{I^{\otimes n}}{2^n}\right)\calC(\sigma)\right]^2 
    \\& \leq 4\bbE_{\calC\sim\calD_{\mathrm{circ}}}\left(\norm{\calC^\dag(O)}_\mathrm{F}^2 - 4^{-n}\Tr[\calC^\dag(O)]\right) 
    \\& \leq 4(1-p)^{2}\left(\norm{O}_\mathrm{F}^2 - 4^{-n}\Tr[O]\right),
\end{align}
where the second inequality follows by Lemma~\ref{lem:schuster} 
\end{proof}

We also obtain the following corollary for the trace distance.
\begin{corollary}
Given two quantum states $\rho,\sigma$, we have  
\begin{align}
    \bbE_{\calC\sim\calD_{\mathrm{circ}}} \norm{\calC(\rho-\sigma)}_1^2 \leq 4^{n+1}(1-p)^{2L}.
\end{align}
\end{corollary}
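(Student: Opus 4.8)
The plan is to reduce the trace-norm estimate to the Frobenius-norm control already supplied by Theorem~\ref{thm:ed}. First I would record the elementary pointwise inequality
\begin{align}
    \norm{A}_1 \leq \sqrt{2^n}\,\norm{A}_2 = 2^n \norm{A}_\mathrm{F}, \qquad \text{i.e.} \qquad \norm{A}_1^2 \leq 4^n \norm{A}_\mathrm{F}^2,
\end{align}
valid for any operator $A$ on $n$ qubits: the first step is Cauchy--Schwarz applied to the $2^n$ singular values of $A$, and the equality is just the paper's normalization convention $\norm{A}_\mathrm{F}^2 = \norm{A}_2^2/2^n$. Applying this with $A = \calC(\rho-\sigma)$ for every realization of the random circuit and taking expectations, it suffices to prove the Frobenius-norm bound $\bbE_{\calC\sim\calD_{\mathrm{circ}}}\norm{\calC(\rho-\sigma)}_\mathrm{F}^2 \leq 4(1-p)^{2L}$.

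For this, I would expand the normalized Frobenius norm in the Pauli basis via $\norm{B}_\mathrm{F}^2 = \tfrac{1}{4^n}\sum_{P\in\calP_n}\Tr[BP]^2$, giving
\begin{align}
    \bbE_{\calC\sim\calD_{\mathrm{circ}}}\norm{\calC(\rho-\sigma)}_\mathrm{F}^2 = \frac{1}{4^n}\sum_{P\in\calP_n}\bbE_{\calC\sim\calD_{\mathrm{circ}}}\Tr[P\calC(\rho-\sigma)]^2 .
\end{align}
Each summand is precisely the quantity controlled by the second bound of Theorem~\ref{thm:ed}, applied with the observable $O = P$; since $\norm{P}_\mathrm{F}^2 = 1$ for every Pauli, that bound yields $\bbE_{\calC\sim\calD_{\mathrm{circ}}}\Tr[P\calC(\rho-\sigma)]^2 \leq 4(1-p)^{2L}$ uniformly in $P$. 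There are $4^n$ terms, so after dividing by $4^n$ we obtain $\bbE_{\calC\sim\calD_{\mathrm{circ}}}\norm{\calC(\rho-\sigma)}_\mathrm{F}^2 \leq 4(1-p)^{2L}$, and combining with the first step gives $\bbE_{\calC\sim\calD_{\mathrm{circ}}}\norm{\calC(\rho-\sigma)}_1^2 \leq 4^{n+1}(1-p)^{2L}$.

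There is no real obstacle here — the corollary is essentially a repackaging of Theorem~\ref{thm:ed} — but it is worth flagging the one point where a naive approach fails: one would like to write $\norm{\calC(\rho-\sigma)}_1 = \Tr[O^\star\,\calC(\rho-\sigma)]$ for the sign operator $O^\star$ of $\calC(\rho-\sigma)$ and apply Theorem~\ref{thm:ed} directly, but $O^\star$ depends on the random circuit $\calC$, so Theorem~\ref{thm:ed} (which fixes $O$ before averaging) does not apply. Routing through the crude bound $\norm{\cdot}_1 \leq 2^n\norm{\cdot}_\mathrm{F}$ and summing over the fixed Pauli basis circumvents this, at the cost of the exponential $4^{n}$ prefactor; the resulting statement is still meaningful precisely in the regime $L \in \omega(\log n)$, where $(1-p)^{2L}$ beats $4^{-n}$.
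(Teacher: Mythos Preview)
Your proof is correct and follows essentially the same route as the paper: bound $\norm{\cdot}_1^2 \leq 2^n\norm{\cdot}_2^2 = 4^n\norm{\cdot}_\mathrm{F}^2$, expand the Frobenius norm in the Pauli basis, and apply the second bound of Theorem~\ref{thm:ed} termwise with $O=P$. Your closing remark about why the sign-operator shortcut fails is a nice addition not present in the paper's terse proof.
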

\begin{proof}
We will exploit the inequality $\norm{\cdots}_1^2 \leq 2^n \norm{\cdots}_2^2$ and the identity $\norm{H}_2^2 = \frac{1}{2^n}\sum_{P\in\calP_n}\Tr[HP]^2$, which holds for all Hermitian operators $H$. We obtain
    \begin{align}
      \bbE_{\calC\sim\calD_{\mathrm{circ}}} \norm{\calC(\rho-\sigma)}_1^2 
      \leq & 2^n \bbE_{\calC\sim\calD_{\mathrm{circ}}} \norm{\calC(\rho-\sigma)}_2^2 
      \\=   &\bbE_{\calC\sim\calD_{\mathrm{circ}}} \sum_{P\in\calP_n} \Tr[P\calC(\rho-\sigma)]^2
       \leq 4^{n+1}(1-p)^{2L},
    \end{align}
    where  we invoked Theorem~\ref{thm:ed} in the last step.
\end{proof}
A consequence of the above results is that any random noisy circuit can be truncated to an effective logarithmic depth for the task of estimating expectation values. 
We formalize this intuition with the following Corollary.
\begin{corollary}[Noise-induced shallow depth]
\label{thm:supp-eff}
Let $O$  be an observable and $\rho$ be an initial state. Consider the ``truncated'' noisy circuit 
\begin{align}
        \mathcal{C}_{[L, L-j]}  \coloneqq \mathcal{V}^{\mathrm{single}} \circ \mathcal{N}^{\otimes n} \circ \mathcal{U}_{L} \circ \mathcal{N}^{\otimes n} \circ \mathcal{U}_{L-1} \circ \dots \circ \mathcal{N}^{\otimes n} \circ \mathcal{U}_{L-j}, 
\end{align}
for a suitable $j \in \mathcal{O}(\log(\epsilon^{-1}\delta^{-1}))$.
With probability at least $1-\delta$ over the choice of $\mathcal{C}$, it holds that 
\begin{align}
    \abs{\Tr[O\mathcal{C}(\rho)] - \Tr[O\mathcal{C}_{[L, L-j]}(\sigma)]} \leq \epsilon \|O\|,
\end{align}
where $\sigma$ is an arbitrary state.
\end{corollary}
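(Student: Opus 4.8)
The plan is to exploit the fact that the \emph{retained} layers $\mathcal{U}_{L-j},\dots,\mathcal{U}_L$ by themselves already form a noisy random circuit that ``forgets'' its input after a logarithmic number of layers, so that the discarded prefix is irrelevant. Concretely, write $\mathcal{C} = \mathcal{C}_{[L,L-j]}\circ\mathcal{R}$ with $\mathcal{R} \coloneqq \mathcal{N}^{\otimes n}\circ\mathcal{U}_{L-j-1}\circ\dots\circ\mathcal{N}^{\otimes n}\circ\mathcal{U}_1$. Since $\mathcal{R}$ is a composition of quantum channels it is trace preserving, hence $\tau\coloneqq\mathcal{R}(\rho)$ is a genuine quantum state, and
\begin{align}
\Tr[O\mathcal{C}(\rho)]-\Tr[O\mathcal{C}_{[L,L-j]}(\sigma)] = \Tr\!\left[O\,\mathcal{C}_{[L,L-j]}(\tau-\sigma)\right],
\end{align}
where $\tau$ and $\sigma$ are both states. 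Because all unitary layers are sampled independently (Definition~\ref{def:circuitmod}), $\mathcal{R}$ (hence $\tau$) is independent of $\mathcal{C}_{[L,L-j]}$.

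The key step is to observe that $\mathcal{C}_{[L,L-j]}$ is itself a noisy circuit satisfying the hypotheses of Definition~\ref{def:circuitmod} with $j+1$ unitary layers: it interleaves $\mathcal{U}_{L-j},\dots,\mathcal{U}_L$ with $\mathcal{N}^{\otimes n}$ and ends with the single-qubit layer $\mathcal{V}^{\mathrm{single}}$, each of the distributions $\mathcal{D}_{L-j},\dots,\mathcal{D}_{L+1}$ satisfies the $1$-design invariance \eqref{eq:d-single}, and its effective depolarizing rate is still $p=1-\chi_{\mathcal{D}_{\mathrm{single}}}(\mathcal{N})$. Applying Theorem~\ref{thm:ed} (second statement) to this $(j+1)$-layered circuit with the two states $\tau,\sigma$ gives, for every fixed value of $\mathcal{R}$,
\begin{align}
\bbE_{\mathcal{C}_{[L,L-j]}}\Tr\!\left[O\,\mathcal{C}_{[L,L-j]}(\tau-\sigma)\right]^2 \;\le\; 4(1-p)^{2(j+1)}\norm{O}_{\mathrm{F}}^2 .
\end{align}
Since this bound is uniform in $\tau$, averaging over $\mathcal{R}$ and using $\norm{O}_{\mathrm{F}}\le\norm{O}$ yields
\begin{align}
\bbE_{\mathcal{C}\sim\mathcal{D}_{\mathrm{circ}}}\left(\Tr[O\mathcal{C}(\rho)]-\Tr[O\mathcal{C}_{[L,L-j]}(\sigma)]\right)^2 \;\le\; 4(1-p)^{2(j+1)}\norm{O}^2 .
\end{align}

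Finally, Markov's inequality gives
\begin{align}
\Pr_{\mathcal{C}}\!\left[\left|\Tr[O\mathcal{C}(\rho)]-\Tr[O\mathcal{C}_{[L,L-j]}(\sigma)]\right|\ge\epsilon\norm{O}\right] \le \frac{4(1-p)^{2(j+1)}}{\epsilon^2},
\end{align}
which is at most $\delta$ as soon as $j+1\ge \frac{\log(4\,\epsilon^{-2}\delta^{-1})}{2\log(1/(1-p))}$; since the noise rate is constant, $p=\Theta(1)$, so this is $j\in\mathcal{O}(\log(\epsilon^{-1}\delta^{-1}))$. (If the required threshold exceeds $L-1$, then $L$ itself is already $\mathcal{O}(\log(\epsilon^{-1}\delta^{-1}))$; one then takes $j=L-1$, so $\mathcal{C}_{[L,L-j]}=\mathcal{C}$ and the claim is immediate from Theorem~\ref{thm:ed}.) I do not expect a genuine obstacle here; the one conceptual point is recognising that the input forgetting is supplied by the \emph{kept} sub-circuit rather than the discarded prefix --- which is exactly why the estimate holds for an \emph{arbitrary} comparison state $\sigma$, since the first retained layer $\mathcal{U}_{L-j}$ scrambles $\sigma$ by a single-qubit $1$-design just as inside the proof of Theorem~\ref{thm:ed} --- and beyond that one only has to check that $\mathcal{C}_{[L,L-j]}$ inherits the circuit-model assumptions and carry out the elementary independence/conditioning bookkeeping.
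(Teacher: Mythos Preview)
Your proof is correct and follows essentially the same approach as the paper: both apply Theorem~\ref{thm:ed} to the truncated circuit $\mathcal{C}_{[L,L-j]}$ itself, then plug in $\tau=\mathcal{R}(\rho)$ as one of the two states and finish with Markov's inequality. Your version is in fact more careful, making explicit the independence of $\mathcal{R}$ from $\mathcal{C}_{[L,L-j]}$ (needed since Theorem~\ref{thm:ed} is stated for fixed states), verifying that $\mathcal{C}_{[L,L-j]}$ inherits the circuit-model assumptions, and handling the edge case $j\ge L$; the paper's proof glosses over these points.
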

\begin{proof}
   The Corollary follows as an immediate consequence of Theorem\ \ref{thm:ed}. 
   We simply replace the channel $\calC$ in the statement of Theorem\ \ref{thm:ed} with $\mathcal{C}_{[L, L-j]} $. Then we obtain that, for any two quantum states $\sigma, \sigma'$:
    \begin{align}
        \bbE_{\calC\sim\calD_{\mathrm{circ}}} \Tr[O \mathcal{C}_{[L, L-j]} (\sigma-\sigma')]^2 \leq 4(1-p)^{2j} \norm{O}^2.
    \end{align}
    Replacing $\sigma'$ with $\mathcal{N}^{\otimes n} \circ \mathcal{U}_{L-j -1} \circ \dots \circ \mathcal{N}^{\otimes n} \circ \mathcal{U}_{1}(\rho)$, we obtain
        \begin{align}
        \bbE_{\calC\sim\calD_{\mathrm{circ}}} \Tr[O(\mathcal{C}_{[L, L-j]} (\sigma) - \calC(\rho))]^2 \leq 4(1-p)^{2j} \norm{O}^2.
    \end{align}
    By Markov's inequality, we can choose a suitable  $j \in \mathcal{O}(\log(\epsilon^{-1}\delta^{-1}))$ such that, with probability $1-\delta$, it holds that
    \begin{align}
    \abs{\Tr[O\mathcal{C}(\rho)] - \Tr[O\mathcal{C}_{[L, L-j]}(\sigma)]} \leq \epsilon \|O\|.
\end{align}
This concludes the proof of the Corollary.
\end{proof}

\section{Necessity of the random circuit assumption}
\label{app:necessity}
We have stated that we need to require \textit{some} average-case assumption if we want to simulate noisy-circuits for any possibly non-unital noise. However, one might question why this assumption is specifically made for random circuits rather than, for instance, random input states with fixed circuits, as done in Ref.~\cite{schuster2024polynomial}. 
The reason is that we can show that relying solely on the randomness in the input state is impossible, assuming BPP $\neq$ BQP.

This can be seen by still leveraging Ref.~\cite{ben2013quantum}, which shows how to perform fault-tolerant quantum computation in a fully quantum-coherent fashion without mid-circuit measurements and fresh auxiliary qubits through a careful compilation strategy, when the noise is non-unital. The idea is to use such a compilation strategy after waiting for a small time during which the noise drives the system (initially in its random state) towards its fixed point. We provide the detailed proof below.

\begin{theorem}
    If $\mathrm{BPP} \neq \mathrm{BQP}$, there exists no efficient classical algorithm to simulate non-unital noisy quantum circuits using fixed circuits and only a randomness assumption over the input state choice (i.e., with high probability over the input state sampled from some distribution). 
\end{theorem}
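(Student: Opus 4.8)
The plan is to reduce an arbitrary \textsf{BQP} computation to the task of estimating an expectation value of a fixed noisy circuit acting on a \emph{random} input state, thereby showing that an efficient classical simulator for the latter would collapse \textsf{BQP} into \textsf{BPP}. Concretely, fix any language $L \in \mathsf{BQP}$ and an input $x$; we must produce a \emph{single} circuit $\calC_x$ (depending only on $x$) and a distribution $\calD_{\mathrm{in}}$ over input states such that, with high probability over $\rho \sim \calD_{\mathrm{in}}$, the quantity $\Tr[O\,\calC_x(\rho)]$ encodes the answer to ``$x \in L$?''. The key ingredient is the ``quantum refrigerator'' construction of Ref.~\cite{ben2013quantum}: when the local noise is non-unital, there is a fault-tolerant compilation scheme $\mathsf{FT}(\cdot)$ that, for any ideal poly-size circuit $\calA$, produces a noisy circuit $\mathsf{FT}(\calA)$ of polynomial size whose output is $\varepsilon$-close (in trace distance) to that of $\calA$, \emph{without} any mid-circuit measurements or fresh ancillas, provided the noise rate is below the fault-tolerance threshold.

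First I would take $\calA_x$ to be the \textsf{BQP} verifier circuit for $x$ acting on the all-zero state, with output a single decision qubit measured in the computational basis, so that $\Tr[(\ketbra{1}{1}\otimes I)\,\calA_x(\ketbra{0^m}{0^m})] \geq 2/3$ if $x\in L$ and $\leq 1/3$ otherwise. The obstacle is that the refrigerator scheme assumes we start from a \emph{known fixed} state (typically $\ket{0^m}$ or something cooled from it), whereas here the input $\rho$ is an unknown random state. I would resolve this by prepending to $\mathsf{FT}(\calA_x)$ a short ``cooling'' segment: a sequence of $O(\log(1/\varepsilon))$ (or a constant number of) noise layers $\calN^{\otimes n}$ interleaved with identity (or trivial) gate layers. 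Since $\calN$ is non-unital, repeatedly applying $\calN$ drives \emph{any} single-qubit state exponentially fast toward the unique fixed point $\rho_* = \calN(\rho_*)$; hence after $O(\log(1/\varepsilon))$ layers the global state is within trace distance $\varepsilon$ of $\rho_*^{\otimes n}$, \emph{regardless of the initial $\rho$}. Applying a fixed single-qubit unitary that maps $\rho_*$ to a state of high overlap with $\ket{0}$ (if $\rho_*$ is not already close to a pure state, we may instead absorb the residual mixedness into the fault-tolerance error budget, or invoke the version of the refrigerator that only needs the noise itself as a cooling resource) then yields an approximate fixed reference state from which $\mathsf{FT}(\calA_x)$ can proceed.

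The full construction is thus $\calC_x \coloneqq \mathsf{FT}(\calA_x) \circ (\text{fixed single-qubit layer}) \circ (\calN^{\otimes n}\circ \mathcal{I})^{\circ c\log(1/\varepsilon)}$, which is a polynomial-size noisy circuit \emph{independent of $\rho$}, and $\calD_{\mathrm{in}}$ is taken to be, say, the Haar-random pure state ensemble (or indeed any distribution — the cooling step makes the dependence on $\rho$ irrelevant up to error $\varepsilon$). By the triangle inequality for trace distance, for \emph{every} $\rho$ in the support of $\calD_{\mathrm{in}}$ we have $\|\calC_x(\rho) - \calA_x(\ketbra{0^m}{0^m})\otimes(\text{junk})\|_1 \leq O(\varepsilon)$, so $\Tr[O\,\calC_x(\rho)]$ separates the two cases by a constant margin (taking $\varepsilon$ a small constant suffices), with probability $1$ — hence certainly with high probability — over $\rho\sim\calD_{\mathrm{in}}$. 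An efficient classical algorithm estimating $\Tr[O\,\calC_x(\rho)]$ to within, say, $1/12$ with high probability over $\rho$ would therefore decide $L$ in \textsf{BPP}, contradicting $\mathsf{BPP}\neq\mathsf{BQP}$. The main technical point to get right is the interface between the cooling segment and the refrigerator compilation: one must check that the small trace-distance error from imperfect cooling, together with the noise during the cooling layers themselves, stays within the fault-tolerance threshold of Ref.~\cite{ben2013quantum}, which holds because both the cooling length and the threshold are constants and the errors compose additively in trace distance.
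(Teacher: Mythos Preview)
Your proposal is correct and follows essentially the same approach as the paper: prepend a cooling stage of noise-only layers that drives any random input state toward the channel's fixed point, then run the fault-tolerant compilation of Ref.~\cite{ben2013quantum} on the resulting (approximately) fixed reference state. The paper is slightly more concrete in that it fixes the non-unital noise to be a probabilistic reset channel (whose fixed point is exactly $\ket{0^n}$), thereby sidestepping your discussion of rotating a possibly mixed $\rho_*$ onto $\ket{0}$; since the theorem only needs \emph{one} hard non-unital noise model, this simplification loses nothing.
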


\begin{proof}
Consider a local non-unital noise $\mathcal{N}$, specifically a probabilistic reset channel $\mathcal{N}$ that, with some probability $p > 0$, keeps the state invariant, and with probability $1-p$, resets the state in the zero computational basis state. 

Assume the existence of a classical algorithm capable of estimating expectation values of observables for noisy quantum circuits under the assumption of randomness over the input states (with high probability over the input state distribution). We demonstrate how such an algorithm can simulate fault-tolerant quantum computation. 

Let $\mathcal{C}$ be any poly-size noiseless quantum circuit. We show how to classically estimate the expectation value of $Z_1$ with arbitrary accuracy $\varepsilon$ (thus solving a BQP-complete problem) using the following procedure:

\begin{enumerate}
    \item Assume an initial input state $\rho_0$ sampled from a distribution (the specific distribution is irrelevant to the argument).
\item Let the system evolve for $L = O(\log(n/\varepsilon))$ time steps without applying gates, allowing the noise to act independently across $n$ qubits. The resulting state is $\mathcal{N}_L^{\otimes n}(\rho_0)$, where 
   \[
   \mathcal{N}_L \coloneqq \underbrace{\mathcal{N} \circ \mathcal{N} \circ \cdots \circ \mathcal{N}}_{L \text{ times}}.
   \]
\item Apply $\Phi_{\mathcal{C}}$, the fault-tolerant implementation of $\mathcal{C}$ as described in Ref.~\cite{ben2013quantum}. This implementation allows estimation of $\Tr(Z_1 C(\ketbra{0^n}{0^n}))$ with accuracy $\varepsilon/2$. 
\end{enumerate}
The classical algorithm is promised to be able to estimate the $Z_1$ expectation value of the state 
\[
\rho' = \Phi_{\mathcal{C}} \circ \mathcal{N}_L^{\otimes n}(\rho_0),
\] 
with precision $\varepsilon/2$ and high probability over the choice of $\rho_0$. We now prove that this estimate also approximates $\Tr(Z_1 C(\ketbra{0^n}{0^n}))$ to within $\varepsilon$. 
We begin with:
\begin{align}
    |\Tr(Z_1 C(\ketbra{0^n}{0^n})) - \Tr(Z_1 \rho')| 
    &\leq |\Tr(Z_1 \Phi_{\mathcal{C}}(\ketbra{0^n}{0^n})) - \Tr(Z_1 \rho')|  + |\Tr(Z_1 C(\ketbra{0^n}{0^n})) - \Tr(Z_1 \Phi_{\mathcal{C}}(\ketbra{0^n}{0^n}))| \\
    &\leq |\Tr(Z_1 \Phi_{\mathcal{C}}(\ketbra{0^n}{0^n})) - \Tr(Z_1 \rho')| + \varepsilon/2.
\end{align}

It suffices to show that $|\Tr(Z_1 \Phi_{\mathcal{C}}(\ketbra{0^n}{0^n})) - \Tr(Z_1 \rho')| \leq \varepsilon/2$ to complete the proof. Using Hölder's inequality, it is enough to bound:
\begin{align}
    \norm{\Phi_{\mathcal{C}} \circ \mathcal{N}_L^{\otimes n}(\rho_0) - \Phi_{\mathcal{C}}(\ketbra{0^n}{0^n})}_1.
\end{align}
By the data-processing inequality~\cite{wilde2013quantum}, we have:
\begin{align}
    \norm{\Phi_{\mathcal{C}} \circ \mathcal{N}_L^{\otimes n}(\rho_0) - \Phi_{\mathcal{C}}(\ketbra{0^n}{0^n})}_1 
    &\leq \norm{\mathcal{N}_L^{\otimes n}(\rho_0) - \ketbra{0^n}{0^n}}_1.
\end{align}
We can write:
\begin{align}
    \mathcal{N}_L^{\otimes n}(\rho_0) = \mathrm{Prob}^{(L)}(\text{All reset}) \ketbra{0^n}{0^n} + (1-\mathrm{Prob}^{(L)}(\text{All reset})) \rho_{\text{rest}},
\end{align}
 where $ \mathrm{Prob}^{(L)}(\text{All reset}) $ is the probability that every qubit is reset at time step $L$, and  $\rho_{\text{rest}}$ is a quantum state representing the case in which not all qubits are reset. Then, because of triangle-inequality, we have:
\begin{align}
    \norm{\mathcal{N}_L^{\otimes n}(\rho_0) - \ketbra{0^n}{0^n}}_1 
    &\leq 2 (1-\mathrm{Prob}^{(L)}(\text{All reset}))=2\mathrm{Prob}^{(L)}(\text{Exists one not reset}).
\end{align}
The probability of at least one qubit not resetting is:
\begin{align}
    \mathrm{Prob}^{(L)}(\text{Exists one not reset}) 
    &\leq n \mathrm{Prob}^{(L)}(\text{1st qubit not reset}) = n (1-p)^L.
\end{align}
By choosing $L = O(\log(n/\varepsilon))$, we ensure:
\begin{align}
    \norm{\mathcal{N}_L^{\otimes n}(\rho_0) - \ketbra{0^n}{0^n}}_1 \leq \varepsilon/2.
\end{align}
This concludes the proof. 
\end{proof}

\section{Additional numerical results}
\label{sec:addnumer}

This section provides additional numerical results that complement those presented in the main text. As discussed in the main text, simulating deep noisy random circuits with non-unital noise (e.g., amplitude damping) is arguably more interesting than simulating circuits in noiseless or depolarizing scenarios, where simply outputting zero is often sufficient~\cite{mcclean2018barren, wang2020noise}. 

For amplitude damping (or any non-unital noise), this strategy fails due to the absence of barren plateaus, established in Ref.~\cite{mele2024noise} for circuits with local 2-design gates. This leads to observable expectation values exhibiting large variances, making the zero-output approach ineffective. Our large-scale numerical simulations, shown in Fig.~\ref{fig:barren} confirm that amplitude damping induces the absence of barren plateaus in our ansatz, which is the same we used  in Figs.~\ref{fig:ampdamp_error} and~\ref{fig:dephasing_error} and which does not satisfy the local 2-design assumptions used in Ref.~\cite{mele2024noise}.  These results highlight the necessity for non-trivial classical algorithms to accurately simulate such noisy circuits, even at high depths.
\begin{figure}[h]
    \centering
    \includegraphics[width=0.5\linewidth]{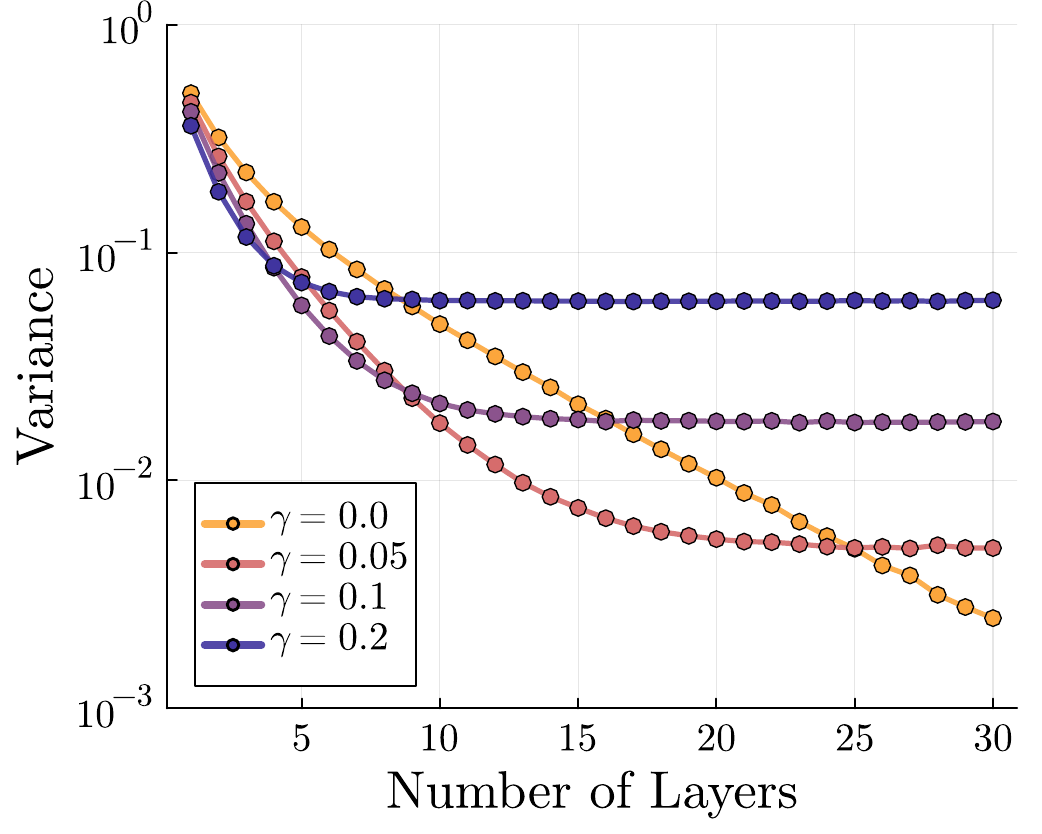}
    \caption{\textbf{Absence of barren plateaus with amplitude damping noise}. The variance, estimated via Theorem\ \ref{thm:num-estimate} with $10^6$ randomly sampled Pauli paths, of a $Z$ expectation value in the middle of a 1D lattice is shown for a system of $60$ qubits. Here, we used the same ansatz as in Figs.~\ref{fig:ampdamp_error} and~\ref{fig:dephasing_error}, specifically a 1D bricklayer topology of RX-RZ-RZZ gates. The parameter $\gamma$ in the plot refers to the amplitude damping rate.}
    \label{fig:barren}
\end{figure}

\end{document}